\numberwithin{equation}{section}
\theoremstyle{plain}
\newtheorem{theorem}{Theorem}
\newtheorem{lemma}{Lemma}[section]
\newtheorem{claim}{Claim}[section]
\newtheorem{assumption}{Assumption}[section]
\theoremstyle{remark}
\newtheorem*{remark}{Remark}
\newtheorem{example}{Example}[section]
\DeclareMathOperator*{\argmax}{argmax}
\DeclareMathOperator*{\diag}{diag}
\DeclareMathOperator*{\var}{Var}
\newcommand\norm[1]{\left\lVert#1\right\rVert}
\newcommand\abs[1]{\left|#1\right|}
\newcommand\bracketS[1]{\left(#1\right)}
\newcommand\bracketM[1]{\left[#1\right]}
\newcommand\bracketL[1]{\left\{#1\right\}}
\newcommand{\expct}{\mathbb{E}}
\newcommand{\prob}{\mathbb{P}}
\newcommand{\cH}{\mathcal{H}}
\newcommand{\ve}{\mathbf{e}}
\newcommand{\vh}{\mathbf{h}}
\newcommand{\vv}{\mathbf{v}}
\newcommand{\vx}{\mathbf{x}}
\newcommand{\vy}{\mathbf{y}}
\newcommand{\vone}{\mathbf{1}}
\newcommand{\bL}{\mathbb{L}}
\newcommand{\bN}{\mathbb{N}}
\newcommand{\bR}{\mathbb{R}}
\newcommand{\bY}{\mathbb{Y}}
\newcommand{\bZ}{\mathbb{Z}}
\newcommand{\Sighat}{\widehat{\Sigma}}
\newcommand{\convas}{\overset{a.s.}{\to}}
\newcommand{\convp}{\overset{p}{\to}}
\newcommand{\convd}{\overset{d}{\to}}
\newcommand{\bigO}{\mathcal{O}}
\begin{document}

\title{Modelling Volatility of Spatio-temporal Integer-valued Data\\
with Network Structure and Asymmetry 
}
\author{Yue Pan \quad and \quad Jiazhu Pan  \\
\textsl {\footnotesize Department of Mathematics and Statistics, University of Strathclyde, Glasgow G1 1XH, UK }
}
\date{}
\maketitle

\begin{abstract}
This paper proposes a spatial threshold GARCH-type model for dynamic spatio-temporal integer-valued data with network structure. The proposed model can simplify the parameterization by using network structure in data, and can capture the asymmetric property in dynamic volatility by adopting a threshold structure. The proposed model assumes the conditional distribution is Poisson distribution. Asymptotic theory of maximum likelihood estimation (MLE) for the spatial model is derived when both sample size and network dimension are large. We obtain asymptotic statistical inferences via investigation of the weak dependence of components of the model and application of limit theorems for weakly dependent random fields. Simulation studies and a real data example are presented to support our methodology.
\end{abstract}

{\small \textbf{Keywords}: Conditional variance/mean, high-dimensional count time series, asymmetry of volatility, spatial threshold GARCH, network structure.\\

{\small\textbf{MSC 2020 Subject Classification}: Primary 62M10, 91B05; Secondary 60G60, 60F05.}


\section{Introduction}

Integer-valued time series can be observed in a wide range of scientific fields, such as the yearly trading volume of houses on real estate market \citep{de2013}, number of transactions of stocks \citep{jones1994}, daily mortality from Covid-19 \citep{pham2020} and daily new cases of Covid-19 \citep{jiang2022}. A first idea to model integer-valued time series is using a simple first-order autoregressive model (AR)
\begin{equation}\label{model_ar}
    X_t = \alpha X_{t-1} + \varepsilon_t,
\end{equation} 
where $0\leq\alpha<1$ is a parameter. However in \eqref{model_ar} $X_t$ is not necessarily an integer given integer-valued $X_{t-1}$ and $\varepsilon_t$, due to the multiplication structure $\alpha X_{t-1}$. To circumvent such problem,  \cite{mckenzie1985} and \cite{al1987} proposed an integer-valued counterpart of AR model (INAR). They replaced the ordinary multiplication $\alpha X_{t-1}$ by the binomial thinning operation $\alpha\circ X_{t-1}$, where $\alpha\circ X|X \sim Bin(X, \alpha)$. This was ground-breaking and led to various extensions of thinning-based linear models including integer-valued moving average model (INMA) \citep{al1988} and INARMA model \citep{mckenzie1988}. An alternative approach to the multiplication problem is to consider the regression of the conditional mean $\lambda_t := \expct(X_t|\cH_{t-1})$, where $H_{t-1}$ is the $\sigma$-algebra generated by historical information up to $t-1$. Based on this idea, integer-valued GARCH-type models (INGARCH) were proposed by \cite{heinen2003}, \cite{ferland2006} and \cite{fokianos2009} with conditional Poisson distribution of $X_t$, i.e.
\begin{equation}\label{model_pgarch}
\begin{aligned}
    &X_t|\cH_{t-1} \sim Poisson(\lambda_t),\\
    &\lambda_t = \omega + \sum_{i=1}^p\alpha_i X_{t-i} + \sum_{j=1}^q\beta_j \lambda_{t-j},   
\end{aligned}
\end{equation}
where  parameters satisfy $\omega>0, \alpha_i \geq 0,   i=1,\cdots, p, \beta_j \geq 0,    j=1,\cdots, q$.  Other variations of INGARCH models with different specifications of conditional distribution include negative binomial INGARCH \citep{zhu2010, xu2012} and generalized Poisson INGARCH \citep{zhu2012}.

 Integer-valued models mentioned above are all limited to one-dimensional time series.  The development of multivariate integer-valued GARCH-type models is still at its early stage. For example, the bivariate INGARCH models \citep{lee2018, cui2018, cui2020}, multivariate INGARCH models \citep{fokianos2020, lee2023} and some counterparts of the continuous-valued network GARCH model \citep{zhou2020} are on fixed low-dimensional time series of counts. For high-dimensional integer-valued time series with increasing dimension, there is the Poisson network autoregressive model (PNAR) by \cite{armillotta2024}.  The PNAR allows for integer-valued time series with increasing network dimension. However, it adopted a ARCH-type structure without considering the autoregressive term on the conditional mean/variance to describe persistence in volatility, and it can not capture asymmetric characteristics of volatility. \cite{tao2024} proposed a grouped PNAR model which has a GARCH structure, but its network dimension is fixed and not suitable to spatio-temporal integer-valued data. In this paper, we propose a Poisson threshold network GARCH model (PTNGARCH) and discuss its asymptotic inferences. The specific contributions are as follows.  
\begin{enumerate}
\item A threshold structure is designed so that it is able to capture asymmetric property of high-dimensional volatility for integer-valued data. The threshold effect can also be tested under such a framework.  
\item 
Our PTNGARCH includes an autoregressive term on the conditional mean/variance so that it provides a parsimonious description of dynamic volatility persistence of high-dimensional integer-valued time series. 
\item 
Asymptotic theory, when both sample size ($T$) and network dimension ($N$) are large, of maximum likelihood estimation for the proposed model is established by the limit theorems for arrays of weakly dependent random fields in \cite{pan2024ltrf}. This enables the proposed model to be suitable for fitting dynamic spatio-temporal integer-valued data (or the so-called ultra-high dimensional integer-valued time series).
\end{enumerate}

 The remainder of this paper is arranged as follows. The PTNGARCH model is introduced and its stationarity over time is discussed under fixed network dimension in section \ref{section_model}. Section \ref{section_inference} presents the MLE for parameters, including the threshold, and the consistency and asymptotic normality for estimates of coefficients under large sample size ($T$) and large network dimension ($N$). Section \ref{section_inference_testing} gives a Wald test which can be applied to testing of the existence of threshold effect (i.e. asymmetric property), GARCH effect and network effect. Section \ref{section_numerical_study} conducts simulation studies to verify the asymptotic properties of the MLE, and applies the proposed model to daily numbers of car accidents that occurred in 41 neighbourhoods in New York City, with interpretation of the results of analysis. All proofs of theoretical results are postponed to the appendix.

\section{Model setting and stationarity over time}\label{section_model}

Consider an non-directed and weightless network with $N$ nodes. Define adjacency matrix $A = (a_{ij})_{1\leq i,j\leq N}$, where $a_{ij} = 1$ if there is a connection between node $i$ and $j$, otherwise $a_{ij} = 0$. In addition, self-connection is not allowed for any node $i$ by letting $a_{ii} = 0$. As an interpretation of the network structure, $A$ is symmetric since $a_{ij} = a_{ji}$, hence for any node $i$, the \textbf{out-degree} $d_i^{(out)} = \sum_{j=1}^N a_{ij}$ is equal to the \textbf{in-degree} $d_i^{(in)} = \sum_{j=1}^N a_{ji}$, and we use $d_i$ to denote both for convenience. To embed a network into statistical models, it is often convenient to use the row normalized adjacency matrix $W$ with its $(i,j)$ element $w_{ij} = \frac{a_{ij}}{d_i}$.

For any node $i\in\{1,\cdots.N\}$ in this network, let $y_{it}$ be an non-negative integer-valued observation at time $t$, and $\mathcal{H}_{t-1}$ denotes the $\sigma$-algebra consisting of all available information up to $t-1$. In our Poisson threshold network GARCH model, for each $i = 1, 2, ..., N$ and $t \in \bZ$, $y_{it}$ is assumed to follow a conditional (on $\cH_{t-1}$) Poisson distribution with $(i,t)$-varying variance (mean) $\lambda_{it}$. A PTNGARCH(1,1) model has the following form:
\begin{equation}\label{model_ptngarch}
\begin{aligned}
    &y_{it}|\cH_{t-1} \sim Poisson(\lambda_{it}),\\
    &\lambda_{it} = \omega + \left(\alpha^{(1)}1_{\{y_{i,t-1} \geq r\}} + \alpha^{(2)}1_{\{y_{i,t-1} < r\}}\right) y_{i,t-1} + \xi\sum_{j=1}^N w_{ij}  y_{j,t-1} + \beta\lambda_{i,t-1},\\
    & i=1,2,\cdots, N.
\end{aligned}
\end{equation} 
The threshold parameter $r$ is a positive integer, and $1_{\{\cdot\}}$ denotes an indicator function. To assure the positiveness of conditional variance, we need to assume positiveness of the base parameter $\omega$, and non-negativeness of all the coefficients $\alpha^{(1)}$, $\alpha^{(2)}$, $\xi$, $\beta$.
\begin{remark}
    Notice that in \eqref{model_ptngarch} we model the dynamics of conditional mean $\lambda_{it}$, which is the reason why the name ``Poisson autoregression" is sometimes used in the literature \citep{fokianos2009, wang2014}; Some authors still use the name ``GARCH" since the mean is equal to the variance under Poisson distribution, and the dynamics of conditional mean are GARCH-like.
\end{remark}

Let $\{N_{it}: i = 1, 2, ..., N, t\in\bZ\}$ be independent Poisson processes with unit intensities. Depending on $\lambda_{it}$, $y_{it}$ can be interpreted as a Poisson distributed random variable $N_{it}(\lambda_{it})$, which is the number of occurrences during the time interval $(0, \lambda_{it}]$, i.e. $\prob(y_{it} = n|\lambda_{it} = \lambda) = \frac{\lambda^n}{n!}e^{-\lambda}.$ We could rewrite \eqref{model_ptngarch} in a vectorized form as follows:
\begin{equation}\label{model_ptngarch2}
    \begin{cases}
         &\bY_t = (N_{1t}(\lambda_{1t}), N_{2t}(\lambda_{2t}), ..., N_{Nt}(\lambda_{Nt}))',\\
         &\Lambda_t = \omega\vone_N + A(\bY_{t-1})\bY_{t-1} + \beta\Lambda_{t-1},
    \end{cases}
\end{equation} where 
\begin{align*}
    &\Lambda_t = (\lambda_{1t}, \lambda_{2t}, ..., \lambda_{Nt})' \in \bR^N,\\
    &\vone_N = (1, 1, ..., 1)' \in \bR^N,\\
    &A(\bY_{t-1}) = \alpha^{(1)} S(\bY_{t-1}) + \alpha^{(2)} (I_N - S(\bY_{t-1})) + \xi W,\\
    &S(\bY_{t-1}) = \diag\left\{1_{\{y_{1,t-1} \geq r\}}, 1_{\{y_{2,t-1} \geq r\}}, ..., 1_{\{y_{N,t-1} \geq r\}}\right\}.
\end{align*}
Note that $\bY_t \in \bN^N$ with dimension $N$ and $\bN=\{0,1,2,\cdots\}$. For a fixed dimension $N$, the stationarity condition of time series $\bY_t$ can be obtained. The proof of the following theorem is given in the appendix.
\begin{theorem}\label{theorem_stationarity} If the parameters satisfy
\begin{equation}\label{assumption_contraction}
\max\left\{\alpha^{(1)}, \alpha^{(2)}, \abs{\alpha^{(1)}r - \alpha^{(2)}(r-1)}\right\} + \xi + \beta < 1,
\end{equation}
there exists a unique strictly stationary process $\{\bY_t: t\in\bZ\}$ that satisfies \eqref{model_ptngarch2} and has finite first order moment. Therefore, each component $y_{it}$ of model \eqref{model_ptngarch} has a unique strictly stationary solution with finite first order moment.
\end{theorem}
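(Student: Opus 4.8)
The plan is to read \eqref{model_ptngarch2} as a stochastic recurrence equation driven by the family $\{N_{it}\}$ of unit-rate Poisson processes, which is i.i.d.\ across $t$, and to exhibit a contraction in an averaged $L^1$ sense that forces a unique stationary solution. I would isolate two ingredients first. The first is the coupling identity for the Poisson process: if a single unit-rate process $N$ is used to generate both $N(\lambda)$ and $N(\tilde\lambda)$, then $\expct\abs{N(\lambda) - N(\tilde\lambda)} = \abs{\lambda - \tilde\lambda}$, since for $\lambda>\tilde\lambda$ the difference is the number of points in $(\tilde\lambda,\lambda]$, which is Poisson with mean $\lambda-\tilde\lambda$. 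The second is a Lipschitz bound, on the non-negative integers, for the scalar threshold map $g(y) := \bracketS{\alpha^{(1)}1_{\{y\geq r\}} + \alpha^{(2)}1_{\{y<r\}}}y$.

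The computation of the Lipschitz constant is what I expect to be the crux, since it is where the quantity $\abs{\alpha^{(1)}r - \alpha^{(2)}(r-1)}$ of \eqref{assumption_contraction} originates. The same-regime cases give factors $\alpha^{(1)}$ or $\alpha^{(2)}$ immediately; the only delicate case is two arguments on opposite sides of the threshold. Writing $y = r + a$ and $\tilde y = (r-1) - b$ with integers $a,b\geq 0$, one gets $g(y) - g(\tilde y) = \bracketS{\alpha^{(1)}r - \alpha^{(2)}(r-1)} + \alpha^{(1)}a + \alpha^{(2)}b$ while $\abs{y-\tilde y} = a+b+1$, so bounding each of the three summands by $L := \max\{\alpha^{(1)}, \alpha^{(2)}, \abs{\alpha^{(1)}r - \alpha^{(2)}(r-1)}\}$ yields $\abs{g(y)-g(\tilde y)} \leq L\abs{y-\tilde y}$ for all non-negative integers. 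This is precisely the step at which the jump of $g$ across the threshold must be absorbed, and it is exactly the third entry of the maximum in \eqref{assumption_contraction}.

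With these facts in hand, I would couple two copies $\Lambda_t,\tilde\Lambda_t$ of the conditional-mean recursion through the same Poisson processes. Subtracting the two versions of the $\Lambda_t$-equation, taking absolute values and expectations, and using the coupling identity together with the tower property (the intensities at time $t-1$ are $\cH_{t-2}$-measurable while $N_{\cdot,t-1}$ is independent of $\cH_{t-2}$) to replace $\expct\abs{y_{j,t-1} - \tilde y_{j,t-1}}$ by $u_{j,t-1} := \expct\abs{\lambda_{j,t-1}-\tilde\lambda_{j,t-1}}$, I obtain $u_{it} \leq (L+\beta)u_{i,t-1} + \xi\sum_{j}w_{ij}u_{j,t-1}$. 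Because $W$ is row-normalized, $\sum_j w_{ij}=1$, so taking $M_t := \max_i u_{it}$ gives $M_t \leq (L+\xi+\beta)M_{t-1}$, a genuine contraction since $\rho := L+\xi+\beta < 1$ under \eqref{assumption_contraction}.

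Existence and uniqueness then follow by the standard backward-iteration (telescoping) argument for stochastic recurrence equations. Starting the recursion from a deterministic initial value $m$ steps in the past along the fixed Poisson family, the contraction shows that the resulting value of $\Lambda_0$ is Cauchy in $L^1$, the gap between $m$ and $m+1$ steps decaying like $\rho^m$ times a uniformly bounded first moment; that moment bound I would obtain separately by taking expectations in the recursion and using $g(y)\leq\max\{\alpha^{(1)},\alpha^{(2)}\}y \leq Ly$, which yields $V_t \leq \omega + \rho V_{t-1}$ for $V_t := \max_i\expct\lambda_{it}$ and hence $\sup_t V_t < \infty$. The $L^1$-limit is independent of the initial value by the same contraction, is a shift-equivariant measurable functional of the i.i.d.\ innovation family and therefore strictly stationary, and has finite first moment via the bound on $V_t$; setting the coordinates $y_{it}=N_{it}(\lambda_{it})$ recovers the full model. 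Uniqueness is immediate: two stationary $L^1$ solutions coupled through the same processes satisfy $M_t \leq \rho^m M_{t-m}\to 0$, so they coincide almost surely, and the component-wise claim for $y_{it}$ follows as each coordinate is read off from the vector solution.
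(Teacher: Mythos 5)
Your proposal is correct, and it rests on exactly the two ingredients that the paper's own proof turns on: the Poisson coupling identity $\expct\abs{N(\lambda)-N(\tilde\lambda)}=\abs{\lambda-\tilde\lambda}$, and the Lipschitz bound $\abs{g(y)-g(\tilde y)}\le L\abs{y-\tilde y}$ on the non-negative integers with $L=\max\{\alpha^{(1)},\alpha^{(2)},\abs{\alpha^{(1)}r-\alpha^{(2)}(r-1)}\}$ (the paper's $\alpha^*$ in Claim A.2, obtained there by a case analysis on difference quotients; your decomposition $y=r+a$, $\tilde y=(r-1)-b$ with a termwise bound is cleaner and gives the same constant). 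The surrounding architecture, however, is genuinely different, and in two respects yours buys something. First, the paper follows the Ferland-type \emph{forward} scheme: it builds iterates $\bY_t^{(n)}$ indexed by the iteration count $n$, proves each one strictly stationary by induction (Claim A.1), establishes $\expct\norm{\bY_t^{(n+1)}-\bY_t^{(n)}}_1\le C\rho^n$, and concludes by Borel--Cantelli; you instead run the recursion \emph{backward} from a deterministic initialization $m$ steps in the past and obtain stationarity from shift-equivariance of the limit functional of the i.i.d.\ driving family, with uniqueness handled by coupling two solutions and letting $M_t\le\rho^m M_{t-m}\to 0$ --- a step the paper's write-up asserts in the theorem but never actually argues, so your last paragraph fills a real gap. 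Second, your contraction metric is the better choice: taking the maximum over nodes of $\expct\abs{\lambda_{it}-\tilde\lambda_{it}}$ lets the network term be controlled by the row sums $\sum_j w_{ij}=1$, whereas the paper contracts in the $\ell_1$ norm over nodes and passes through the spectral radius (via Gershgorin) with the inequality $\expct\norm{A\,d}_1\le\rho(A)\,\expct\norm{d}_1$; that inequality is not valid in general, since the induced $\ell_1$ operator norm is the maximal \emph{column} sum of $A$ and the column sums of the row-normalized $W$ need not be bounded by one (e.g.\ a hub node pointed to by many low-degree nodes). Your max-norm argument sidesteps this and would in fact repair that step. The one detail you should spell out is the passage from $L^1$-Cauchy to a process actually satisfying the equation: since the iterates are integer-valued, Markov's inequality plus summability of $\rho^m$ and Borel--Cantelli give that the $\bY$-iterates are a.s.\ eventually constant in $m$ (exactly as in the paper), so the limit solves the recursion pathwise.
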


\section{Parameter estimation with \texorpdfstring{$T\to\infty$}{} and \texorpdfstring{$N\to\infty$}{}}\label{section_inference}

Denote parameter vector of  model \eqref{model_ptngarch} by $\nu = (\theta', r)'$ with $\theta = (\omega, \alpha^{(1)}, \alpha^{(2)}, \xi, \beta)'$. Let  
$\Theta=\{\theta: \omega>0, \alpha^{(1)}\ge 0, \alpha^{(2)}\ge 0, \xi\ge 0, \beta\ge 0\}$ and $\bZ_{+}=\{0,1,2,\cdots\}$. For a given threshold $r$, the reasonable parameter space for $\theta$ is a sufficiently large compact subset  of $\Theta$, such as $\Theta_{\delta}=\{\theta: \omega\ge \delta, \alpha^{(1)}\ge 0, \alpha^{(2)}\ge 0, \xi\ge 0, \beta\ge 0\}$ with sufficient small positive real number $\delta$.

Let $D_{NT}=\{(i,t): i=1,\cdots,N;t=1,\cdots,T\}$ be the index set. Suppose that the samples $\{y_{it}: (i,t) \in D_{NT}, NT \geq 1\}$ are generated by model \eqref{model_ptngarch} with respect to true parameters $\nu_0 = (\omega_0, \alpha^{(1)}_0, \alpha^{(2)}_0, \xi_0, \beta_0, r_0)'$.

The log-likelihood function (ignoring constants) is
\begin{equation}\label{lik_func_1_ptngarch}
\left\{
    \begin{array}{ll}
         &L_{NT}(\nu) = \frac{1}{NT}\sum_{(i,t)\in D_{NT}} l_{it}(\nu),\\
         &l_{it}(\nu) = y_{it}\log\lambda_{it}(\nu) - \lambda_{it}(\nu)
    \end{array}\right.
\end{equation} 
where $\lambda_{it}(\nu)$ is generated from model \eqref{model_ptngarch} as
\begin{equation}\label{eq_lambda_ptngarch}
\begin{aligned}
    \lambda_{it}(\nu) = &\omega + \alpha^{(1)}1_{\{y_{i,t-1}\geq r\}}y_{i,t-1} + \alpha^{(2)}1_{\{y_{i,t-1}<r\}}y_{i,t-1}\\
    &+ \xi\sum_{j=1}^N w_{ij}  y_{j,t-1} + \beta\lambda_{i,t-1}(\nu).
\end{aligned}
\end{equation} 
In practice, \eqref{lik_func_1_ptngarch} can not be evaluated without knowing the true values of $\lambda_{i0}$ for $i = 1, 2, ..., N$. We need to approximate \eqref{lik_func_1_ptngarch} by \eqref{lik_func_2_ptngarch} below, using specified initial values $\lambda_{i0} = \Tilde{\lambda}_{i0}, i = 1, 2, ..., N$:
\begin{equation}\label{lik_func_2_ptngarch}
\left\{
    \begin{array}{ll}
         &\Tilde{L}_{NT}(\nu) = \frac{1}{NT}\sum_{(i,t)\in D_{NT}} \Tilde{l}_{it}(\nu),\\
         &\Tilde{l}_{it}(\nu) = y_{it}\log\Tilde{\lambda}_{it}(\nu) - \Tilde{\lambda}_{it}(\nu).
    \end{array}\right.
\end{equation} 
Therefore, the maximum likelihood estimate (MLE) of parameter $\nu$ is defined as
\begin{equation}\label{MLE_ptngarch}
    \hat{\nu}_{NT} = \argmax_{\nu \in \Theta\times\bZ_{+}}\Tilde{L}_{NT}(\nu).
\end{equation} 
However, the solution that maximizes the target function $\Tilde{L}_{NT}(\nu)$ can not be directly obtained by solving $\frac{\partial\Tilde{L}_{NT}(\nu)}{\partial\nu} = 0,$ since $r\in\bZ_{+}$ is discrete-valued, therefore the partial derivative of $\Tilde{L}_{NT}(\nu)$ w.r.t. $r$ is invalid. According to \cite{wang2014}, such optimization problem with integer-valued parameter $r$ could be broken up into two steps as follows:
\begin{enumerate}
    \item Find $$\hat{\theta}_{NT}^{(r)} = \argmax_{\theta\in\Theta}\Tilde{L}_{NT}(\theta, r)$$ for each $r$ in a predetermined range $[\underline{r}, \Bar{r}] \subset \bZ_{+}$;
    \item Find $$\hat{r}_{NT} = \argmax_{r \in [\underline{r}, \Bar{r}]}\Tilde{L}_{NT}(\hat{\theta}_{NT}^{(r)}, r).$$
\end{enumerate} Then $\hat{\nu}_{NT} = \left(\hat{\theta}_{NT}^{(\hat{r}_{NT})'}, \hat{r}_{NT}\right)'$ would be the maximizer of $\Tilde{L}_{NT}(\nu)$.

Assumption \ref{assumption_theta_ptngarch} below is a regular condition on the parameter space. Assumptions \ref{assumption_y_ptngarch} and \ref{assumption_network_ptngarch} are necessary for obtaining $\eta$-weak dependence of $\{l_{it}(\nu): (i,t)\in D_{NT}, NT \geq 1\}$ and their derivatives. For the definition of $\eta$-weak dependence for random fields, see \cite{pan2024ltrf}.   Then the consistency of MLE in Theorem \ref{theorem_C_ptngarch} can be proved based on the law of large numbers (LLN) of $\eta$-weakly dependent arrays of random fields in \cite{pan2024ltrf}.

\begin{assumption}\label{assumption_theta_ptngarch}
For a given threshold $r\in \bZ_{+}$,
\begin{itemize}
        \item [(a)] The parameter space for $\theta$ is a compact subset of $\Theta$ and includes the true parameter $\theta_0$ as its interior point; 
        \item [(b)] Condition \eqref{assumption_contraction} holds.
\end{itemize}        
\end{assumption}

\begin{assumption}\label{assumption_y_ptngarch}
    \begin{itemize}
        \item [(a)] $\sup_{NT\geq 1}\sup_{(i,t)\in D_{NT}}\expct|y_{it}|^{2p} < \infty$ for some $p > 1$;
        \item [(b)] The array of random fields $\left\{y_{it}: (i,t)\in D_{NT}, NT\geq 1\right\}$ is $\eta$-weakly dependent with coefficient $\Bar{\eta}_y(s) := \bigO(s^{-\mu_y})$ for some $\mu_y > 2\frac{2p-1}{p-1}$.
    \end{itemize}
\end{assumption}
\begin{assumption}\label{assumption_network_ptngarch}
    For any $i = 1, 2, ..., N$ and $j = 1, 2, ..., N$, there exist constants $C > 0$ and $b > \mu_y$ such that $w_{ij} \leq C|j-i|^{-b}.$ That is, the power of connection between two nodes $i$ and $j$ decays as the distance $|i-j|$ grows.
\end{assumption}

\begin{theorem}\label{theorem_C_ptngarch}
If Assumptions \ref{assumption_theta_ptngarch}, \ref{assumption_y_ptngarch} and \ref{assumption_network_ptngarch} hold, the MLE defined by \eqref{MLE_ptngarch} is consistent, i.e.
$$\hat{\nu}_{NT} \convp \nu_0$$ as $T\to\infty$ and $N\to\infty$.
\end{theorem}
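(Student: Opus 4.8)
The plan is to follow the classical consistency argument for M-estimators, adapted to three features of the present setting: the joint limit $T\to\infty,\ N\to\infty$ over a random field; the replacement of the unobservable criterion $L_{NT}$ by its initialized version $\tilde L_{NT}$; and the discreteness of the threshold $r$. Since $r$ ranges over the finite grid $[\underline r,\bar r]\cap\bZ_{+}$, the two-step definition of $\hat\nu_{NT}$ lets me split the problem: for each fixed $r$ I treat $\hat\theta_{NT}^{(r)}$ as an ordinary continuous-parameter M-estimator, and the selection of $\hat r_{NT}$ is a comparison among finitely many values. It therefore suffices to establish (i) a uniform law of large numbers $\sup_{\theta\in\Theta}\abs{\tilde L_{NT}(\theta,r)-L(\theta,r)}\convp 0$ for each $r$, where $L(\theta,r):=\lim\expct[l_{it}(\theta,r)]$ is the limiting criterion, and (ii) that $L(\theta,r)$ is uniquely maximized over $\Theta\times\big([\underline r,\bar r]\cap\bZ_{+}\big)$ at $(\theta_0,r_0)$.

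First I would dispose of the initialization error. Iterating \eqref{eq_lambda_ptngarch} gives $\abs{\tilde\lambda_{it}(\nu)-\lambda_{it}(\nu)}\le\beta^{t}\abs{\tilde\lambda_{i0}-\lambda_{i0}}$, and Assumption \ref{assumption_theta_ptngarch}(b) forces $\beta<1$, so this discrepancy decays geometrically in $t$, uniformly in $\nu$. A mean-value bound on $x\mapsto y\log x-x$ (valid since $\lambda_{it}\ge\omega\ge\delta>0$) combined with the moment control in Assumption \ref{assumption_y_ptngarch}(a) then yields $\sup_{\nu}\abs{\tilde L_{NT}(\nu)-L_{NT}(\nu)}\convp 0$, allowing the tilde to be dropped in what follows.

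The core step is the uniform LLN. Unrolling the recursion as $\lambda_{it}(\nu)=\sum_{k\ge0}\beta^{k}\big[\omega+\big(\alpha^{(1)}1_{\{y_{i,t-1-k}\ge r\}}+\alpha^{(2)}1_{\{y_{i,t-1-k}<r\}}\big)y_{i,t-1-k}+\xi\sum_{j}w_{ij}y_{j,t-1-k}\big]$ exhibits $\lambda_{it}$ as an infinite-memory, Lipschitz-in-the-past functional of the field $\{y_{js}\}$. I would show this functional inherits $\eta$-weak dependence from $\{y_{it}\}$: the factor $\beta^{k}$ supplies geometric decay in the temporal direction, while Assumption \ref{assumption_network_ptngarch} supplies the polynomial spatial decay of the weights $w_{ij}$ that controls the cross-sectional coupling, so that the resulting weak-dependence coefficient of $\{l_{it}(\nu)\}$ and of its $\theta$-derivatives still obeys a summable polynomial rate compatible with Assumption \ref{assumption_y_ptngarch}(b). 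With the moment bound of Assumption \ref{assumption_y_ptngarch}(a), I can then invoke the LLN for $\eta$-weakly dependent arrays of random fields from \cite{pan2024ltrf} to get the pointwise limit $L_{NT}(\theta,r)\convp L(\theta,r)$. Promotion to uniform convergence over the compact set $\Theta$ follows from a standard equicontinuity argument: a uniform integrable envelope for $\partial l_{it}/\partial\theta$, obtained from the same Lipschitz expansion and moment control, gives stochastic equicontinuity, which together with compactness upgrades the pointwise LLN to the uniform statement.

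Finally, identification. Conditionally on $\cH_{t-1}$ the map $\lambda\mapsto\expct[y_{it}\log\lambda-\lambda\mid\cH_{t-1}]$ is maximized at $\lambda=\lambda_{it}(\nu_0)$, a Kullback--Leibler inequality for the Poisson likelihood; hence $L(\theta,r)\le L(\theta_0,r_0)$ with equality only if $\lambda_{it}(\theta,r)=\lambda_{it}(\theta_0,r_0)$ almost surely. Matching the distinct components of \eqref{eq_lambda_ptngarch} --- the intercept, the two own-lag threshold terms, the network term, and the autoregressive term --- then forces $\theta=\theta_0$, while any integer $r\ne r_0$ changes the indicator $1_{\{y_{i,t-1}\ge r\}}$ on an event of positive probability and (given a genuine threshold effect, $\alpha^{(1)}_0\ne\alpha^{(2)}_0$) strictly lowers the criterion, giving $r=r_0$. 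Consistency then follows in the usual way: $\prob(\hat r_{NT}=r_0)\to1$, and on that event the uniform LLN together with the unique maximizer yields $\hat\theta_{NT}^{(r_0)}\convp\theta_0$ via the argmax theorem. The step I expect to be the main obstacle is the third one --- propagating $\eta$-weak dependence through the infinite-memory, threshold-nonlinear map defining $\lambda_{it}$ and securing the uniform-in-$\theta$ envelopes, since the indicator $1_{\{y_{i,t-1}\ge r\}}$ is non-smooth and the temporal and spatial decay rates must be balanced simultaneously.
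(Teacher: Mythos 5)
Your proposal is correct and follows essentially the same route as the paper's proof: geometric negligibility of the initialization error, propagation of $\eta$-weak dependence through the MA($\infty$) representation of $\lambda_{it}(\nu)$ so that the LLN of \cite{pan2024ltrf} applies to $\{l_{it}(\nu)\}$, Poisson Kullback--Leibler identification, and a Wald-type argmax argument (the paper finishes with a pointwise LLN plus an inequality chain rather than your uniform LLN with stochastic equicontinuity, but the decomposition and key lemmas are the same). Incidentally, your explicit remark that identifying $r_0$ requires a genuine threshold effect $\alpha^{(1)}_0 \neq \alpha^{(2)}_0$ flags a condition that the paper's own identification step (Claim \ref{clm_idtfy_ptngarch}) passes over in silence.
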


Since $\hat{r}_{NT}$ is an integer-valued consistent estimate of $r_0$, $\hat{r}_{NT}$ will eventually be equal to $r_0$ when the sample size $NT$ becomes sufficiently large. Therefore, $\hat{\nu}_{NT} = \left(\hat{\theta}_{NT}^{(\hat{r}_{NT})'}, \hat{r}_{NT}\right)'$ is asymptotically equal to $\left(\hat{\theta}_{NT}^{(r_0)'}, r_0\right)'$. In this way, the problem of investigating the asymptotic distribution of $\hat{\nu}_{NT}$ degenerates to investigating the asymptotic distribution of $\hat{\theta}_{NT}^{(r_0)}$.

\begin{theorem}\label{theorem_AN_ptngarch}
Assume that all conditions in Theorem \ref{theorem_C_ptngarch} are satisfied with $\mu_y > \frac{6p-3}{p-1}\vee\frac{(4p-3)(2p-1)}{2(p-1)^2}$ in Assumption \ref{assumption_y_ptngarch}(b) instead. If the smallest eigenvalue $\lambda_{min}(\Sigma_{NT})$ of $$\Sigma_{NT} := \frac{1}{NT}\sum_{(i,t)\in D_{NT}} \expct \left[\frac{1}{\lambda_{it}(\nu_0)}\frac{\partial\lambda_{it}(\nu_0)}{\partial\theta}\frac{\partial\lambda_{it}(\nu_0)}{\partial\theta'}\right]$$ satisfies that
\begin{equation}\label{condition_sigma_ptngarch}
    \inf_{NT\geq 1} \lambda_{min}(\Sigma_{NT}) > 0,
\end{equation} then $\hat{\theta}_{NT}^{(r_0)}$ is asymptotically normal, i.e. $$\sqrt{NT}\Sigma_{NT}^{1/2}(\hat{\theta}_{NT}^{(r_0)} - \theta_0) \convd N(0, I_5)$$ as $T\to\infty$, $N\to\infty$ and $N = o(T)$.
\end{theorem}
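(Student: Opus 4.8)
The plan is to follow the classical route to maximum-likelihood asymptotic normality, adapted to the double-index ($N,T$) weakly dependent setting. Since $\hat r_{NT}$ equals $r_0$ eventually, I treat the threshold as fixed at $r_0$, so $\lambda_{it}(\theta,r_0)$ is a smooth function of $\theta$ and (with probability tending to one, by consistency and Assumption \ref{assumption_theta_ptngarch}(a)) the estimator solves an interior first-order condition. Writing the per-observation score and Hessian of \eqref{lik_func_1_ptngarch},
\begin{equation*}
\frac{\partial l_{it}}{\partial\theta} = \frac{y_{it}-\lambda_{it}}{\lambda_{it}}\frac{\partial\lambda_{it}}{\partial\theta}, \qquad \frac{\partial^2 l_{it}}{\partial\theta\partial\theta'} = -\frac{y_{it}}{\lambda_{it}^2}\frac{\partial\lambda_{it}}{\partial\theta}\frac{\partial\lambda_{it}}{\partial\theta'} + \frac{y_{it}-\lambda_{it}}{\lambda_{it}}\frac{\partial^2\lambda_{it}}{\partial\theta\partial\theta'},
\end{equation*}
two structural facts drive the argument: by the conditional Poisson assumption $\expct[y_{it}-\lambda_{it}(\nu_0)\mid\cH_{t-1}]=0$, so each score term has conditional mean zero, and $\var(y_{it}\mid\cH_{t-1})=\lambda_{it}$, so the conditional variance of the score term is exactly $\lambda_{it}^{-1}(\partial_\theta\lambda_{it})(\partial_\theta\lambda_{it})'$. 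Taking expectations, the second Hessian term vanishes and $-\expct[\partial^2_\theta l_{it}(\nu_0)]$ equals the summand of $\Sigma_{NT}$.

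A mean-value expansion of the first-order condition $\partial_\theta\tilde L_{NT}(\hat\theta_{NT}^{(r_0)},r_0)=0$ about $\theta_0$ gives
\begin{equation*}
\sqrt{NT}\,(\hat\theta_{NT}^{(r_0)}-\theta_0) = -\Big[\partial^2_\theta\tilde L_{NT}(\bar\theta)\Big]^{-1}\sqrt{NT}\,\partial_\theta\tilde L_{NT}(\theta_0),
\end{equation*}
with $\bar\theta$ between $\hat\theta$ and $\theta_0$, hence $\bar\theta\convp\theta_0$ by Theorem \ref{theorem_C_ptngarch}. It then suffices to establish three facts. First, a central limit theorem for the score, $\sqrt{NT}\,\Sigma_{NT}^{-1/2}\partial_\theta L_{NT}(\theta_0)\convd N(0,I_5)$. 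Second, a uniform law of large numbers for the Hessian giving $\partial^2_\theta\tilde L_{NT}(\bar\theta)+\Sigma_{NT}\convp 0$, so that $-[\partial^2_\theta\tilde L_{NT}(\bar\theta)]^{-1}$ may be replaced by $\Sigma_{NT}^{-1}$; the eigenvalue bound \eqref{condition_sigma_ptngarch} keeps the inverse well-behaved. Combining via Slutsky yields $\sqrt{NT}\,\Sigma_{NT}^{1/2}(\hat\theta_{NT}^{(r_0)}-\theta_0)=\Sigma_{NT}^{-1/2}\sqrt{NT}\,\partial_\theta L_{NT}(\theta_0)+o_p(1)\convd N(0,I_5)$.

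The third fact is controlling the initialization error, and this is where $N=o(T)$ enters. Because $\beta<1$, iterating \eqref{eq_lambda_ptngarch} shows $|\lambda_{it}(\nu)-\tilde\lambda_{it}(\nu)|\leq C\beta^{t}|\lambda_{i0}-\tilde\lambda_{i0}|$, with the same geometric decay passing to the $\theta$-derivatives. The discrepancy between the two averaged scores is, after summing the geometric series in $t$, of order $\bigO(1)$ per node; summed over the $N$ nodes and divided by $NT$ it is $O_p(1/T)$, so multiplied by the $\sqrt{NT}$ normalization it is $O_p(\sqrt{N/T})=o_p(1)$ precisely under $N=o(T)$. The same bound renders the Hessian discrepancy negligible, so $\tilde L_{NT}$ may be replaced by $L_{NT}$ throughout.

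I expect the main obstacle to be the score CLT. One must first show that the derivative fields $\partial_\theta\lambda_{it}$, which obey the same contractive recursion as $\lambda_{it}$ and are coupled spatially through $W$, inherit $\eta$-weak dependence and suitable moments from $\{y_{it}\}$; Assumption \ref{assumption_network_ptngarch} controls the spatial coupling so the decay rate is preserved, while the geometric factor $\beta$ controls the temporal direction. Given this, the normalized score array $\{g_{it}=\lambda_{it}^{-1}(y_{it}-\lambda_{it})\partial_\theta\lambda_{it}\}$ is a mean-zero, $\eta$-weakly dependent array of random fields, and the strengthened rate $\mu_y>\frac{6p-3}{p-1}\vee\frac{(4p-3)(2p-1)}{2(p-1)^2}$ is exactly what meets the moment-versus-dependence trade-off in the random-field CLT of \cite{pan2024ltrf}. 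Verifying its hypotheses---in particular that the cross-covariances across distinct $(i,t)$ are summable, so the limiting covariance is the one encoded in $\Sigma_{NT}$ and no extraneous long-range terms survive---is the technical core; the Hessian LLN then follows as a routine application of the companion law of large numbers once the same weak-dependence bookkeeping is in place.
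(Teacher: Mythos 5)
Your proposal follows essentially the same route as the paper's own proof: the same mean-value expansion of the first-order condition at $r=r_0$, the same information-matrix identity ($-\expct[\partial^2 l_{it}(\theta_0)/\partial\theta\partial\theta']$ equals the summand of $\Sigma_{NT}$), the score CLT and Hessian LLN obtained by verifying $\eta$-weak dependence and moment bounds so as to invoke the limit theorems of \cite{pan2024ltrf}, and the same $\bigO_p(\sqrt{N/T})$ accounting of the initialization error, which is exactly where $N=o(T)$ enters. The only slight imprecision is that the cross-covariances of score terms across distinct $(i,t)$ are not merely summable but exactly zero, by the conditional-Poisson martingale structure you correctly identify in your first paragraph, which is what makes $\var\bigl(\sqrt{NT}\,\partial L_{NT}(\theta_0)/\partial\theta\bigr)=\Sigma_{NT}$ hold exactly rather than asymptotically.
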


\begin{remark}
    In the proof of Theorem \ref{proposition_wald_test_ptngarch} below, we will show that, $\Sigma_{NT}$ could be consistently estimated by 
    \begin{equation}\label{hat_sigma_nt}
    \Sighat_{NT} = \frac{1}{NT}\sum_{(i,t)\in D_{NT}}\left[\frac{1}{\Tilde{\lambda}_{it}(\hat{\nu}_{NT})}\frac{\partial\Tilde{\lambda}_{it}(\hat{\nu}_{NT})}{\partial\theta}\frac{\partial\Tilde{\lambda}_{it}(\hat{\nu}_{NT})}{\partial\theta'}\right]
    \end{equation} in practice.
In the simulation studies in Section \ref{section_numerical_study}, the elements of $\Sighat_{NT}$ are used to construct confidence intervals and to evaluate coverage probabilities.     
\end{remark}

\section{Hypothesis testing with \texorpdfstring{$T\to\infty$}{} and \texorpdfstring{$N\to\infty$}{}}\label{section_inference_testing}

Based on Theorem \ref{theorem_C_ptngarch} and Theorem \ref{theorem_AN_ptngarch}, for sufficiently large sample region such that $\hat{r}_{NT} = r_0$, we are able to design a Wald test for the null hypothesis
\begin{equation}\label{null_hypo_wald_ptngarch}
    H_0: \Gamma\theta_0 = \eta,
\end{equation} 
where $\Gamma$ is an $s\times 5$ matrix with rank $s$ ( $1\le s\le 5$) and $\eta$ is an $s$-dimensional vector. For example, to test the existence of threshold effect, simply let $\Gamma = (0, 1, -1, 0, 0)$ and $\eta = 0$, and the null hypothesis \eqref{null_hypo_wald_ptngarch} becomes $$H_0: \alpha^{(1)}_0 = \alpha^{(2)}_0;$$
To see if the autoregressive term is necessary (i.e. GARCH effect), take $\Gamma = (0, 0, 0, 0, 1)$ and $\eta = 0$, and the  hypothesis becomes $$H_0: \beta_0 = 0\quad v.s. \quad H_1: \beta_0 > 0;$$ 
To test the network effect, just let $\Gamma = (0, 0, 0, 1, 0)$ and $\eta = 0$, and the question becomes testing the hypothesis $$H_0: \xi_0 = 0 \quad v.s. \quad H_1: \xi_0 > 0.$$

Corresponding to the asymptotic normality of $\hat{\theta}_{NT}^{(r_0)}$ in Theorem \ref{theorem_AN_ptngarch}, we define a Wald test statistic as follows
\begin{equation}\label{wald_statistic_ptngarch}
    W_{NT} := (\Gamma\hat{\theta}_{NT}^{(r_0)} - \eta)'\left\{\frac{1}{NT}\Gamma\Sighat_{NT}^{-1}\Gamma'\right\}^{-1}(\Gamma\hat{\theta}_{NT}^{(r_0)} - \eta),
\end{equation} 
where $\Sighat_{NT}$ is defined as \eqref{hat_sigma_nt}.
The following Theorem \ref{proposition_wald_test_ptngarch} shows that $W_{NT}$ has an asymptotic $\chi^2$-distribution with degree of freedom $s$.

\begin{theorem}\label{proposition_wald_test_ptngarch}
Under the same assumptions as in Theorem \ref{theorem_AN_ptngarch},  the Wald test statistic defined in \eqref{wald_statistic_ptngarch} asymptotically follows a $\chi^2$ distribution with degree of freedom $s$, i.e.   $$W_{NT}\convd\chi^2_s,$$
as $T\to\infty$, $N\to\infty$ and $N = o(T)$.
\end{theorem}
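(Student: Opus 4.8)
The plan is to reduce the statement to the asymptotic normality already granted by Theorem \ref{theorem_AN_ptngarch} together with a consistency result for the variance estimator $\Sighat_{NT}$. Throughout I would work on the event $\{\hat r_{NT}=r_0\}$, which by Theorem \ref{theorem_C_ptngarch} has probability tending to one, so that the estimator appearing in $W_{NT}$ is genuinely $\hat\theta_{NT}^{(r_0)}$. Under $H_0$ one has $\Gamma\theta_0=\eta$, hence $\Gamma\hat\theta_{NT}^{(r_0)}-\eta=\Gamma(\hat\theta_{NT}^{(r_0)}-\theta_0)$; writing $Z_{NT}:=\sqrt{NT}\,\Sigma_{NT}^{1/2}(\hat\theta_{NT}^{(r_0)}-\theta_0)$, Theorem \ref{theorem_AN_ptngarch} gives $Z_{NT}\convd N(0,I_5)$. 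Substituting $\sqrt{NT}(\hat\theta_{NT}^{(r_0)}-\theta_0)=\Sigma_{NT}^{-1/2}Z_{NT}$ into \eqref{wald_statistic_ptngarch}, the factor $1/(NT)$ inside the middle matrix cancels against the two factors $1/\sqrt{NT}$ coming from either side, leaving
$$W_{NT}=Z_{NT}'\,\Sigma_{NT}^{-1/2}\Gamma'\bracketL{\Gamma\Sighat_{NT}^{-1}\Gamma'}^{-1}\Gamma\,\Sigma_{NT}^{-1/2}Z_{NT}.$$

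The key step is to show that $\Sighat_{NT}$ consistently estimates $\Sigma_{NT}$, i.e. $\Sighat_{NT}-\Sigma_{NT}\convp 0$ entrywise, reusing the machinery behind Theorems \ref{theorem_C_ptngarch} and \ref{theorem_AN_ptngarch}. I would split the error into three pieces: first, that replacing the recursively defined $\lambda_{it}$ by its initialised version $\Tilde\lambda_{it}$ contributes a term decaying geometrically in $t$ (through the factor $\beta<1$ in \eqref{assumption_contraction}) and hence negligible after averaging; second, that evaluating the summand at the consistent estimator $\hat\nu_{NT}$ rather than at $\nu_0$ produces an $o_p(1)$ error, via a mean-value expansion together with the moment bound of Assumption \ref{assumption_y_ptngarch}(a) and the smoothness of $\theta\mapsto\lambda_{it}(\nu)$ on the compact parameter space; and third, that the array $\{\lambda_{it}(\nu_0)^{-1}\partial_\theta\lambda_{it}(\nu_0)\,\partial_{\theta'}\lambda_{it}(\nu_0)\}$ is itself $\eta$-weakly dependent with adequate moments, so that the law of large numbers for weakly dependent random fields of \cite{pan2024ltrf} forces its sample average to concentrate on $\Sigma_{NT}$. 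Combining these with the uniform lower bound \eqref{condition_sigma_ptngarch} on $\lambda_{min}(\Sigma_{NT})$ yields $\Sighat_{NT}^{-1}-\Sigma_{NT}^{-1}\convp 0$, and therefore $\bracketL{\Gamma\Sighat_{NT}^{-1}\Gamma'}^{-1}-\bracketL{\Gamma\Sigma_{NT}^{-1}\Gamma'}^{-1}\convp 0$.

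With the estimator replaced, a Slutsky-type argument (the replacement error is quadratic in the tight sequence $Z_{NT}$ but multiplied by an $o_p(1)$ matrix, hence $o_p(1)$) reduces $W_{NT}$ to $Z_{NT}'P_{NT}Z_{NT}+o_p(1)$, where $P_{NT}:=\Sigma_{NT}^{-1/2}\Gamma'\bracketL{\Gamma\Sigma_{NT}^{-1}\Gamma'}^{-1}\Gamma\Sigma_{NT}^{-1/2}$. A direct computation shows $P_{NT}$ is symmetric and idempotent with $\mathrm{rank}(P_{NT})=\mathrm{rank}(\Gamma)=s$. To conclude that the quadratic form converges to $\chi^2_s$ in spite of $P_{NT}$ depending on $(N,T)$ and not necessarily converging, I would apply Cram\'er--Wold to $U_{NT}:=\bracketL{\Gamma\Sigma_{NT}^{-1}\Gamma'}^{-1/2}\Gamma\Sigma_{NT}^{-1/2}Z_{NT}$: for any fixed unit vector $c$ one has $c'U_{NT}=a_{NT}'Z_{NT}$ with $\norm{a_{NT}}=1$, and a subsequence argument on the compact unit sphere together with $Z_{NT}\convd N(0,I_5)$ gives $a_{NT}'Z_{NT}\convd N(0,1)$; hence $U_{NT}\convd N(0,I_s)$, and $W_{NT}=U_{NT}'U_{NT}+o_p(1)\convd\chi^2_s$ by the continuous mapping theorem.

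I expect the consistency of $\Sighat_{NT}$ — specifically the joint control of the initialisation error, the plug-in error at $\hat\nu_{NT}$, and the weak-dependence law of large numbers for the derivative array — to be the main obstacle, since it is the only ingredient that is not purely algebraic and must re-invoke the weak-dependence and moment apparatus of the earlier theorems. The reduction of the quadratic form to $\chi^2_s$ and the projection computation are routine once the eigenvalue bound \eqref{condition_sigma_ptngarch} is in force.
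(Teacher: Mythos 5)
Your proposal is correct and follows essentially the same route as the paper: both reduce the theorem to the consistency $\Sighat_{NT}-\Sigma_{NT}\convp 0$, proved by the identical three-way split into the initialisation error ($\Tilde{\lambda}_{it}$ versus $\lambda_{it}$), the plug-in error ($\hat{\theta}_{NT}$ versus $\theta_0$), and the law of large numbers of \cite{pan2024ltrf} applied to the weakly dependent array $\bracketL{\frac{1}{\lambda_{it}(\theta_0)}\frac{\partial\lambda_{it}(\theta_0)}{\partial\theta}\frac{\partial\lambda_{it}(\theta_0)}{\partial\theta'}}$, after which Theorem \ref{theorem_AN_ptngarch} is invoked. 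The only difference is that you also spell out the final quadratic-form step (idempotency of the projection matrix and a Cram\'er--Wold/subsequence argument to handle the $(N,T)$-dependence of $\Sigma_{NT}$), which the paper leaves implicit by declaring the variance consistency ``sufficient to show.''
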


\section{Simulation studies and empirical data analysis}\label{section_numerical_study}

\subsection{Simulation studies}

We intend to use four different mechanisms of simulating the network structure in model \eqref{model_ptngarch}. The network structure in Example \ref{example_band} satisfies Assumption \ref{assumption_network_ptngarch}. Simulation mechanisms in Examples \ref{example_random} -- \ref{example_block} are for testing the robustness of our estimation, against network structures that may violate Assumption \ref{assumption_network_ptngarch}.

\begin{example}\label{example_band}
\textbf{(D-neighbourhood network)} For each node $i\in\left\{1, 2, ..., N\right\}$, it is connected to node $j$ only if $j$ is inside $i$'s $D$-neighbourhood. That is, in the adjacency matrix, $a_{ij} = 1$ if $0 < |i-j| \leq D$ and $a_{ij} = 0$ otherwise. Figure \ref{fig:visualized_network}(a) is a visualization of such a network with $N = 100$ and $D = 10$.
\end{example}

\begin{example}\label{example_random}
\textbf{(Random network)} For each node $i\in\left\{1, 2, ..., N\right\}$, we generate $D_i$ from uniform distribution $U(0, 5)$, and then draw $[D_i]$ samples randomly from $\left\{1, 2, ..., N\right\}$ to form a set $S_i$ ($[x]$ denotes the integer part of $x$). $A = (a_{ij})$ could be generated by letting $a_{ij} = 1$ if $j\in S_i$ and $a_{ij} = 0$ otherwise. In a network simulated with such mechanism, as it is indicated in Figure \ref{fig:visualized_network}(b), there is no significantly influential node (i.e. node with extremely large in-degree).
\end{example}

\begin{example}\label{example_pwl}
\textbf{(Network with power-law)} According to \cite{clauset2009}, for each node $i$ in such a network, $D_i$ is generated the same way as in Example \ref{example_random}. Instead of uniformly selecting $[D_i]$ samples from $\left\{1, 2, ..., N\right\}$, these samples are collected w.r.t. probability $p_i = s_i/\sum_{i=1}^N s_i$ where $s_i$ is generated from a discrete power-law distribution $\prob\left\{s_i = x\right\} \propto x^{-a}$ with scaling parameter $a = 2.5$. As shown in Figure \ref{fig:visualized_network}(c), a few nodes have much larger in-degrees while most of them have less than 2. Compared to Example \ref{example_random}, network structure with power-law distribution exhibits larger gaps between the influences of different nodes. This type of network is suitable for modeling social media such as Twitter and Instagram, where celebrities have huge influence while the ordinary majority has little.
\end{example}

\begin{example}\label{example_block}
\textbf{(Network with K-blocks)} As it was proposed in \cite{nowicki2001}, in a network with stochastic block structure, all nodes are divided into blocks and nodes from the same block are more likely to be connected comparing to those from different blocks. To simulate such structure, these $N$ nodes are randomly divided into $K$ groups by assigning labels $\left\{1, 2, ..., K\right\}$ to every nodes with equal probability. For any two nodes $i$ and $j$ from the same group, let $\prob(a_{ij} = 1) = 0.5$ while for those two from different groups, $\prob(a_{ij} = 1) = 0.001/N$. Hence, it is very unlikely for nodes to be connected across groups. Our simulated network successfully mimics this characteristic as Figure \ref{fig:visualized_network}(d) shows clear boundaries between groups. Block network also has its advantage in practical perspective. For instance, the price of one stock is highly relevant to those in the same industry sector.
\end{example}

\begin{figure}[htbp]
\centering
    \subfigure[Example \ref{example_band} (D = 10)]{
    \includegraphics[width=0.35\linewidth]{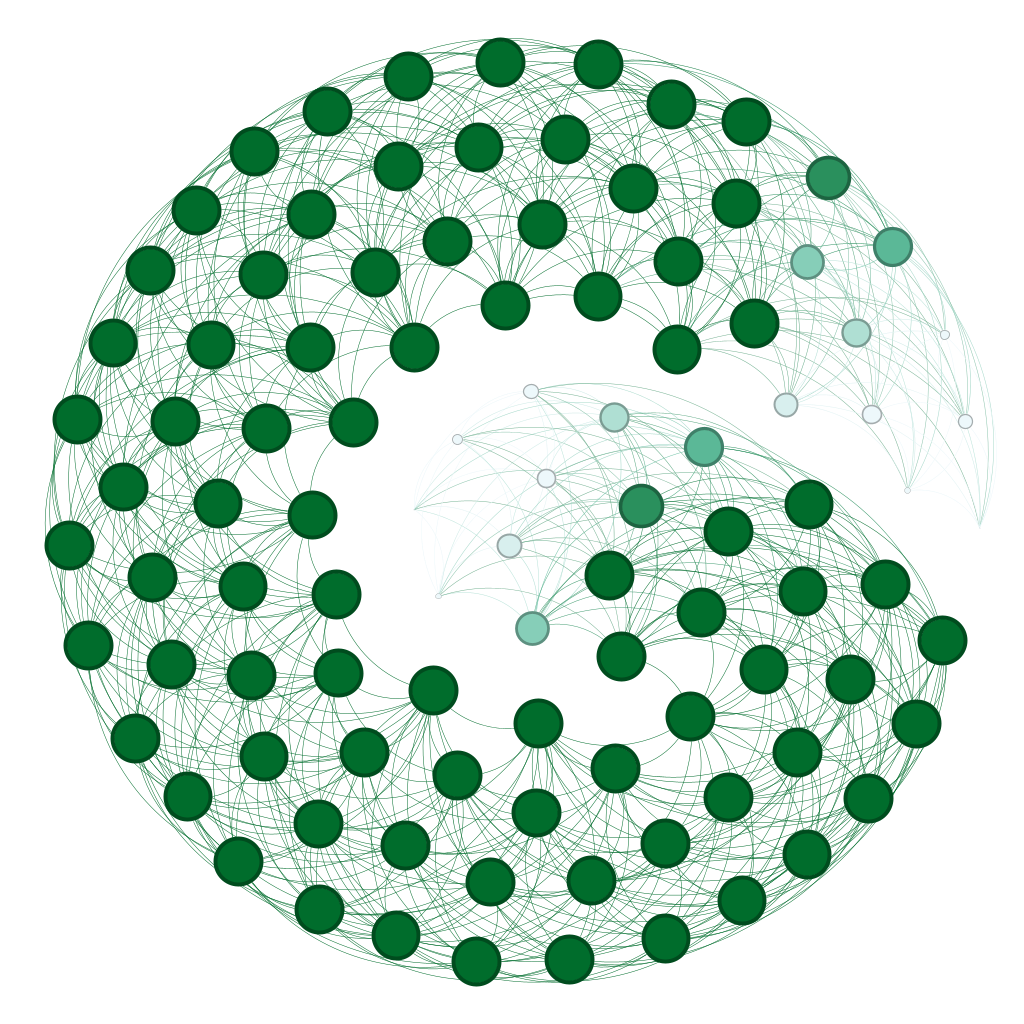}
    }
    \subfigure[Example \ref{example_random}]{
    \includegraphics[width=0.35\linewidth]{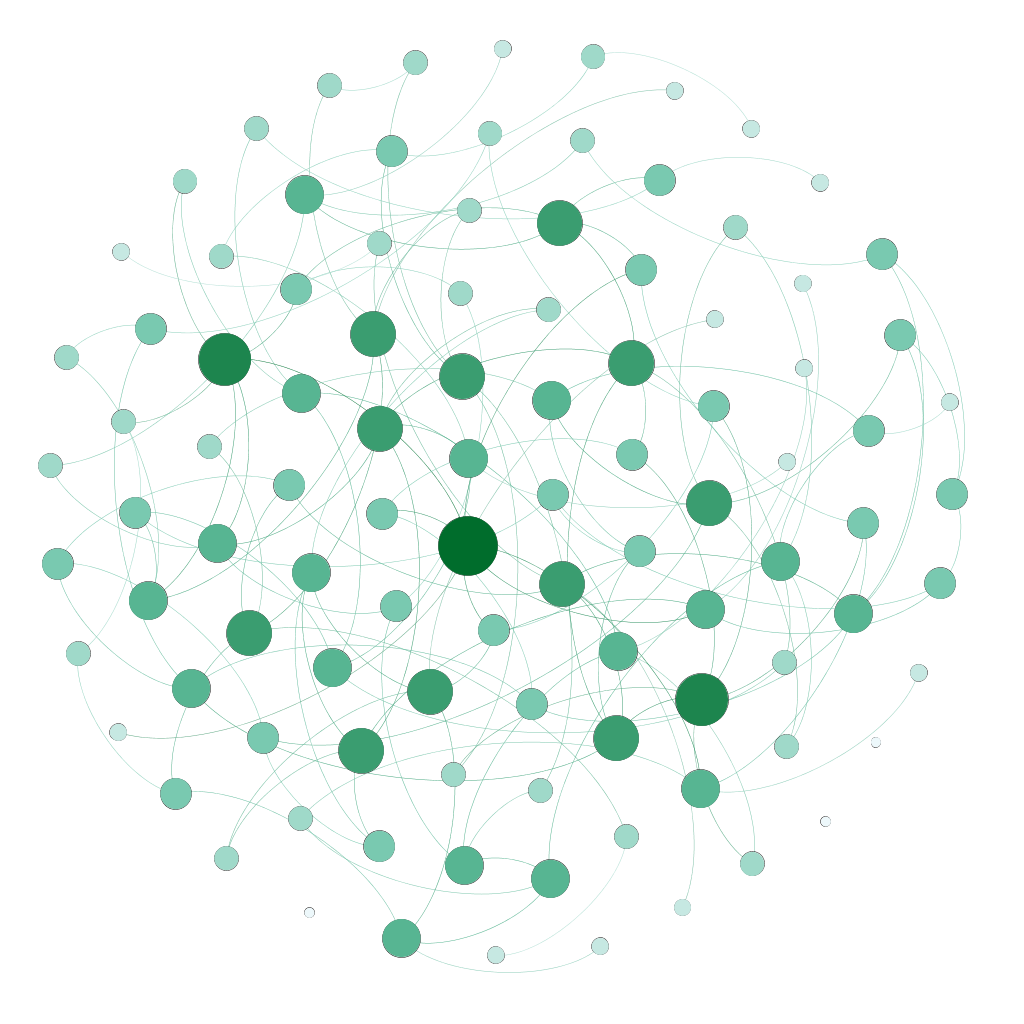}
    }
    \subfigure[Example \ref{example_pwl}]{
    \includegraphics[width=0.35\linewidth]{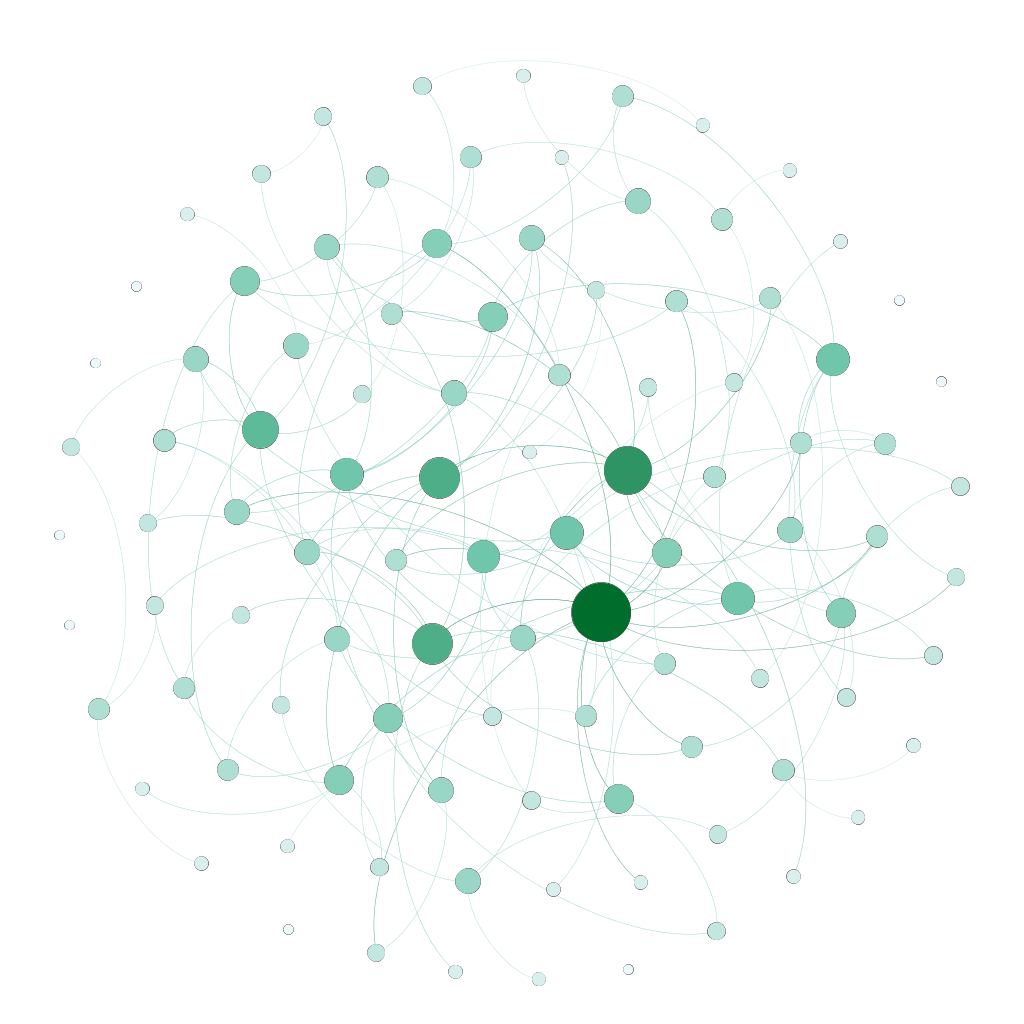}
    }
    \subfigure[Example \ref{example_block} (K = 10)]{
    \includegraphics[width=0.35\linewidth]{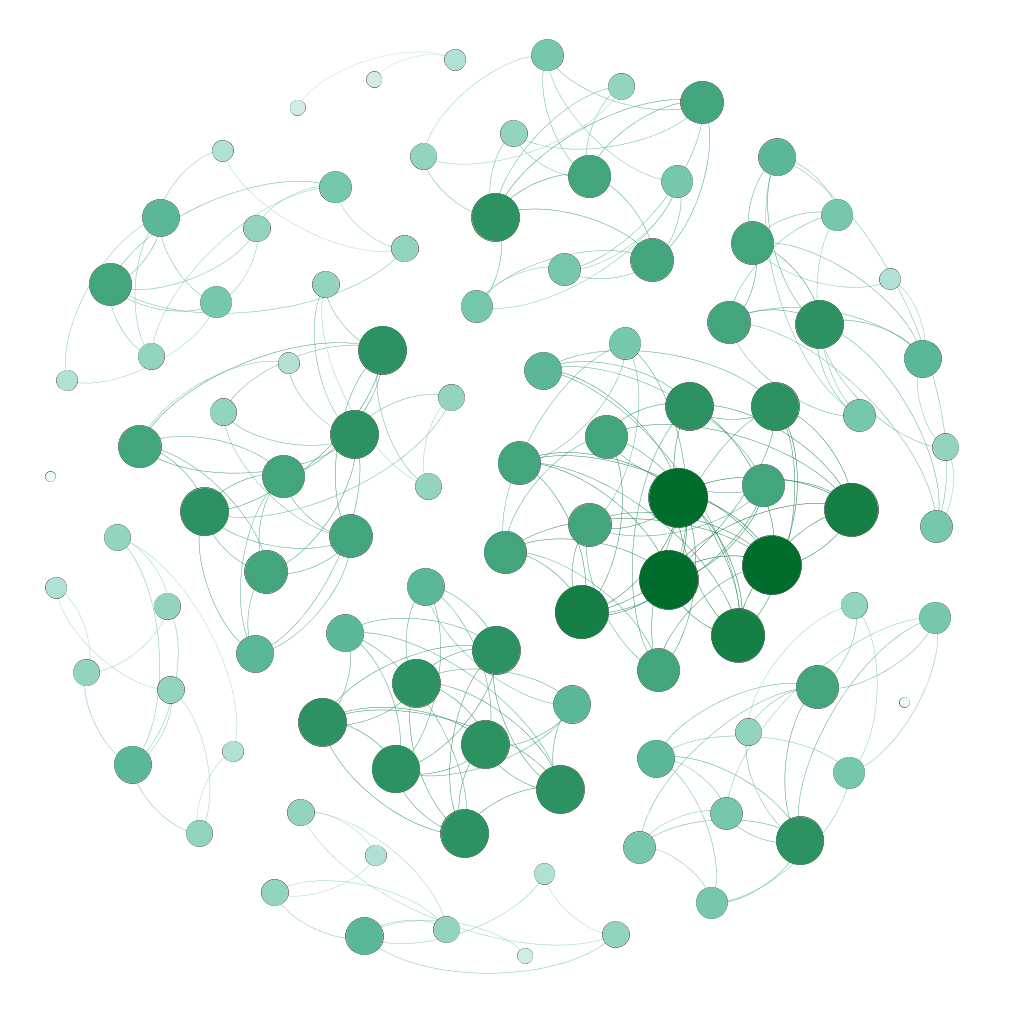}
    }
    \caption{Visualized network structures with N = 100}
    \label{fig:visualized_network}
\end{figure}

Set the true parameters $\nu_0 = (0.5, 0.7, 0.6, 0.1, 0.1, 5)'$ of the data generating process \eqref{model_ptngarch}. As for the sample region $D_{NT} = \{(i,t): i = 1, 2, ..., N; t = 1, 2, ..., T\}$, let $T$ increases from 200 to 2000, while $N$ also increases at relatively slower rates of $\bigO(\sqrt{T})$ and $\bigO(T/\log(T))$ respectively, as it is showed in the following table:
\begin{table}[htbp]
\centering
\begin{tabular}{|c|cccc|}
\hline
$T$ & 200 & 500 & 1000 & 2000\\
\hline
$N \approx \sqrt{T}$ & 14  & 22  & 31  & 44\\
\hline
$N \approx T/\log(T)$ & 37  & 80  & 144  & 263\\
\hline
\end{tabular}
\end{table}
\noindent For each network size $N$, the adjacency matrix $A$ is simulated according to four different mechanisms in Example \ref{example_band} to Example \ref{example_block}.
\begin{remark}
    Particularly in the empirical analysis we will study the dataset of car collisions across different neighbourhoods that are distributed on five boroughs of New York City. These boroughs are separated by rivers (except for Brooklyn and Queens), and neighbourhoods within the same borough are more likely to share a borderline while cross-borough connections are very rare. Therefore the network constructed with New York City neighbourhoods follows the block structure in Example \ref{example_block} with $N = 20$ and $K = 5$.
\end{remark}

Based on a simulated network, the data is generated according to \eqref{model_ptngarch}, and the true parameters are estimated by the MLE \eqref{MLE_ptngarch}. To monitor the finite performance of MLE, data generation and parameter estimation are repeated for $M = 1000$ times, for each combination of sample size $(N,T)$. The $m$-th replication produces the estimates $\hat{\theta}_m = (\hat{\omega}_m, \hat{\alpha}^{(1)}_m, \hat{\alpha}^{(2)}_m, \hat{\xi}_m, \hat{\beta}_m)'$ and $\hat{r}_m$. We use the following two measurements to evaluate the performance of simulation results:
\begin{itemize}
    \item Root-mean-square error: $RMSE_k = \sqrt{M^{-1}\sum_{m=1}^M(\hat{\theta}_{km} - \theta_{k0})^2}$,
    \item Coverage probability: $CP_k = M^{-1}\sum_{m=1}^M 1_{\left\{\theta_{k0}\in CI_{km}\right\}}$,
\end{itemize}
where $\hat{\theta}_{km}$ represents $k$th component of $\hat{\theta}_m$,  
$CI_{km}$ is the 95\% confidence interval defined as 
$$CI_{km} = \left(\hat{\theta}_{km} - z_{0.975}\widehat{SE}_{km}, \hat{\theta}_{km} + z_{0.975}\widehat{SE}_{km}\right).$$ 
Here the estimated standard error $\widehat{SE}_{km}$ is the square root of $k$-th diagonal element of $(NT)^{-1}\Sighat_{NT}^{-1}$, and $z_{0.975}$ is the 0.975th quantile of standard normal distribution. To eliminate the effect of starting points, a different initial values of $\theta$ is used for each $m$. RMSEs and CPs with different sample sizes and network simulation mechanisms are reported in Tables \ref{tab:sim_results_sqrt_ptngarch} and \ref{tab:sim_results_log_ptngarch}; We also report the mean estimates of the threshold $r_0$ in the last columns of both tables.
\begin{table}[htbp]
\centering
\resizebox{\textwidth}{!}{
\begin{tabular}{|c|cc|ccccc|c|}
\hline
& $T$ & $N$ & $\omega$ & $\alpha^{(1)}$ & $\alpha^{(2)}$ & $\xi$ & $\beta$ & $\Bar{r}$\\
\hline
\multirow{4}{*}{Example \ref{example_band}}
& 200  & 14 & 0.0696 (0.94) & 0.0203 (0.94) & 0.0278 (0.93) & 0.0170 (0.95) & 0.0256 (0.93) & 5.028 \\
                           & 500  & 22 & 0.0367 (0.96) & 0.0100 (0.95) & 0.0138 (0.95) & 0.0101 (0.93) & 0.0127 (0.95) & 5     \\
                           & 1000 & 31 & 0.0238 (0.95) & 0.0058 (0.95) & 0.0081 (0.95) & 0.0062 (0.97) & 0.0074 (0.95) & 5     \\
                           & 2000 & 44 & 0.0153 (0.95) & 0.0035 (0.95) & 0.0047 (0.95) & 0.0041 (0.96) & 0.0045 (0.95) & 5     \\
                           \hline
\multirow{4}{*}{Example \ref{example_random}}
& 200  & 14 & 0.0454 (0.95) & 0.0200 (0.95) & 0.0264 (0.94) & 0.0119 (0.96) & 0.0245 (0.94) & 5.045 \\
                           & 500  & 22 & 0.0284 (0.95) & 0.0101 (0.95) & 0.0134 (0.95) & 0.0072 (0.94) & 0.0126 (0.95) & 5.002 \\
                           & 1000 & 31 & 0.0162 (0.97) & 0.0059 (0.96) & 0.0077 (0.97) & 0.0044 (0.94) & 0.0074 (0.95) & 5     \\
                           & 2000 & 44 & 0.0112 (0.96) & 0.0034 (0.96) & 0.0047 (0.95) & 0.0029 (0.94) & 0.0043 (0.96) & 5     \\
                           \hline
\multirow{4}{*}{Example \ref{example_pwl}}
& 200  & 14 & 0.0511 (0.96) & 0.0200 (0.95) & 0.0272 (0.94) & 0.0131 (0.95) & 0.0246 (0.95) & 5.034 \\
                           & 500  & 22 & 0.0349 (0.95) & 0.0102 (0.95) & 0.0135 (0.96) & 0.0084 (0.95) & 0.0127 (0.96) & 5.001 \\
                           & 1000 & 31 & 0.0146 (0.95) & 0.0060 (0.95) & 0.0079 (0.95) & 0.0038 (0.95) & 0.0077 (0.94) & 5     \\
                           & 2000 & 44 & 0.0104 (0.95) & 0.0035 (0.95) & 0.0048 (0.94) & 0.0025 (0.95) & 0.0043 (0.96) & 5     \\
                           \hline
\multirow{4}{*}{Example \ref{example_block}}
& 200  & 14 & 0.0882 (0.95) & 0.0205 (0.95) & 0.0273 (0.95) & 0.0227 (0.94) & 0.0256 (0.93) & 5.013 \\
                           & 500  & 22 & 0.0379 (0.94) & 0.0102 (0.95) & 0.0136 (0.95) & 0.0096 (0.95) & 0.0124 (0.95) & 5     \\
                           & 1000 & 31 & 0.0218 (0.95) & 0.0060 (0.95) & 0.0078 (0.95) & 0.0055 (0.95) & 0.0073 (0.96) & 5     \\
                           & 2000 & 44 & 0.0118 (0.94) & 0.0035 (0.96) & 0.0047 (0.95) & 0.0029 (0.95) & 0.0043 (0.96) & 5     \\
                           \hline
\end{tabular}
\caption{Simulation results with different network structures ($N \approx \sqrt{T}$).}
\label{tab:sim_results_sqrt_ptngarch}
}
\end{table}

\begin{table}[htbp]
\centering
\resizebox{\textwidth}{!}{
\begin{tabular}{|c|cc|ccccc|c|}
\hline
& $T$ & $N$ & $\omega$ & $\alpha^{(1)}$ & $\alpha^{(2)}$ & $\xi$ & $\beta$ & $\Bar{r}$\\
\hline
\multirow{4}{*}{Example \ref{example_band}}
& 200  & 37  & 0.0537 (0.95) & 0.0124 (0.95) & 0.0164 (0.95) & 0.0143 (0.94) & 0.0158 (0.94) & 5.002 \\
& 500  & 80  & 0.0287 (0.96) & 0.0054 (0.94) & 0.0071 (0.95) & 0.0078 (0.95) & 0.0066 (0.95) & 5     \\
& 1000 & 144 & 0.0201 (0.95) & 0.0029 (0.94) & 0.0040 (0.93) & 0.0055 (0.95) & 0.0036 (0.94) & 5     \\
& 2000 & 263 & 0.0136 (0.95) & 0.0015 (0.94) & 0.0019 (0.95) & 0.0038 (0.95) & 0.0019 (0.93) & 5     \\
\hline
\multirow{4}{*}{Example \ref{example_random}}
& 200  & 37  & 0.0347 (0.95) & 0.0121 (0.95) & 0.0170 (0.95) & 0.0089 (0.95) & 0.0161 (0.93) & 5.008 \\
& 500  & 80  & 0.0140 (0.95) & 0.0053 (0.95) & 0.0070 (0.95) & 0.0035 (0.95) & 0.0066 (0.95) & 5     \\
& 1000 & 144 & 0.0073 (0.95) & 0.0029 (0.93) & 0.0036 (0.95) & 0.0020 (0.94) & 0.0036 (0.93) & 5     \\
& 2000 & 263 & 0.0041 (0.95) & 0.0014 (0.95) & 0.0020 (0.94) & 0.0011 (0.95) & 0.0018 (0.96) & 5     \\
\hline
\multirow{4}{*}{Example \ref{example_pwl}}
& 200  & 37  & 0.0385 (0.95) & 0.0124 (0.94) & 0.0168 (0.95) & 0.0092 (0.95) & 0.0152 (0.95) & 5.003 \\
& 500  & 80  & 0.0144 (0.95) & 0.0054 (0.95) & 0.0071 (0.94) & 0.0036 (0.95) & 0.0067 (0.95) & 5     \\
& 1000 & 144 & 0.0073 (0.94) & 0.0029 (0.94) & 0.0035 (0.96) & 0.0019 (0.94) & 0.0035 (0.95) & 5     \\
& 2000 & 263 & 0.0037 (0.95) & 0.0015 (0.95) & 0.0019 (0.96) & 0.0009 (0.95) & 0.0018 (0.95) & 5     \\
\hline
\multirow{4}{*}{Example \ref{example_block}}
& 200  & 37  & 0.0498 (0.95) & 0.0120 (0.95) & 0.0165 (0.94) & 0.0129 (0.94) & 0.0148 (0.96) & 5.011 \\
& 500  & 80  & 0.0176 (0.94) & 0.0055 (0.94) & 0.0071 (0.94) & 0.0045 (0.94) & 0.0069 (0.94) & 5     \\
& 1000 & 144 & 0.0083 (0.97) & 0.0028 (0.95) & 0.0036 (0.96) & 0.0022 (0.96) & 0.0034 (0.95) & 5     \\
& 2000 & 263 & 0.0048 (0.95) & 0.0015 (0.95) & 0.0019 (0.95) & 0.0012 (0.96) & 0.0019 (0.95) & 5     \\
\hline
\end{tabular}
\caption{Simulation results with different network structures ($N \approx T/\log(T)$).}
\label{tab:sim_results_log_ptngarch}
}
\end{table}

From Tables \ref{tab:sim_results_sqrt_ptngarch} and \ref{tab:sim_results_log_ptngarch} it can be seen that the RMSEs of $\hat{\theta}_{NT}$ decrease asymptotically toward zero, and the mean of $\hat{r}_{NT}$ is equal to $r_0 = 5$ for sufficiently large sample size. These results support the consistency of MLE \eqref{MLE_ptngarch} in Theorem \ref{theorem_C_ptngarch}. The reported CPs are close to the theoretical value $0.95$. This shows that $\widehat{SE}$ provides a reliable estimation of the true standard error of $\hat{\theta}_{NT}$. Moreover, normal Q-Q plots for the estimation results, Figures \ref{fig:qq_band_ptngarch} to \ref{fig:qq_block_ptngarch}, are presented when $T = 2000, N = 44$ and $T= 2000, N = 263$ respectively, under different network structures. These Q-Q plots provide additional evidence for the asymptotic normality of $\hat{\theta}_{NT}$ in Theorem \ref{theorem_AN_ptngarch}.

\begin{figure}[htbp]
    \centering
    \subfigure[$T = 2000, N = 44$]{
    \includegraphics[width=0.8\textwidth]{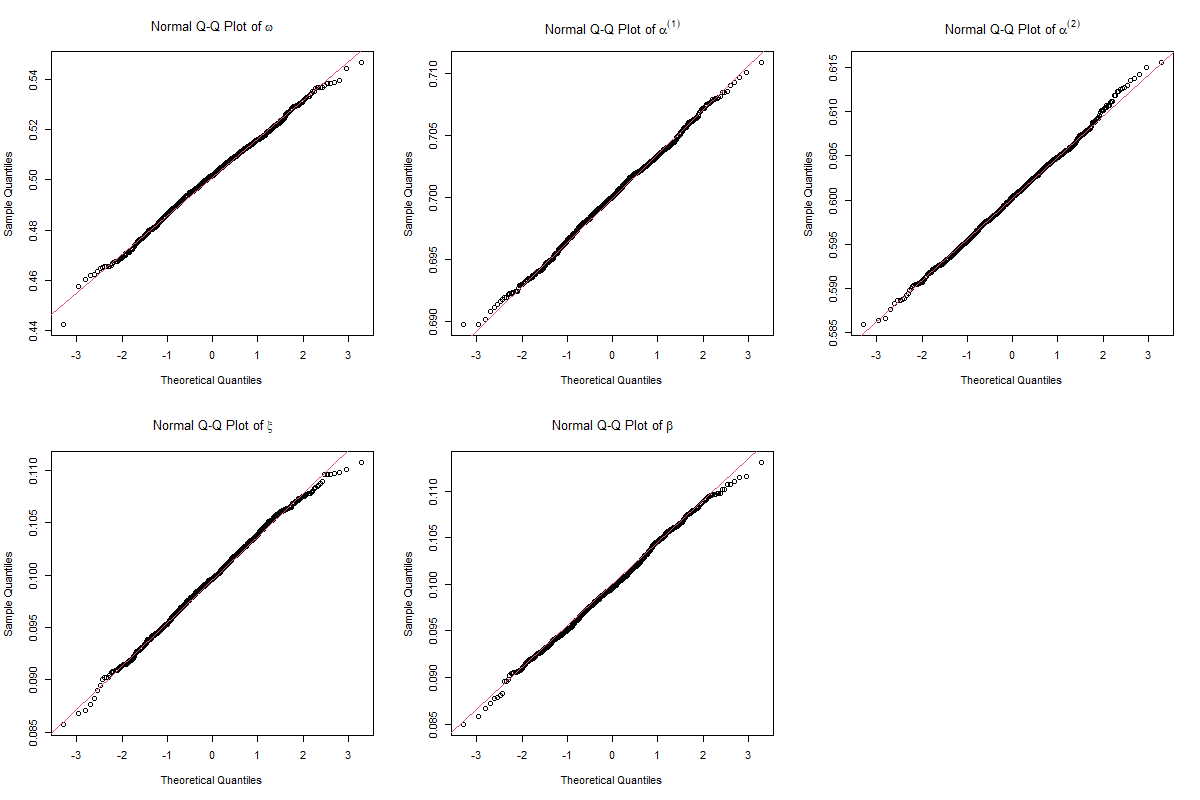}
    }
    \vfill
    \subfigure[$T = 2000, N = 263$]{
    \includegraphics[width=0.8\textwidth]{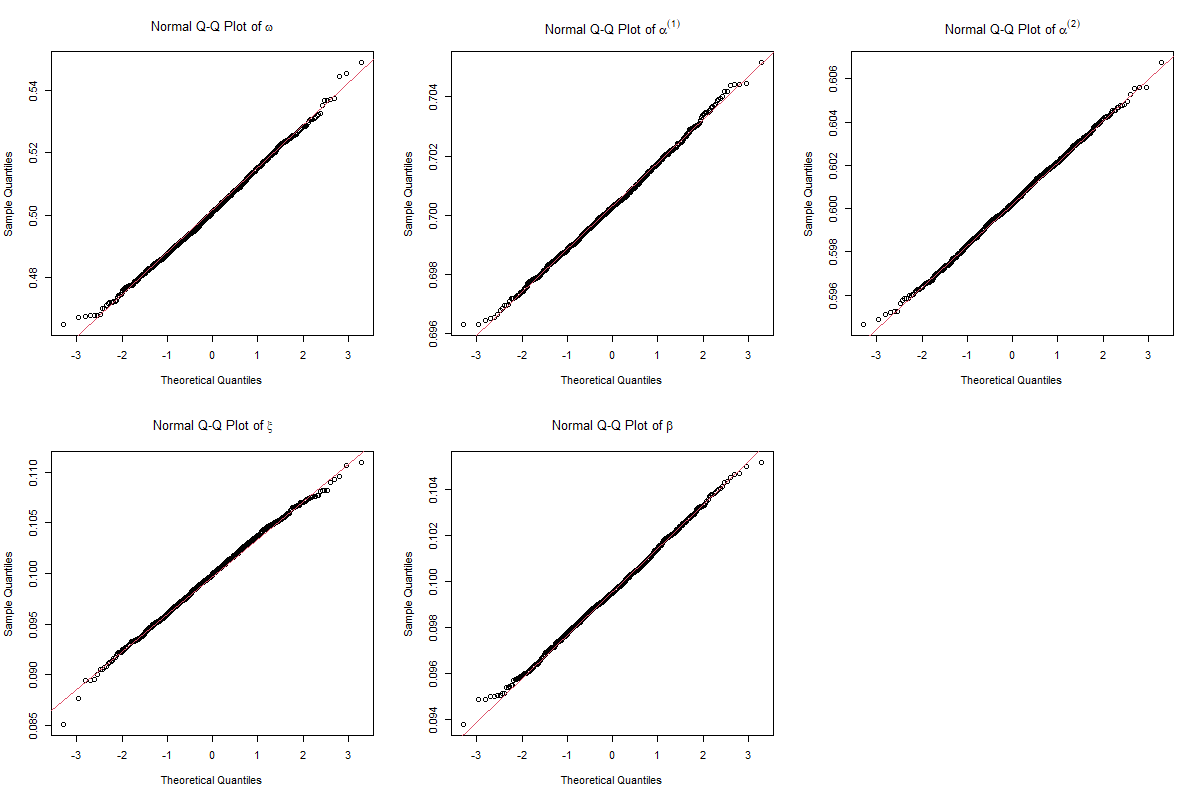}
    }
    \caption{Q-Q plots of estimates for Example \ref{example_band}.}
    \label{fig:qq_band_ptngarch}
\end{figure}

\begin{figure}[htbp]
    \centering
    \subfigure[$T = 2000, N = 44$]{
    \includegraphics[width=0.8\textwidth]{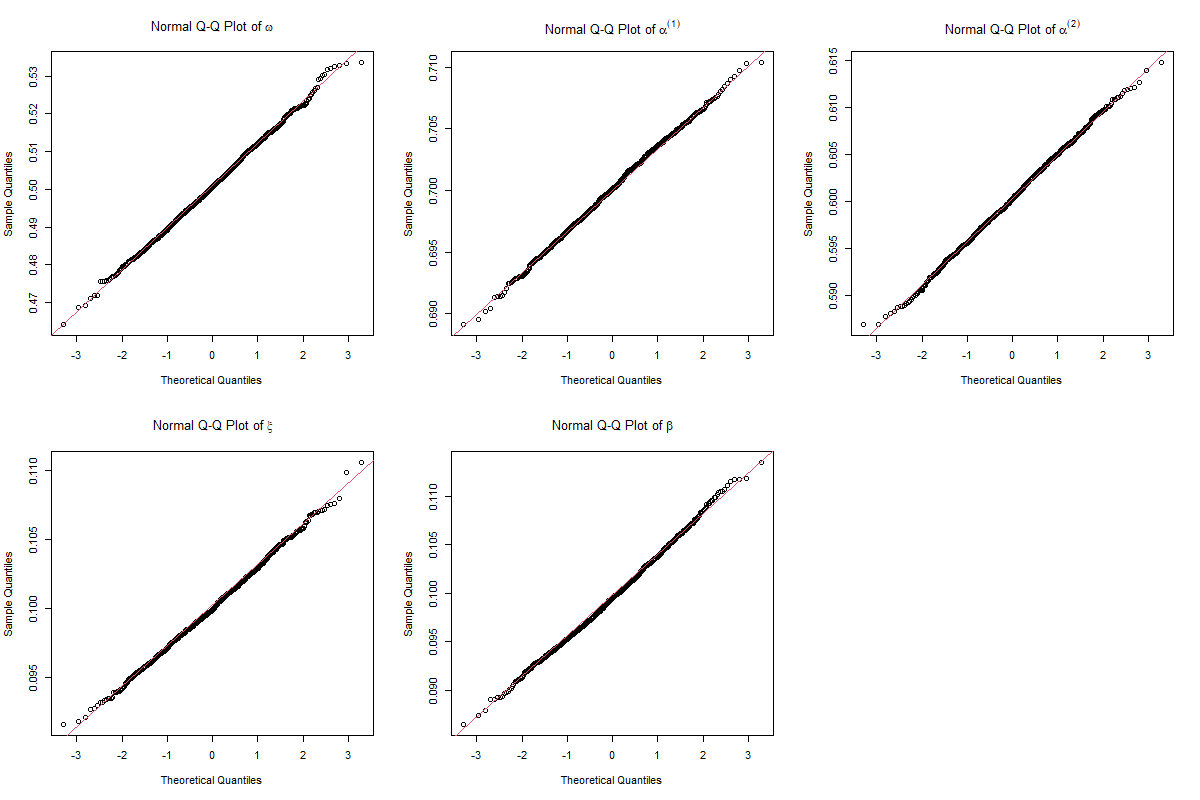}
    }
    \vfill
    \subfigure[$T = 2000, N = 263$]{
    \includegraphics[width=0.8\textwidth]{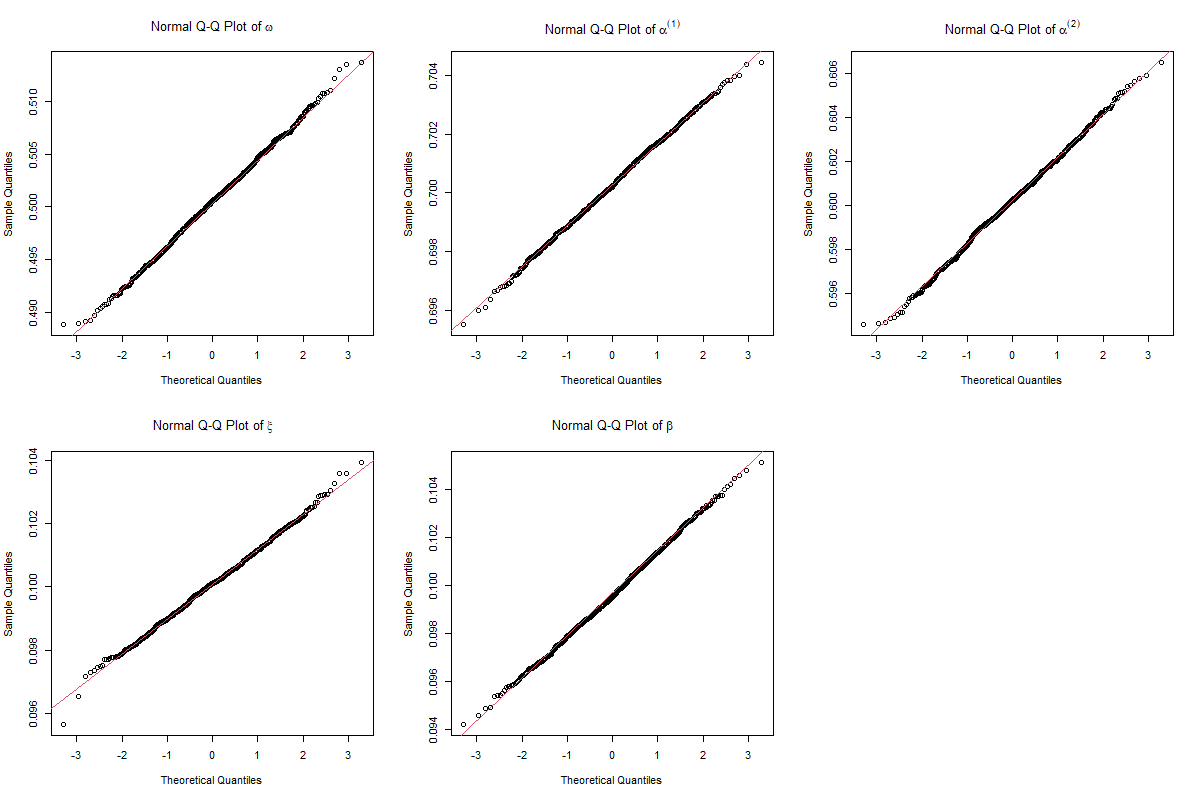}
    }
    \caption{Q-Q plots of estimates for Example \ref{example_random}.}
    \label{fig:qq_random_ptngarch}
\end{figure}

\begin{figure}[htbp]
    \centering
    \subfigure[$T = 2000, N = 44$]{
    \includegraphics[width=0.8\textwidth]{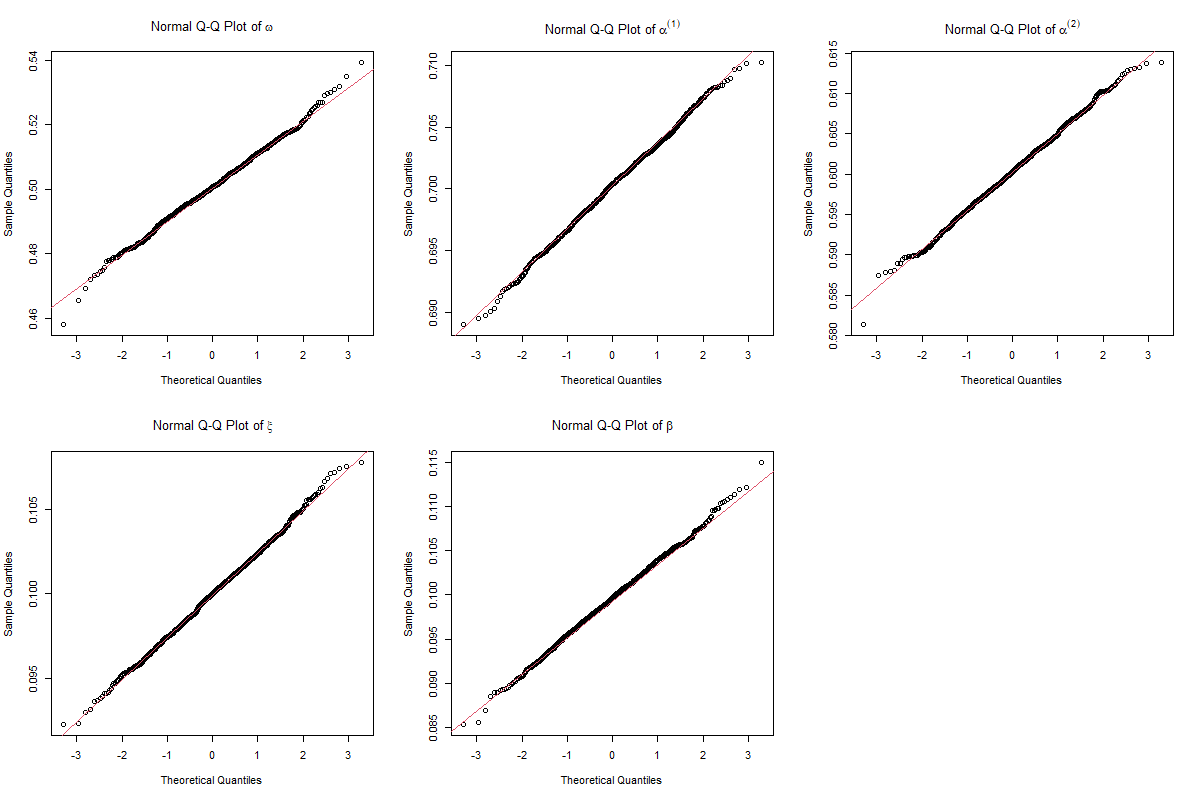}
    }
    \vfill
    \subfigure[$T = 2000, N = 263$]{
    \includegraphics[width=0.8\textwidth]{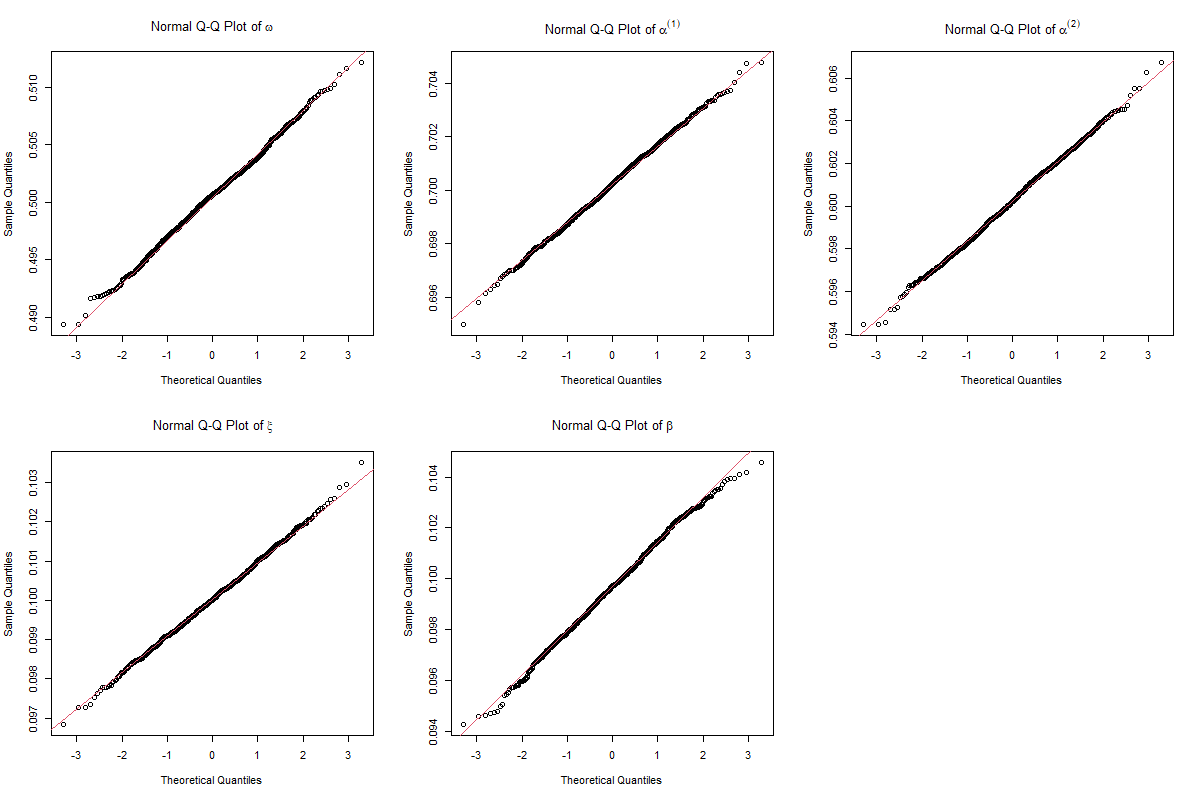}
    }
    \caption{Q-Q plots of estimates for Example \ref{example_pwl}.}
    \label{fig:qq_pwl_ptngarch}
\end{figure}

\begin{figure}[htbp]
    \centering
    \subfigure[$T = 2000, N = 44$]{
    \includegraphics[width=0.8\textwidth]{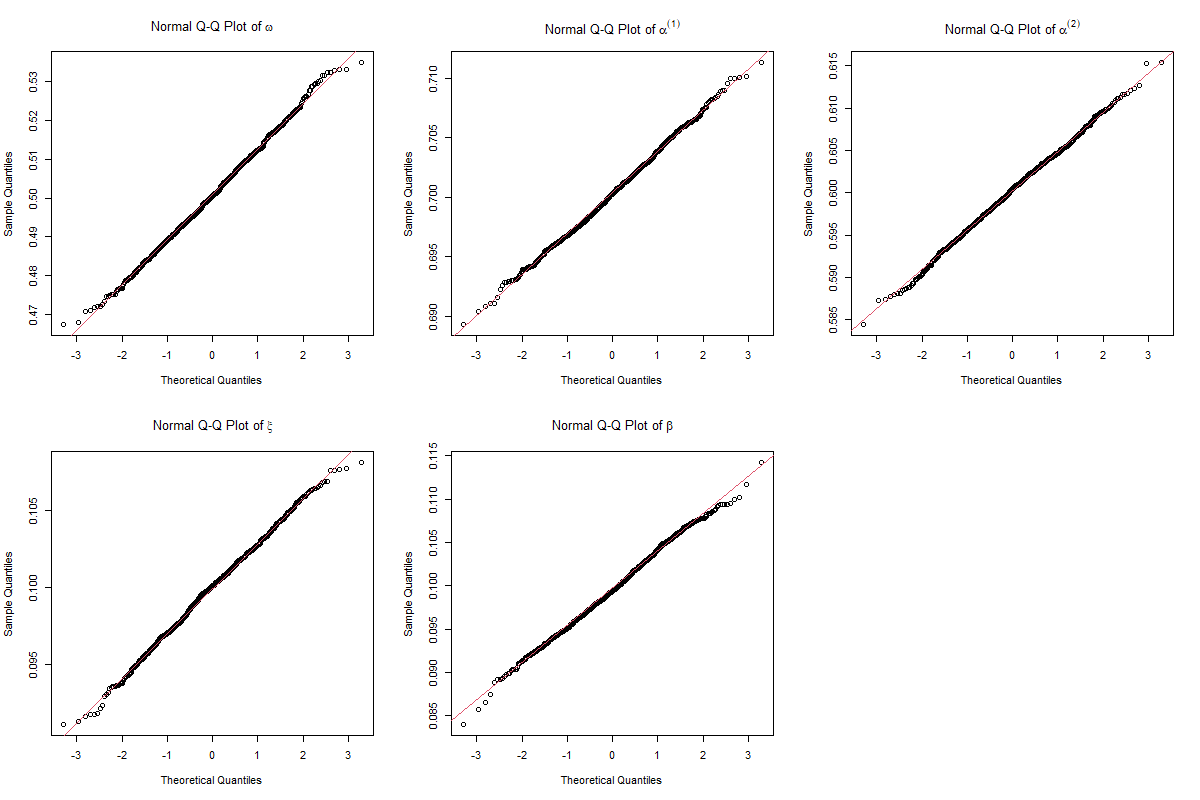}
    }
    \vfill
    \subfigure[$T = 2000, N = 263$]{
    \includegraphics[width=0.8\textwidth]{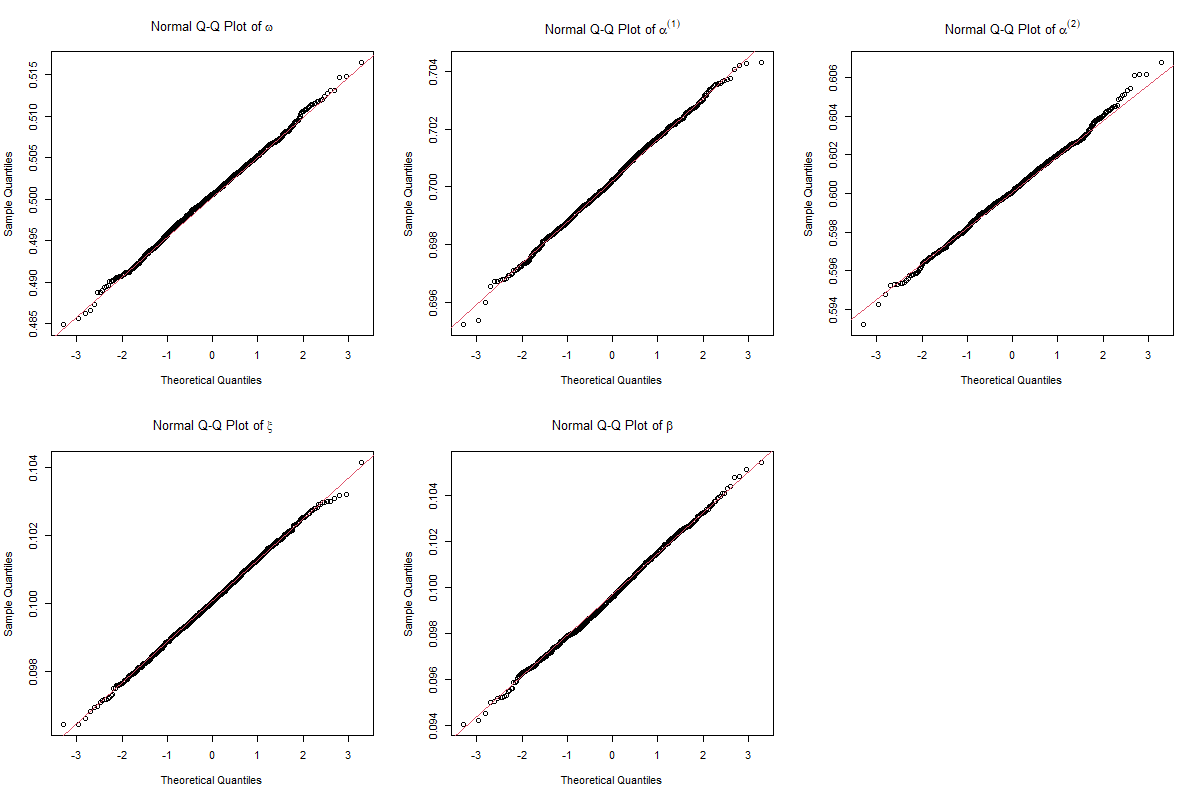}
    }
    \caption{Q-Q plots of estimates for Example \ref{example_block}.}
    \label{fig:qq_block_ptngarch}
\end{figure}

\subsection{Analysis of daily numbers of car accidents in New York City}\label{section_data_analysis_ptngarch}

New York City Police Department (NYPD) publishes and updates regularly the detailed data of motor vehicle collisions that have occurred city-wide. These data are openly accessible on NYPD website\footnote{https://www1.nyc.gov/site/nypd/stats/traffic-data/traffic-data-collision.page} and contain sufficient information for us to apply our model. We collect all records from 16th February 2021 to 30th June 2022, each record includes the date when an accident happened, and the zip code of where it happened. We classified all records into 41 neighbourhoods according to the correspondence between zip codes and the geometric locations they represent. Re-grouping the data by neighbourhoods and dates of occurrence, we obtain a high-dimensional time series with dimension $N = 41$ and sample size $T = 500$.

Two neighbourhoods are regarded as connected nodes if they share a borderline. Therefore, based on the geometric information provided by the data, we are able to construct a reasonable network with 41 nodes, which is visualized in Figure \ref{fig:network_neighbourhoods}. In Figure \ref{fig:hist_neighbourhoods} we plot histograms of daily numbers of car accidents in 9 randomly selected neighbourhoods. The shapes of histograms show potential Poisson distribution. Moreover, in Figure \ref{fig:lines_nyc} we can easily observe volatility clustering in daily numbers of car accident in four selected neighbourhoods of NYC, indicating potential autoregressive structure in the conditional heteroscedasticity of the data.

\begin{figure}[htbp]
    \centering
    \includegraphics[width=0.45\textwidth]{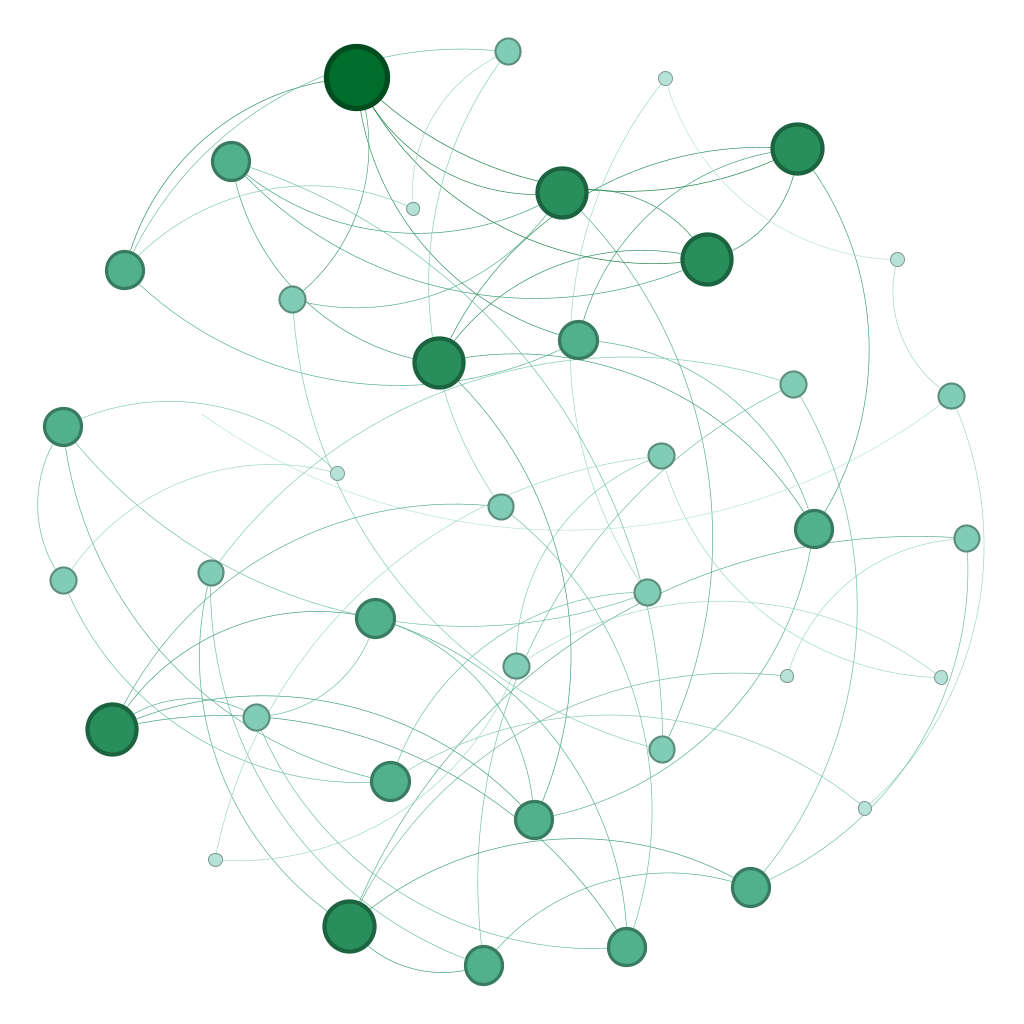}
    \caption{Network of 41 neighbourhoods in New York City}
\label{fig:network_neighbourhoods}
\end{figure}

\begin{figure}[htbp]
    \centering
    \includegraphics[width=0.8\textwidth]{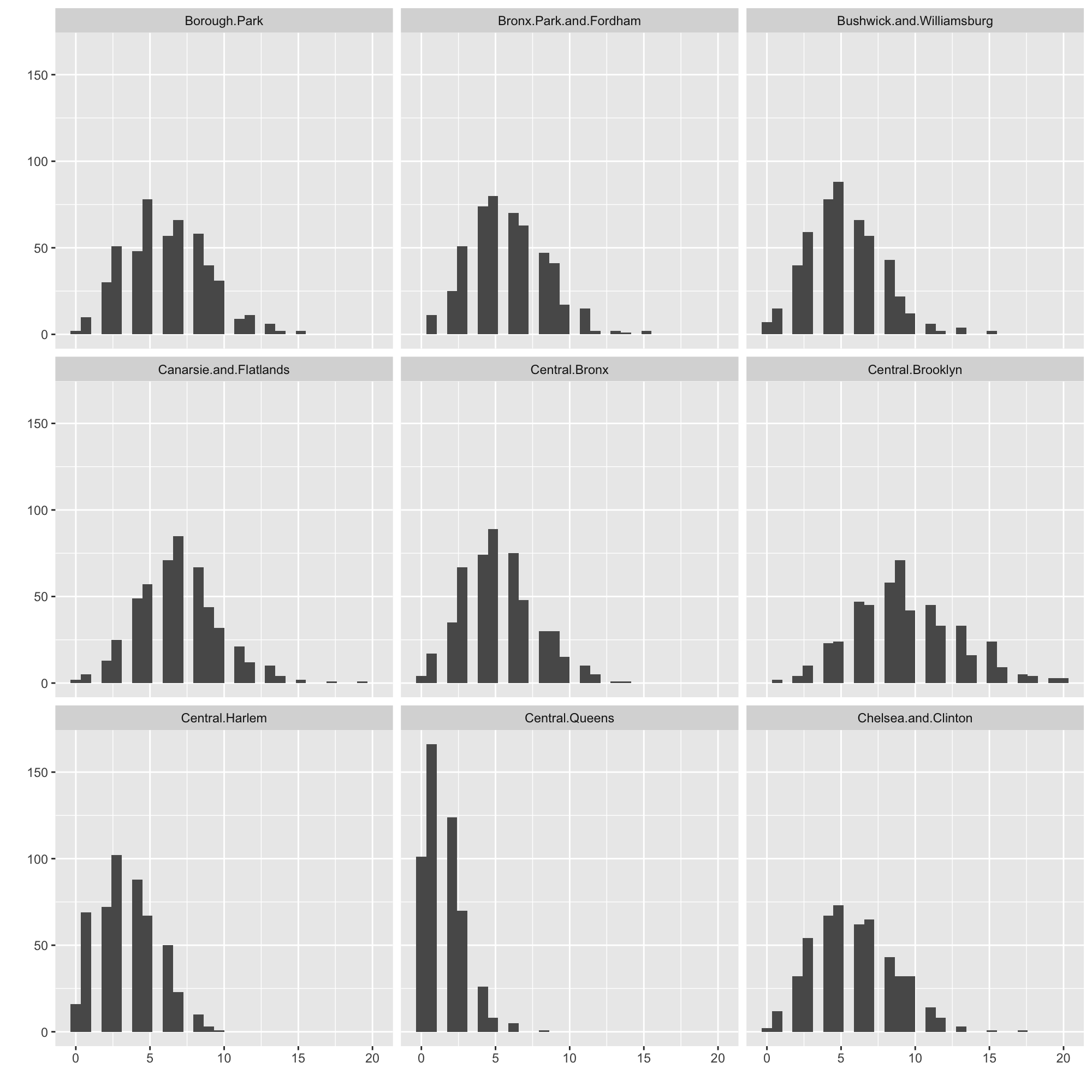}
    \caption{Distributions of daily occurrences of car accident in selected neighbourhoods.}
\label{fig:hist_neighbourhoods}
\end{figure}

\begin{figure}[htbp]
    \centering
    \includegraphics[width=\textwidth]{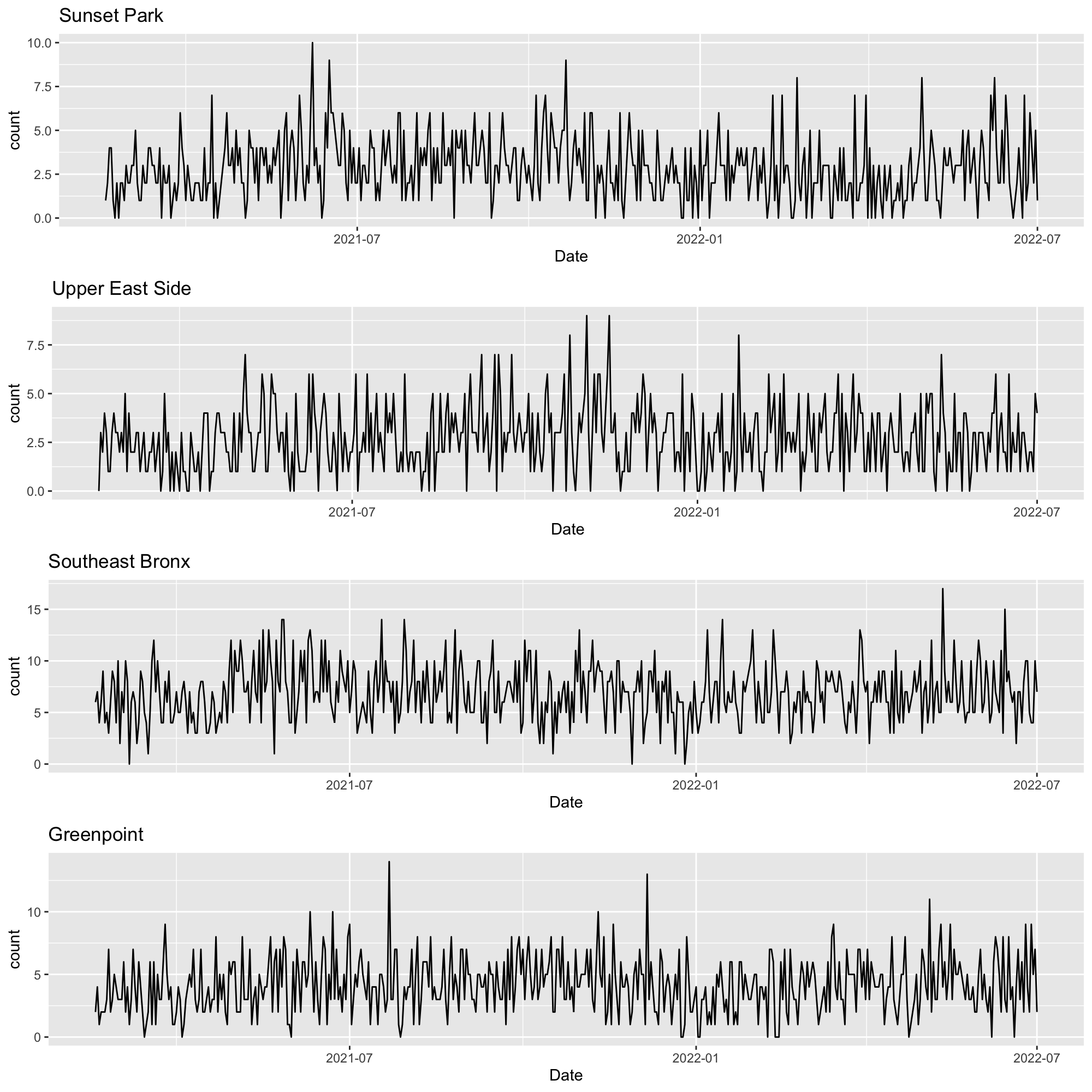}
    \caption{Daily occurrences of car accident in 4 neighbourhoods.}
    \label{fig:lines_nyc}
\end{figure}

Our model was fitted to this dataset by the method proposed in Section 3. The results of parameter estimation are reported in Table \ref{tab:real_data_results_ptngarch} below. 

\begin{table}[htbp]
\centering
\begin{tabular}{|c|c|c|c|c|c|c|}
\hline
& $\omega$ & $\alpha^{(1)}$ & $\alpha^{(2)}$ & $\xi$ & $\beta$ & $r$ \\
\hline
Estimation & 0.018693 & 0.126472 & 0.135026 & 0.002727 & 0.862244 & 10 \\
\hline
SE & 4.12e-03 & 4.40e-03 & 4.68e-03 & 1.09e-03 & 4.73e-03 & \textbackslash \\
\hline
\end{tabular}
\caption{Estimation results based on daily number of car accidents in 41 neighbourhoods of NYC.}
\label{tab:real_data_results_ptngarch}
\end{table}

Now we try to interpret these results. Firstly, it is worthy noting that $\alpha^{(1)}$ is slightly smaller than $\alpha^{(2)}$, which means that the conditional variance of the number of car accidents in these neighbourhoods are less affected by previous day's number if it is above the threshold $r = 10$. Secondly, the volatility in the number of car accidents in one area is also affected by its geometrically neighboured areas. In addition, the estimated value of $\beta$ is significantly larger than other coefficients, indicating a strong persistence in volatility that leads to volatility clustering. 

At last, we utilize the Wald test in Section 4 to investigate the following important properties of this real dataset, and these properties provide substantial evidence to support our model setting:

(i) The existence of threshold effect (i.e asymmetric property) for volatility. The null hypothesis is $H_0: \alpha^{(1)}_0 = \alpha^{(2)}_0$ (by taking $\Gamma = (0, 1, -1, 0, 0)$ and $\eta = 0$ in \eqref{null_hypo_wald_ptngarch}). In this case, the Wald statistic \eqref{wald_statistic_ptngarch} $W_{NT} = 18.94$, which suggests the rejection of $H_0$ at significant level below $0.01$ according to Theorem \ref{proposition_wald_test_ptngarch}. This testing shows that the proposed model with threshold is essentially useful for capturing the nature of daily numbers of car accidents in New York City.

(ii) The existence of GARCH effect (persistence in volatility). For the null hypothesis is $H_0: \beta_0 = 0$, the Wald statistic \eqref{wald_statistic_ptngarch} $W_{NT} = 33212$ and $p$-value is very close $0$. This strongly suggests that the autoregressive term (i.e. $\lambda_{i,t-1}$) should be included in the model. This can be seen from the large value of estimate of $\beta$. 

(iii) The existence of network effect. The Wald statistic \eqref{wald_statistic_ptngarch} $W_{NT} = 6.31$ and $p$-value is $0.012$ in this case. This indicate that there is significant network effect among the daily numbers of car accidents.

\section{Conclusion}
We have proposed a model to describe spatio-temporal integer-valued data which are observed at nodes of a network and have asymmetric property. Although the proposed model is called GARCH-type for volatility (conditional variance), it also can be used to model the conditional mean, because we assume that the conditional distribution is Poisson distribution. Asymptotic inferences, parameter estimation and hypothesis testing of the proposed model, are discussed by applying limit theorems for nonstationary arrays of random fields. Simulations for different network structures and application to a real dataset show that our methodology is useful for modelling dynamic spatio-temporal integer-valued data.              

\newpage
\begin{appendix}
\section{Proofs of theoretical results}

In this appendix, we give details of proofs for our theoretical results.

\begin{lemma}\label{lma_berkes2003_2}
    If $0 \leq \beta < 1$, $\expct\abs{y_{it}} < \infty$ and $\expct\abs{\lambda_{it}(\nu)} < \infty$ for all $(i,t)\in D_{NT}, NT\geq 1$, then
    \begin{equation}\label{lma_berkes2003_2_lambda}
    \lambda_{it}(\nu) = \sum_{k=1}^{\infty}\beta^{k-1}\left[\omega + \alpha_{i,t-k} y_{i,t-k} + \xi\sum_{j=1}^N w_{ij}y_{j,t-k}\right]
\end{equation} with probability one for all $(i,t)\in D_{NT}, NT\geq 1$ and $\nu\in\Theta\times\bZ_{+}$, where $\alpha_{i,t-k} = \alpha^{(1)}1_{\{y_{i,t-k}\geq r\}} + \alpha^{(2)}1_{\{y_{i,t-k}<r\}}$.
\end{lemma}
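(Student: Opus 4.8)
The plan is to unroll the defining recursion \eqref{eq_lambda_ptngarch} for $\lambda_{it}(\nu)$ a finite number of times, identify the resulting partial sum with a truncation of the claimed series, and then pass to the limit while controlling the remainder term. Write $u_{i,t-k} := \omega + \alpha_{i,t-k}y_{i,t-k} + \xi\sum_{j=1}^N w_{ij}y_{j,t-k}$, so that \eqref{eq_lambda_ptngarch} reads $\lambda_{it}(\nu) = u_{i,t-1} + \beta\lambda_{i,t-1}(\nu)$. First I would prove by induction on $m\ge 1$ that
$$\lambda_{it}(\nu) = \sum_{k=1}^m \beta^{k-1}u_{i,t-k} + \beta^m\lambda_{i,t-m}(\nu),$$
the base case being \eqref{eq_lambda_ptngarch} itself and the inductive step following by substituting the recursion for $\lambda_{i,t-m}(\nu)$ into the remainder of the $m$-step identity.

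Next I would observe that every summand is nonnegative: $\omega>0$, the coefficients $\alpha^{(1)},\alpha^{(2)},\xi,\beta$ are nonnegative, and $y_{it}\in\bN$, so $u_{i,t-k}\ge 0$ and $\beta^m\lambda_{i,t-m}(\nu)\ge 0$. Consequently the partial sums $S_m:=\sum_{k=1}^m\beta^{k-1}u_{i,t-k}$ are nondecreasing in $m$ and, by the displayed identity, bounded above by $\lambda_{it}(\nu)$, which is finite almost surely because $\expct\abs{\lambda_{it}(\nu)}<\infty$. Hence $S_m$ increases to a finite limit $S_\infty\le\lambda_{it}(\nu)$ almost surely, which establishes convergence of the infinite series on the right-hand side of \eqref{lma_berkes2003_2_lambda}.

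It then remains to show the remainder $R_m:=\beta^m\lambda_{i,t-m}(\nu)=\lambda_{it}(\nu)-S_m$ vanishes in the limit, and this is the main obstacle. Since $R_m\ge 0$ converges almost surely to $R_\infty:=\lambda_{it}(\nu)-S_\infty\ge 0$, it suffices to prove $R_\infty=0$. Taking expectations gives $\expct[R_m]=\beta^m\expct[\lambda_{i,t-m}(\nu)]$; under the hypotheses, together with the stationarity supplied by Theorem \ref{theorem_stationarity}, the quantity $\expct[\lambda_{i,t-m}(\nu)]$ is bounded uniformly in $m$ by some constant $C_i<\infty$, so $\expct[R_m]\le\beta^m C_i\to 0$ because $0\le\beta<1$. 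Applying Fatou's lemma then yields $\expct[R_\infty]\le\liminf_m\expct[R_m]=0$, and since $R_\infty\ge 0$ this forces $R_\infty=0$ almost surely.

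Combining the two steps gives $\lambda_{it}(\nu)=S_\infty=\sum_{k=1}^\infty\beta^{k-1}u_{i,t-k}$ almost surely, which is exactly \eqref{lma_berkes2003_2_lambda}. The delicate point throughout is the uniform-in-$m$ control of $\expct[\lambda_{i,t-m}(\nu)]$: without it the geometric factor $\beta^m$ alone would not force the remainder to zero, since the factor $\expct[\lambda_{i,t-m}(\nu)]$ could in principle grow. It is precisely the finite-first-moment assumption, propagated backward in time through stationarity, that rules this out and closes the argument.
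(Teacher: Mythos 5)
Your proof is correct, and it reaches the result by a genuinely different route than the paper, although both start from the same unrolled identity $\lambda_{it}(\nu) = \sum_{k=1}^m \beta^{k-1}u_{i,t-k}(\nu) + \beta^m\lambda_{i,t-m}(\nu)$. For the convergence of the infinite series, the paper never uses positivity: it bounds $\expct\log^{+}|u_{i,t-k}(\nu)|$ via Jensen, invokes Lemma 2.2 of \cite{berkes2003} to obtain $\sum_{k}\prob\left[|u_{i,t-k}(\nu)|>\zeta^k\right]<\infty$, and applies Borel--Cantelli with $1<\zeta<1/\beta$; you instead exploit that $u_{i,t-k}(\nu)\geq\omega>0$ and $\lambda_{i,t-m}(\nu)\geq 0$, so the partial sums $S_m$ are nondecreasing and bounded above by the a.s.\ finite $\lambda_{it}(\nu)$ --- more elementary, and it dispenses with the external lemma entirely. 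For forcing the remainder to zero, the paper uses Markov's inequality plus Borel--Cantelli to get $\beta^k\lambda_{i,t-k-1}(\nu)\convas 0$, while you use Fatou's lemma on $R_m=\lambda_{it}(\nu)-S_m\geq 0$; both are valid, and in fact both hinge on exactly the same implicit strengthening of the hypothesis, namely that $\expct|\lambda_{i,t-m}(\nu)|$ is bounded (or at least subgeometric) in $m$ --- the paper needs this for the summability $\sum_k\beta^k\expct|\lambda_{i,t-k-1}(\nu)|<\infty$ in its Markov step --- so the ``delicate point'' you flag is a shared reading of the hypothesis rather than a gap peculiar to your argument. The one thing you should state explicitly is the nonnegativity $\lambda_{i,t-m}(\nu)\geq 0$ on which your monotonicity step rests: it holds by the model's design (positive $\omega$, nonnegative coefficients, weights, and counts), but it is a genuine extra input; the paper's route never uses it, and would survive unchanged for signed processes satisfying the same recursion, which is what the generality of the Berkes-type argument buys.
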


\begin{proof}
    When $\beta = 0$, \eqref{lma_berkes2003_2_lambda} obviously holds. Now we consider the case when $0 < \beta < 1$. Let $\log^{+}(x) = \log(x)$ if $x > 1$ and 0 otherwise, $u_{i,t-k}(\nu) := \omega + \alpha_{i,t-k} y_{i,t-k} + \xi\sum_{j=1}^N w_{ij}y_{j,t-k}$. By Jensen's inequality we have
    \begin{align*}
        &\expct\log^{+}|u_{i,t-k}(\nu)|\\
        \leq &\log^{+}\expct\left|\omega + \alpha_{i,t-k} y_{i,t-k} + \xi\sum_{j=1}^N w_{ij}y_{j,t-k}\right|\\
        < &\infty.
    \end{align*} By Lemma 2.2 in \cite{berkes2003} we have $\sum_{k=1}^{\infty}\prob\left[|u_{i,t-k}(\nu)| > \zeta^k\right] < \infty$ for any $\zeta > 1$. Therefore, by the Borel-Cantelli lemma, $|u_{i,t-k}(\nu)| \leq \zeta^k$ almost surely. Letting $1 < \zeta < \frac{1}{|\beta|}$, we can prove that the right-hand-side of \eqref{lma_berkes2003_2_lambda} converges almost surely.

    It remains for us to show that 
    $$\lambda_{it}(\nu) = \sum_{k=1}^{\infty}\beta^{k-1}u_{i,t-k}(\nu).$$ 
    From \eqref{eq_lambda_ptngarch}, we have $$\lambda_{it}(\nu) - \beta^k \lambda_{i,t-k-1}(\nu) = u_{i,t-1}(\nu) + \beta u_{i,t-2}(\nu) + ... + \beta^{k-1} u_{i,t-k}(\nu).$$ 
    Using the Markov inequality, we obtain that $\sum_{k=1}^{\infty}\prob\left\{|\beta^k \lambda_{i,t-k-1}(\nu)| > \delta\right\} < \infty$ for any $\delta > 0$. Then, by the Borel-Cantelli lemma, $|\beta^k \lambda_{i,t-k-1}(\nu)| \convas 0$ as $k\to\infty$. Letting $k\to\infty$ on both sides of above equation, we can conclude the result of Lemme \ref{lma_berkes2003_2}.
\end{proof}

\subsection{Proof of Theorem \ref{theorem_stationarity}}

Our proof of Theorem \ref{theorem_stationarity} relies on the arguments given by \cite{ferland2006} in their proof of Corollary 1. Let 
\begin{align*}
    &\Lambda_t^{(0)} := \bracketS{\lambda_{1t}^{(0)}, \lambda_{2t}^{(0)}, ..., \lambda_{Nt}^{(0)}}';\\
    &\bY_t^{(0)} := \bracketS{N_{1t}(\lambda_{1t}^{(0)}), N_{2t}(\lambda_{2t}^{(0)}), ..., N_{Nt}(\lambda_{Nt}^{(0)})}',
\end{align*} where $\{\lambda_{it}^{(0)}: i=1,2,..., N, t\in\bZ\}$ are IID positive random variables with mean 1. For each $n\geq 1$, we define $\{\bY_t^{(n)}: t\in\bZ\}$ and $\{\Lambda_t^{(n)}: t\in\bZ\}$ through following recursion:
\begin{equation}\label{proposition_stationarity_eq1}
\begin{aligned}
    &\bY_t^{(n)} = (N_{1t}(\lambda_{1t}^{(n)}), N_{2t}(\lambda_{2t}^{(n)}), ..., N_{Nt}(\lambda_{Nt}^{(n)}))';\\
    &\Lambda_t^{(n)} = \omega\vone_N + A(\bY_{t-1}^{(n-1)})\bY_{t-1}^{(n-1)} + \beta\Lambda_{t-1}^{(n-1)}.
\end{aligned}   
\end{equation}

\begin{claim}\label{clm_franke2010_lma7}
    $\{\bY_t^{(n)}: t\in\bZ\}$ is strictly stationary for each $n \geq 0$.
\end{claim}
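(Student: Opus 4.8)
The plan is to prove the stronger statement, by induction on $n$, that the bivariate array $\{(\bY_t^{(n)}, \Lambda_t^{(n)}): t\in\bZ\}$ is strictly stationary; the claim for $\{\bY_t^{(n)}\}$ then follows by taking marginals. The mechanism is to exhibit $(\bY_t^{(n)}, \Lambda_t^{(n)})$ as a \emph{time-invariant} measurable function of finitely many consecutive blocks of primitive randomness, and then to invoke the elementary fact that a fixed measurable image of a strictly stationary sequence is strictly stationary. To this end I would first bundle the driving variables into slices $Z_t := (N_{1t}, \dots, N_{Nt}, \lambda_{1t}^{(0)}, \dots, \lambda_{Nt}^{(0)})$, $t\in\bZ$. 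Since the processes $\{N_{it}\}$ are mutually independent and independent of the IID variables $\{\lambda_{it}^{(0)}\}$, and both families are independent across $t$, the sequence $\{Z_t: t\in\bZ\}$ is IID and hence strictly stationary.

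\textbf{Base case} ($n=0$): both $\Lambda_t^{(0)} = (\lambda_{1t}^{(0)}, \dots, \lambda_{Nt}^{(0)})'$ and $\bY_t^{(0)} = (N_{1t}(\lambda_{1t}^{(0)}), \dots, N_{Nt}(\lambda_{Nt}^{(0)}))'$ are fixed measurable functions of the single slice $Z_t$, so $(\bY_t^{(0)}, \Lambda_t^{(0)}) = \phi_0(Z_t)$ is IID in $t$. \textbf{Inductive step}: assume $(\bY_t^{(n-1)}, \Lambda_t^{(n-1)}) = \phi_{n-1}(Z_t, Z_{t-1}, \dots, Z_{t-(n-1)})$ for one measurable map $\phi_{n-1}$ independent of $t$. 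The recursion $\Lambda_t^{(n)} = \omega\vone_N + A(\bY_{t-1}^{(n-1)})\bY_{t-1}^{(n-1)} + \beta\Lambda_{t-1}^{(n-1)}$ makes $\Lambda_t^{(n)}$ a fixed function of $(\bY_{t-1}^{(n-1)}, \Lambda_{t-1}^{(n-1)})$, hence of $Z_{t-1}, \dots, Z_{t-n}$, so it depends only on slices strictly before $t$. Feeding its entries $\lambda_{it}^{(n)}$ into the time-$t$ Poisson processes then gives $\bY_t^{(n)}$ as a fixed function of $Z_t$ and $\Lambda_t^{(n)}$, whence $(\bY_t^{(n)}, \Lambda_t^{(n)}) = \phi_n(Z_t, Z_{t-1}, \dots, Z_{t-n})$ for a single map $\phi_n$ valid for every $t$. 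As $\{Z_t\}$ is strictly stationary and $\phi_n$ reads only fixed relative offsets, all finite-dimensional laws of $\{(\bY_t^{(n)}, \Lambda_t^{(n)})\}$ are shift-invariant, closing the induction.

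The main obstacle, and essentially the only delicate point, is the Poisson-construction step. I would need to check that for each fixed $t$ the map sending $(\{N_{it}\}_i, \Lambda_t^{(n)})$ to $(N_{it}(\lambda_{it}^{(n)}))_i$ is jointly measurable, and, crucially, that $\Lambda_t^{(n)}$ is genuinely independent of the time-$t$ Poisson increments (which holds because it is built from $Z_{t-1}, \dots, Z_{t-n}$ only). This independence is what guarantees that evaluating a unit-intensity process at the random, but independent, intensity $\lambda_{it}^{(n)}$ produces the desired conditional Poisson law while preserving the time-invariant factor-map structure that drives the stationarity conclusion. Once this bookkeeping is settled, strict stationarity of $\{\bY_t^{(n)}\}$ is immediate as a marginal of the stationary bivariate process.
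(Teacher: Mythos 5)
Your proof is correct, but it proceeds by a genuinely different mechanism than the paper's. The paper argues at the level of laws: it writes the finite-dimensional distributions of $\{\bY_t^{(n)}\}$ explicitly through the conditional Poisson probability mass function,
$\prob\bracketL{\bY_{1+h}^{(n)} = \vy_1, \dots, \bY_{t+h}^{(n)} = \vy_t} = \expct\bracketS{\prod_{k=1}^t\prod_{i=1}^N\frac{(\lambda_{i,k+h}^{(n)})^{y_{ik}}}{y_{ik}!}e^{-\lambda_{i,k+h}^{(n)}}}$,
and then runs an induction in which shift-invariance of the law of $\{\Lambda_t^{(n)}\}$ (inherited through the recursion from stationarity at step $n-1$) forces $h$-invariance of these expressions. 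You instead argue pathwise: you bundle the primitive randomness into IID slices $Z_t$ and exhibit $(\bY_t^{(n)}, \Lambda_t^{(n)})$ as a single time-invariant measurable factor map $\phi_n(Z_t,\dots,Z_{t-n})$, so stationarity follows from the elementary fact that fixed sliding-window functionals of a stationary sequence are stationary. Two remarks on the comparison. First, your strengthened induction hypothesis --- joint stationarity of the pair $(\bY_t^{(n)}, \Lambda_t^{(n)})$ --- is a genuine improvement in rigor: the paper's inductive step assumes only that $\{\bY_t^{(n-1)}\}$ and $\{\Lambda_t^{(n-1)}\}$ are \emph{each} strictly stationary, yet concludes stationarity of $\Lambda_t^{(n)} = \omega\vone_N + A(\bY_{t-1}^{(n-1)})\bY_{t-1}^{(n-1)} + \beta\Lambda_{t-1}^{(n-1)}$, which is a function of both processes simultaneously and therefore really requires their joint stationarity; your factor-map formulation delivers exactly this without extra effort. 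Second, one small clarification on your "delicate point": the independence of $\Lambda_t^{(n)}$ from the time-$t$ Poisson increments is not actually needed for the stationarity conclusion itself (the factor-map argument uses only the representation and stationarity of $\{Z_t\}$); it is needed to verify that the construction yields the intended conditional Poisson law, which is what the paper's pmf computation uses directly. What the paper's route buys is an explicit handle on the distributions, which is in the spirit of the construction it borrows from Ferland et al.; what your route buys is brevity, joint stationarity for free, and no computation with densities.
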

\begin{proof}
Since $\{N_{it}(\cdot): i=1,2,..., N, t\in\bZ\}$ are independent Poisson processes with unit intensity, then for any $t$ and $h$ we have
\begin{equation}\label{clm_franke2010_lma7_eq1}
\begin{aligned}
    &\prob\bracketL{\bY_{1+h}^{(n)} = \vy_1, ..., \bY_{t+h}^{(n)} = \vy_t}\\
    = &\expct\bracketS{\prob\bracketL{\bY_{1+h}^{(n)} = \vy_1, ..., \bY_{t+h}^{(n)} = \vy_t\left|\Lambda_{1+h}^{(n)}, ..., \Lambda_{t+h}^{(n)}\right.}}\\
    = &\expct\bracketS{\prod_{k=1}^t\prod_{i=1}^N\frac{\bracketS{\lambda_{i,k+h}^{(n)}}^{y_{ik}}}{y_{ik}!}e^{-\lambda_{i,k+h}^{(n)}}}.
\end{aligned}
\end{equation}

When $n=0$, $\prob\bracketL{\bY_{1+h}^{(0)} = \vy_1, ..., \bY_{t+h}^{(0)} = \vy_t}$ is $h$-invariant for any $t$ and $h$, by \eqref{clm_franke2010_lma7_eq1} and the IID of $\{\lambda_{it}^{(0)}: i=1,2,..., N, t\in\bZ\}$. Therefore $\{\bY_t^{(0)}: t\in\bZ\}$ is strictly stationary. Assume that $\{\bY_t^{(n-1)}: t\in\bZ\}$ and $\{\Lambda_t^{(n-1)}: t\in\bZ\}$ are strictly stationary, then $\{\Lambda_t^{(n)}: t\in\bZ\}$ is also strictly stationary since $\Lambda_t^{(n)} = \omega\vone_N + A(\bY_{t-1}^{(n-1)})\bY_{t-1}^{(n-1)} + \beta\Lambda_{t-1}^{(n-1)}$. According to \eqref{clm_franke2010_lma7_eq1} and the strict stationarity of $\{\Lambda_t^{(n)}: t\in\bZ\}$, we have $\{\bY_t^{(n)}: t\in\bZ\}$ being strictly stationary too. Claim \ref{clm_franke2010_lma7} can be proved by induction.

\end{proof}

Let $\norm{\vx}_1 = |x_1| + |x_2| + ... + |x_N|$ for vector $\vx = (x_1, x_2, ..., x_N)'$. In following claim we prove the convergence of $\bY_t^{(n)}$ as $n \to \infty$.
\begin{claim}\label{clm_l1contraction_y}
    $\expct\norm{\bY_t^{(n+1)} - \bY_t^{(n)}}_1 \leq C\rho^n$ for some constants $C>0$ and $0<\rho<1$.
\end{claim}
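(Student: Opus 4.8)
The plan is to transfer everything to the intensity process and then prove a geometric contraction for the intensities. First I would exploit the common Poisson driving noise: since every iterate is built from the same unit-rate processes $N_{it}$ and each $\lambda_{it}^{(n)}$ is measurable with respect to information strictly before time $t$, I would condition on $(\Lambda_t^{(n+1)},\Lambda_t^{(n)})$. For fixed $(i,t)$, $N_{it}(\cdot)$ is independent of these intensities, and (taking w.l.o.g. $\lambda_{it}^{(n+1)}\ge\lambda_{it}^{(n)}$) the difference $N_{it}(\lambda_{it}^{(n+1)})-N_{it}(\lambda_{it}^{(n)})$ is the Poisson increment of $N_{it}$ over an interval of length $\lambda_{it}^{(n+1)}-\lambda_{it}^{(n)}$, hence nonnegative with conditional mean equal to that length. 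Taking expectations gives the pointwise identity $\expct\abs{N_{it}(\lambda_{it}^{(n+1)})-N_{it}(\lambda_{it}^{(n)})}=\expct\abs{\lambda_{it}^{(n+1)}-\lambda_{it}^{(n)}}$, so that $\expct\norm{\bY_t^{(n+1)}-\bY_t^{(n)}}_1=\expct\norm{\Lambda_t^{(n+1)}-\Lambda_t^{(n)}}_1$, and it suffices to bound the right-hand side.

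Next I would isolate the threshold nonlinearity. Writing $g(y):=\alpha^{(1)}1_{\{y\ge r\}}y+\alpha^{(2)}1_{\{y<r\}}y$, the $i$-th component of $A(\bY)\bY$ is $g(y_i)+\xi\sum_j w_{ij}y_j$. The crucial elementary fact I would establish is the Lipschitz estimate
\[
\abs{g(y)-g(y')}\le C_0\,\abs{y-y'},\qquad C_0:=\max\bracketL{\alpha^{(1)},\alpha^{(2)},\abs{\alpha^{(1)}r-\alpha^{(2)}(r-1)}},
\]
for all nonnegative integers $y,y'$, proved by splitting into the cases where $y,y'$ lie on the same side of $r$ (giving slope $\alpha^{(1)}$ or $\alpha^{(2)}$) and where they straddle $r$; in the straddling case, writing $y=r+a$ and $y'=(r-1)-b$ with $a,b\ge0$ so that $y-y'=1+a+b$, the difference equals $[\alpha^{(1)}r-\alpha^{(2)}(r-1)]+\alpha^{(1)}a+\alpha^{(2)}b$, and the triangle inequality yields the bound $C_0\,\abs{y-y'}$. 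This is exactly where the unusual middle term of the contraction condition \eqref{assumption_contraction} is born.

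Finally I would set up the contraction. Subtracting the $\Lambda$-recursions at levels $n+1$ and $n$, taking $\abs{\cdot}$ of the $i$-th component and then expectations, and combining the Lipschitz bound with the coupling identity of the first step, I obtain for $v_{it}^{(n)}:=\expct\abs{\lambda_{it}^{(n+1)}-\lambda_{it}^{(n)}}$ the inequality $v_{it}^{(n)}\le(C_0+\beta)\,v_{i,t-1}^{(n-1)}+\xi\sum_j w_{ij}\,v_{j,t-1}^{(n-1)}$. The decisive move is to pass to the supremum over nodes rather than to the $\ell^1$ sum: bounding each $v^{(n-1)}$ by $\max_k v_{k,t-1}^{(n-1)}$ and using the row-stochasticity $\sum_j w_{ij}=1$ gives $\max_i v_{it}^{(n)}\le(C_0+\xi+\beta)\max_k v_{k,t-1}^{(n-1)}$. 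Iterating, with $\rho:=C_0+\xi+\beta<1$ by \eqref{assumption_contraction}, and noting that $\max_i v_{it}^{(0)}$ is a finite constant independent of $t$ (the level-$0$ and level-$1$ iterates have finite first moments, and joint stationarity follows as in Claim \ref{clm_franke2010_lma7}), I conclude $\max_i v_{it}^{(n)}\le\rho^n\max_i v_{it}^{(0)}$; summing over the $N$ nodes and invoking the first step yields $\expct\norm{\bY_t^{(n+1)}-\bY_t^{(n)}}_1\le C\rho^n$ with $C=N\max_i v_{it}^{(0)}$.

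The main obstacle is precisely the network term: estimating $\norm{W\vx}_1$ by the induced $\ell^1$ operator norm would force the maximal column sum of $W$ to be strictly below one, which generally fails for a row-normalized adjacency matrix. Propagating the supremum over nodes instead of the $\ell^1$ total is what lets the row-stochasticity of $W$ rescue the argument and keep the network's contribution at exactly $\xi$, matching the coefficient in \eqref{assumption_contraction}.
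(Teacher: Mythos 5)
Your proof is correct, and while it follows the same skeleton as the paper's --- the Poisson coupling identity $\expct\norm{\bY_t^{(n+1)}-\bY_t^{(n)}}_1=\expct\norm{\Lambda_t^{(n+1)}-\Lambda_t^{(n)}}_1$, then the Lipschitz bound $\abs{\psi(y)-\psi(y')}\le\alpha^*\abs{y-y'}$ with $\alpha^*=\max\bracketL{\alpha^{(1)},\alpha^{(2)},\abs{\alpha^{(1)}r-\alpha^{(2)}(r-1)}}$, then a geometric contraction --- it executes the contraction step by a genuinely different and in fact more careful route. The paper stays in the $\ell^1$ norm and passes from $\expct\norm{(\alpha^*I_N+\xi W+\beta I_N)\abs{\bY_{t-1}^{(n)}-\bY_{t-1}^{(n-1)}}}_1$ to $\rho(\alpha^*I_N+\xi W+\beta I_N)\,\expct\norm{\bY_{t-1}^{(n)}-\bY_{t-1}^{(n-1)}}_1$, invoking Gershgorin only to bound the spectral radius by $\alpha^*+\xi+\beta$; but the spectral radius does not dominate the induced $\ell^1$ operator norm, which equals the maximal column sum of $\alpha^*I_N+\xi W+\beta I_N$ and exceeds $\alpha^*+\xi+\beta$ whenever some column sum of the row-normalized $W$ exceeds one (already for a star graph, where the hub's column sums to the hub's degree). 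So, as literally written, that step of the paper's proof is not valid in general, even though the claim itself is true. Your device --- propagating $v_{it}^{(n)}=\expct\abs{\lambda_{it}^{(n+1)}-\lambda_{it}^{(n)}}$ through the recursion and taking the maximum over nodes rather than the sum --- is exactly what repairs it: row-stochasticity $\sum_j w_{ij}=1$ makes the $\ell^\infty$ induced norm of the iteration matrix exactly $\alpha^*+\xi+\beta<1$ under \eqref{assumption_contraction}, and converting back to $\ell^1$ at the end costs only a factor of $N$, which is harmless because $N$ is fixed in Theorem~\ref{theorem_stationarity} (the paper's own constant $C=\expct\norm{\Lambda^{(1)}_{t-n}-\Lambda^{(0)}_{t-n}}_1$ likewise grows with $N$). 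Two smaller points: your straddling-case proof of the Lipschitz bound (write $y=r+a$, $y'=(r-1)-b$, apply the triangle inequality) is cleaner than the paper's difference-quotient computation and yields the same constant; and the paper's $\ell^1$ argument could alternatively be salvaged by iterating the componentwise vector inequality $n$ times before taking norms and invoking Gelfand's formula $\lim_k\norm{M^k}_1^{1/k}=\rho(M)$, at the price of an inflated constant --- your sup-norm argument avoids even that detour.
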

\begin{proof}
Since $N_{it}$ is a Poisson process with unit intensity, $N_{it}(\lambda_{it}^{(n+1)}) - N_{it}(\lambda_{it}^{(n)})$ is Poisson distributed with parameter $\lambda_{it}^{(n+1)} - \lambda_{it}^{(n)}$ assuming that $\lambda_{it}^{(n+1)} \geq \lambda_{it}^{(n)}$. Then it is easy to verify that
\begin{align*}
    &\expct\norm{\bY_t^{(n+1)} - \bY_t^{(n)}}_1\\
    = &\expct\bracketM{\expct\bracketS{\norm{\bY_t^{(n+1)} - \bY_t^{(n)}}_1\left|\Lambda_t^{(n+1)},\Lambda_t^{(n)}\right.}}\\
    = &\expct\bracketM{\expct\bracketS{\sum_{i=1}^N \abs{N_{it}(\lambda_{it}^{(n+1)}) - N_{it}(\lambda_{it}^{(n)})}\left|\Lambda_t^{(n+1)},\Lambda_t^{(n)}\right.}}\\
    = &\expct\bracketM{\sum_{i=1}^N \abs{\lambda_{it}^{(n+1)} - \lambda_{it}^{(n)}}}\\
    = &\expct\norm{\Lambda_t^{(n+1)} - \Lambda_t^{(n)}}_1.
\end{align*}

Recall that, from \eqref{proposition_stationarity_eq1},  $$\Lambda_t^{(n)} = \omega\vone_N + A(\bY_{t-1}^{(n-1)})\bY_{t-1}^{(n-1)} + \beta\Lambda_{t-1}^{(n-1)}.$$ 
Then
\begin{equation}\label{clm_l1contraction_y_eq1}
\begin{aligned}
    &\norm{\bY_t^{(n+1)} - \bY_t^{(n)}}_1\\
    \leq &\norm{A(\bY_{t-1}^{(n)})\bY_{t-1}^{(n)} - A(\bY_{t-1}^{(n-1)})\bY_{t-1}^{(n-1)}}_1 + \beta\norm{\Lambda_{t-1}^{(n)} - \Lambda_{t-1}^{(n-1)}}_1.
\end{aligned}
\end{equation}
To estimate the right-side terms of the above inequality, we 
define function $\psi(y) = \alpha^{(1)} 1_{\{y \geq r\}} y + \alpha^{(2)} 1_{\{y < r\}} y$ for $y\in\bN$.  For any $y,y'\in\bN$, we have 
\begin{itemize}
    \item If $y \geq r$ and $y' \geq r$, we have $|\psi(y') - \psi(y)| = \alpha^{(1)}|y' - y| \leq \alpha^*|y' - y|$ where $\alpha^* = \max\left\{\alpha^{(1)}, \alpha^{(2)}, \abs{\alpha^{(1)}r - \alpha^{(2)}(r-1)}\right\}$;
    \item If $ y < r$ and $y' < r$, we have $|\psi(y') - \psi(y)| = \alpha^{(2)}|y' - y| \leq \alpha^*|y' - y|$.
    \item If $y$ and $y'$ are on different sides of $r$, we assume that $y \geq r$ and $y' < r$ without loss of generality. Notice that $$\frac{\psi(y) - \psi(y')}{y - y'} = \frac{\alpha^{(1)}y - \alpha^{(2)}y'}{y - y'} = \alpha^{(2)} + (\alpha^{(1)} - \alpha^{(2)})\frac{y}{y-y'}.$$ 
    When $\alpha^{(1)} \geq \alpha^{(2)}$, we see $$0 < \frac{\psi(y) - \psi(y')}{y - y'} \leq \alpha^{(2)} + (\alpha^{(1)} - \alpha^{(2)})\frac{y}{y-(r-1)} \leq \alpha^{(2)} + (\alpha^{(1)} - \alpha^{(2)})r;$$ 
    When $\alpha^{(1)} < \alpha^{(2)}$, we see 
    $$\alpha^{(2)} \geq \frac{\psi(y) - \psi(y')}{y - y'} \geq \alpha^{(2)} + (\alpha^{(1)} - \alpha^{(2)})\frac{y}{y-(r-1)} \geq \alpha^{(2)} + (\alpha^{(1)} - \alpha^{(2)})r.$$
\end{itemize}
    Combining above cases, we obtain that 
\begin{equation}\label{clm_l1contraction_y_eq_psi}
    |\psi(y') - \psi(y)| \leq \alpha^*|y' - y|
\end{equation} for any $y', y \in \bN$.

Therefore, 
\begin{align}
    &\abs{\bracketS{A(\bY_{t-1}^{(n)})\bY_{t-1}^{(n)} - A(\bY_{t-1}^{(n-1)})\bY_{t-1}^{(n-1)}}_i} \notag\\
    = &\abs{\psi(y_{i,t-1}^{(n)}) - \psi(y_{i,t-1}^{(n-1)}) + \xi\sum_{j=1}^N w_{ij}(y_{j,t-1}^{(n)} - y_{j,t-1}^{(n-1)})} \label{clm_l1contraction_y_eq2}\\
    \leq &\alpha^*\abs{y_{i,t-1}^{(n)} - y_{i,t-1}^{(n-1)}} + \xi\sum_{j=1}^N w_{ij}\abs{y_{j,t-1}^{(n)} - y_{j,t-1}^{(n-1)}} \notag
\end{align} for $i = 1, 2, ..., N$, where $(\bY)_i$ is the $i$-th element of $\bY$.

Combining \eqref{clm_l1contraction_y_eq1} and \eqref{clm_l1contraction_y_eq2}, we get
\begin{align*}
    &\expct\norm{\bY_t^{(n+1)} - \bY_t^{(n)}}_1\\
    \leq &\expct\norm{(\alpha^*I_N + \xi W + \beta I_N)(\bY_{t-1}^{(n)} - \bY_{t-1}^{(n-1)})}_1\\
    \leq &\rho(\alpha^*I_N + \xi W + \beta I_N)\expct\norm{\bY_{t-1}^{(n)} - \bY_{t-1}^{(n-1)}}_1\\
    \leq &(\alpha^* + \xi + \beta)\expct\norm{\bY_{t-1}^{(n)} - \bY_{t-1}^{(n-1)}}_1
\end{align*} 
where $\rho(\cdot)$ denotes the spectral radius, and the last inequality is due to the Gershgorin circle theorem. According to \eqref{assumption_contraction}, we can find $\rho = \alpha^* + \xi + \beta < 1$ such that
\begin{equation*}
\begin{aligned}
    &\expct\norm{\bY_t^{(n+1)} - \bY_t^{(n)}}_1\\
    \leq &\rho\expct\norm{\bY_{t-1}^{(n)} - \bY_{t-1}^{(n-1)}}_1\\
    \leq &\rho^n\expct\norm{\bY_{t-n}^{(1)} - \bY_{t-n}^{(0)}}_1\\
    = &\rho^n\expct\norm{\Lambda_{t-n}^{(1)} - \Lambda_{t-n}^{(0)}}_1\\
    \leq &C\rho^n
\end{aligned}
\end{equation*} 
for $C = \expct\norm{\Lambda_{t-n}^{(1)} - \Lambda_{t-n}^{(0)}}_1 < \infty$.

\end{proof}

By Claim \ref{clm_l1contraction_y},
\begin{equation*}
\begin{aligned}
    \prob\bracketL{\bY_t^{(n+1)} \neq \bY_t^{(n)}} = &\sum_{h=1}^{\infty}\prob\bracketL{\norm{\bY_t^{(n+1)} - \bY_t^{(n)}}_1 = h}\\
    \leq &\expct\norm{\bY_t^{(n+1)} - \bY_t^{(n)}}_1\\
    \leq &C\rho^n.
\end{aligned}
\end{equation*} 
This implies that $\sum_{n = 1}^\infty\prob\bracketL{\bY_t^{(n+1)} \neq \bY_t^{(n)}} < \infty$ and $$\prob\bracketL{\bigcap_{n=1}^{\infty}\bigcup_{k=n}^{\infty}\bracketM{\bY_t^{(k+1)} \neq \bY_t^{(k)}}} = 0$$ according to the Borel-Cantelli lemma. This means  that, with probability one,  there exists $M$ such that for all $n > M$, $\bY_t^{(n)}$ equals to some $\bY_t$ with integer components. That is, $\bY_t = \lim_{n\to\infty}\bY_t^{(n)}$ exists almost surely. Apparently, $\{\bY_t: t\in\bZ\}$ is strictly stationary since $\{\bY_t^{(n)}: t\in\bZ\}$ is strictly stationary for each $n \geq 0$, according to Claim \ref{clm_franke2010_lma7}.

At last, by Claim \ref{clm_l1contraction_y}, we also have
$$\expct\norm{\bY_t^{(n+m)} - \bY_t^{(n)}}_1 \leq \sum_{k=0}^{m-1} \expct\norm{\bY_t^{(n+k+1)} - \bY_t^{(n+k)}}_1 \leq C\rho^n\sum_{k=0}^{m-1}\rho^k,$$ for any $n, m\in\bN$. Therefore, $\{\bY_t^{(n)}: n\geq 0\}$ is a Cauchy sequence in $\bL^1$, hence $\expct\norm{\bY_t}_1 < \infty$.

\subsection{Proof of Theorem \ref{theorem_C_ptngarch}}

From Lemma \ref{lma_berkes2003_2}, we have
$$\lambda_{it}(\nu) = \sum_{k=1}^{\infty}\beta^{k-1}\left[\omega + \alpha_{i,t-k} y_{i,t-k} + \xi\sum_{j=1}^N w_{ij}y_{j,t-k}\right]$$ and
\begin{equation}\label{eq_lambda_bound}
    \sup_{NT\geq 1}\sup_{(i,t)\in D_{NT}}\sup_{\nu\in\Theta\times\bZ_{+}}|\lambda_{it}(\nu)| < \infty
\end{equation} 
with probability one, where $\alpha_{i,t-k} = \alpha^{(1)}1_{\{y_{i,t-k}\geq r\}} + \alpha^{(2)}1_{\{y_{i,t-k}<r\}}$. Given initial values $\Tilde{\lambda}_{i0} = 0$ for $i = 1, 2, ..., N$, we can replace $\lambda_{it}(\nu)$ with $\Tilde{\lambda}_{it}(\nu)$ and get $$\Tilde{\lambda}_{it}(\nu) = \sum_{k=1}^t\beta^{k-1}\left[\omega + \alpha_{i,t-k} y_{i,t-k} + \xi\sum_{j=1}^N w_{ij}y_{j,t-k}\right]$$ for $i = 1, 2, ..., N$ and $t \geq 1$. Hence
\begin{equation}\label{eq_lambda_diff}
    \lambda_{it}(\nu) - \Tilde{\lambda}_{it}(\nu) = \beta^t\lambda_{i0}(\nu).
\end{equation}

Now we are ready to prove the consistency of $\hat{\nu}_{NT}$ when $T \to \infty$ and $N \to \infty$. The proof is broken up into Claim \ref{clm_lik_func_conv_ptngarch} to Claim \ref{clm_idtfy_ptngarch} below: Claim \ref{clm_lik_func_conv_ptngarch} shows that the choice of initial values is asymptotically negligible; Claims \ref{clm_lik_func_bound_1_ptngarch} and \ref{clm_wd_lik_func_ptngarch} verify the weak dependence of $\{l_{it}(\nu): (i,t)\in D_{NT}, NT\geq 1\}$, and facilitate the adoption of LLN; Claim \ref{clm_idtfy_ptngarch} is concerned with the identifiability of the true parameters $\nu_0$.
\begin{claim}\label{clm_lik_func_conv_ptngarch}
    For any $\nu\in\Theta\times\bZ_{+}$, $|L_{NT}(\nu) - \Tilde{L}_{NT}(\nu)| \convp 0$ as $T \to \infty$ and $N \to \infty$.
\end{claim}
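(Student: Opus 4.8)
The plan is to prove the stronger $L_1$ statement $\expct|L_{NT}(\nu) - \tilde{L}_{NT}(\nu)|\to 0$, since $L_1$ convergence implies the asserted convergence in probability. First I would pass to the summand level via $|L_{NT}(\nu) - \tilde{L}_{NT}(\nu)|\le \frac{1}{NT}\sum_{(i,t)\in D_{NT}}|l_{it}(\nu) - \tilde{l}_{it}(\nu)|$, so that the whole problem reduces to a bound on $\expct|l_{it}(\nu) - \tilde{l}_{it}(\nu)|$ that is uniform in $(i,t)$ and decays geometrically in $t$.

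For the pointwise bound I would fix $\nu$ (so that $0\le\beta<1$ and $\omega>0$ are constants) and exploit the exact error formula $\lambda_{it}(\nu) - \tilde{\lambda}_{it}(\nu) = \beta^t\lambda_{i0}(\nu)$ from \eqref{eq_lambda_diff}. Because the $k=1$ term of the finite-sum representation of $\tilde{\lambda}_{it}(\nu)$ already dominates $\omega$, one has $\tilde{\lambda}_{it}(\nu)\ge\omega>0$ for all $t\ge 1$, whence $\lambda_{it}(\nu)\ge\tilde{\lambda}_{it}(\nu)\ge\omega$. Writing $l_{it}(\nu) - \tilde{l}_{it}(\nu) = y_{it}\bracketS{\log\lambda_{it}(\nu) - \log\tilde{\lambda}_{it}(\nu)} - \bracketS{\lambda_{it}(\nu) - \tilde{\lambda}_{it}(\nu)}$ and using the elementary inequality $0\le\log\lambda_{it}(\nu) - \log\tilde{\lambda}_{it}(\nu)\le \bracketS{\lambda_{it}(\nu) - \tilde{\lambda}_{it}(\nu)}/\tilde{\lambda}_{it}(\nu)\le \beta^t\lambda_{i0}(\nu)/\omega$, I would arrive at
$$|l_{it}(\nu) - \tilde{l}_{it}(\nu)|\le \beta^t\lambda_{i0}(\nu)\bracketS{\frac{y_{it}}{\omega}+1}.$$

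Then I would take expectations and apply the Cauchy--Schwarz inequality to obtain $\expct|l_{it}(\nu) - \tilde{l}_{it}(\nu)|\le\beta^t\bracketS{\omega^{-1}\|\lambda_{i0}(\nu)\|_2\,\|y_{it}\|_2 + \|\lambda_{i0}(\nu)\|_1}$. Assumption \ref{assumption_y_ptngarch}(a) bounds $\|y_{it}\|_2$ uniformly in $(i,t)$, and applying Minkowski's inequality to the series representation of $\lambda_{i0}(\nu)$ supplied by Lemma \ref{lma_berkes2003_2} (using $\sum_j w_{ij}=1$, $\alpha_{i,t-k}\le\max\{\alpha^{(1)},\alpha^{(2)}\}$ and the geometric weights $\beta^{k-1}$) yields $\sup_i\|\lambda_{i0}(\nu)\|_2<\infty$. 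This produces a constant $M$ with $\expct|l_{it}(\nu) - \tilde{l}_{it}(\nu)|\le M\beta^t$ uniformly in $(i,t)$, so that
$$\expct|L_{NT}(\nu) - \tilde{L}_{NT}(\nu)|\le\frac{M}{NT}\sum_{i=1}^N\sum_{t=1}^T\beta^t = \frac{M}{T}\sum_{t=1}^T\beta^t\le\frac{M\beta}{T(1-\beta)}\longrightarrow 0.$$
Notice that the dimension $N$ cancels, so the conclusion in fact requires only $T\to\infty$.

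The hard part will be the uniform second-moment control of $\lambda_{i0}(\nu)$ and, relatedly, the treatment of $\expct[\lambda_{i0}(\nu)y_{it}]$: the initial conditional mean $\lambda_{i0}(\nu)$ and the later count $y_{it}$ are only weakly dependent, so the expectation cannot be factored directly. Cauchy--Schwarz sidesteps this dependence, but it transfers the burden onto showing $\sup_i\|\lambda_{i0}(\nu)\|_2<\infty$, and it is precisely here that the contraction $\beta<1$, the row-normalisation of $W$, and the uniform $2p$-th moment hypothesis of Assumption \ref{assumption_y_ptngarch}(a) must be combined.
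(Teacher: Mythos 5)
Your proof is correct, and it shares the paper's skeleton: both proofs reduce the claim to a summand-level bound of the form $|l_{it}(\nu)-\tilde{l}_{it}(\nu)| \le \beta^t\lambda_{i0}(\nu)\left(\frac{y_{it}}{\omega}+1\right)$, obtained from \eqref{eq_lambda_diff} together with $\tilde{\lambda}_{it}(\nu)\ge\omega$, and both observe that the geometric decay in $t$ makes the dimension $N$ cancel, so only $T\to\infty$ is really needed. The difference lies in how the product $\lambda_{i0}(\nu)\,y_{it}$ is integrated. The paper first invokes the almost-sure uniform bound \eqref{eq_lambda_bound} to replace $\beta^t\lambda_{i0}(\nu)$ by $C\rho^t$, and then applies Markov's inequality termwise to $\expct\left|\frac{y_{it}}{\omega}+1\right|$; this implicitly treats the random, $y$-dependent quantity $C$ as a deterministic constant that can be pulled out of the expectation, and \eqref{eq_lambda_bound} itself (a supremum over an unbounded family of indices) is stronger than what Lemma \ref{lma_berkes2003_2} literally provides. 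Your route---bounding $\expct|l_{it}(\nu)-\tilde{l}_{it}(\nu)|$ directly via Cauchy--Schwarz, and controlling $\sup_i\norm{\lambda_{i0}(\nu)}_2$ by Minkowski's inequality applied to the series representation of Lemma \ref{lma_berkes2003_2}, using $\sum_j w_{ij}=1$ and Assumption \ref{assumption_y_ptngarch}(a)---sidesteps both issues and delivers the stronger $L_1$ conclusion, so it is the technically cleaner argument. One caveat, which applies equally to the paper's proof: the moments of the pre-sample values $y_{j,-k}$, $k\ge 1$, enter through $\lambda_{i0}(\nu)$, while Assumption \ref{assumption_y_ptngarch}(a) literally covers only indices in $D_{NT}$; this is harmless given the stationarity over time from Theorem \ref{theorem_stationarity}, but it deserves a sentence in your write-up.
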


\begin{proof}
By \eqref{eq_lambda_bound} and \eqref{eq_lambda_diff}, we have
\begin{equation}\label{clm_lik_func_conv_eq1}
    \sup_{\nu\in\Theta\times\bZ_{+}}|\lambda_{it}(\nu) - \Tilde{\lambda}_{it}(\nu)| \leq C\rho^t
\end{equation} almost surely. Therefore,
\begin{align*}
    &|L_{NT}(\nu) - \Tilde{L}_{NT}(\nu)|\\
    \leq &\frac{1}{NT}\sum_{(i,t)\in D_{NT}}\left|y_{it}\log\left[\frac{\lambda_{it}(\nu)}{\Tilde{\lambda}_{it}(\nu)}\right] - [\lambda_{it}(\nu) - \Tilde{\lambda}_{it}(\nu)]\right|\\
    \leq &\frac{1}{NT}\sum_{(i,t)\in D_{NT}}\left[y_{it}\left|\frac{\lambda_{it}(\nu) - \Tilde{\lambda}_{it}(\nu)}{\Tilde{\lambda}_{it}(\nu)}\right| + \left|\lambda_{it}(\nu) - \Tilde{\lambda}_{it}(\nu)\right|\right]\\
    \leq &\frac{1}{NT}\sum_{(i,t)\in D_{NT}}C\rho^t\left(\frac{y_{it}}{\omega} + 1\right)
\end{align*} almost surely. By the Markov inequality, for any $\delta > 0$,
\begin{align*}
    &\prob\left\{|L_{NT}(\nu) - \Tilde{L}_{NT}(\nu)| > \delta\right\}\\
    \leq &\frac{1}{\delta NT}\sum_{i=1}^N\sum_{t=1}^TC_1\rho^t\expct\left|\frac{y_{it}}{\omega} + 1\right|\\
    \leq &\frac{1}{\delta NT}\sum_{i=1}^N\sum_{t=1}^T C_2\rho^t \quad \mbox{(because of Assumption 3.2(a))}\\
    \leq &\frac{1}{\delta T}\frac{C_2\rho}{1-\rho} \rightarrow 0.
\end{align*} as $N \to \infty$ and $T \to \infty$.
\end{proof}

In the following proofs, for a random variable $X$, we denote its $\bL^p$-norm by $\norm{X}_p = (\expct|X|^p)^{1/p}$.
\begin{claim}\label{clm_lik_func_bound_1_ptngarch}
    The functions $l_{it}(\nu)$ are uniformly $\bL^p$-bounded for some $p > 1$, i.e. $$\sup_{NT\geq 1}\sup_{(i,t)\in D_{NT}}\sup_{\nu\in\Theta\times\bZ_{+}} \norm{l_{it}(\nu)}_p < \infty.$$
\end{claim}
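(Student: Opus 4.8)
The plan is to split $l_{it}(\nu) = y_{it}\log\lambda_{it}(\nu) - \lambda_{it}(\nu)$ and bound each piece in $\bL^p$ uniformly, the key point being that Assumption \ref{assumption_y_ptngarch}(a) supplies a moment of order $2p$, exactly twice the target exponent, which leaves room for a Cauchy--Schwarz step. By Minkowski's inequality it suffices to bound $\norm{\lambda_{it}(\nu)}_{2p}$ and $\norm{y_{it}\log\lambda_{it}(\nu)}_{p}$ uniformly over $(i,t)$ and over the (compact) parameter space of Assumption \ref{assumption_theta_ptngarch}(a); the remaining term $\norm{\lambda_{it}(\nu)}_{p}\le\norm{\lambda_{it}(\nu)}_{2p}$ is then free.

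First I would control $\norm{\lambda_{it}(\nu)}_{2p}$ using the series representation from Lemma \ref{lma_berkes2003_2}. Applying Minkowski's inequality term by term gives
\begin{equation*}
\norm{\lambda_{it}(\nu)}_{2p} \le \sum_{k=1}^{\infty}\beta^{k-1}\norm{\omega + \alpha_{i,t-k}y_{i,t-k} + \xi\sum_{j=1}^N w_{ij}y_{j,t-k}}_{2p}.
\end{equation*}
Since $\alpha_{i,t-k}\le\max\{\alpha^{(1)},\alpha^{(2)}\}$ is bounded on the compact parameter space and $\sum_{j}w_{ij}=1$ by row-normalization, a further application of Minkowski together with $\sup_{(i,t)}\norm{y_{it}}_{2p}=:M<\infty$ (Assumption \ref{assumption_y_ptngarch}(a)) bounds each summand by $\omega + (\max\{\alpha^{(1)},\alpha^{(2)}\}+\xi)M$, uniformly in $(i,t)$. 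Because the contraction condition \eqref{assumption_contraction} forces $\beta$ to stay bounded away from $1$ on the compact parameter space, the geometric series converges to a uniform bound, yielding $\sup\norm{\lambda_{it}(\nu)}_{2p}<\infty$.

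Next I would handle the logarithmic term. Since every summand of the series for $\lambda_{it}(\nu)$ is nonnegative and the $k=1$ term is at least $\omega\ge\delta$, we have $\lambda_{it}(\nu)\ge\delta$, so $\abs{\log\lambda_{it}(\nu)}\le\abs{\log\delta}+\lambda_{it}(\nu)$ using $\log x\le x$ for $x\ge1$ and the lower bound for $x<1$. Hence
\begin{equation*}
\norm{y_{it}\log\lambda_{it}(\nu)}_{p} \le \abs{\log\delta}\,\norm{y_{it}}_{p} + \norm{y_{it}\lambda_{it}(\nu)}_{p} \le \abs{\log\delta}\,M + \norm{y_{it}}_{2p}\,\norm{\lambda_{it}(\nu)}_{2p},
\end{equation*}
where the last inequality is Cauchy--Schwarz (H\"older with conjugate exponents $2,2$ applied inside the $\bL^p$ norm). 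Both factors on the right are uniformly bounded by the previous step and Assumption \ref{assumption_y_ptngarch}(a), which completes the bound.

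The main obstacle is precisely the product $y_{it}\log\lambda_{it}(\nu)$: one must control a logarithm of an unbounded quantity multiplied by the unbounded response. The resolution is to trade the logarithm for a linear bound and absorb the resulting product via Cauchy--Schwarz, which is affordable only because the moment hypothesis is imposed at order $2p$ rather than $p$; this is exactly what forces the doubling in Assumption \ref{assumption_y_ptngarch}(a). A minor point to track is that all suprema should be read over the compact parameter space of Assumption \ref{assumption_theta_ptngarch}(a), on which $\omega\ge\delta$ and $\beta\le\beta_{\max}<1$, so that the lower bound on $\lambda_{it}(\nu)$ and the convergence of the geometric series are both uniform.
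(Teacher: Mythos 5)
Your proof is correct and takes essentially the same route as the paper's: a triangle-inequality split of $l_{it}(\nu)$, H\"older with exponents $(2,2)$ to separate $y_{it}$ from the $\lambda$-factor (which is exactly where the $2p$-moment in Assumption \ref{assumption_y_ptngarch}(a) is spent), the linear bound on $\log^{+}\lambda_{it}(\nu)$ together with the lower bound $\lambda_{it}(\nu)\ge\omega$ for the negative part of the logarithm, and the geometric series in $\beta$ for the uniform bound on $\lambda_{it}(\nu)$. The only differences are cosmetic --- you bound the logarithm before applying H\"older rather than after, and you derive the uniform $\bL^{2p}$ bound on $\lambda_{it}(\nu)$ explicitly from the series representation of Lemma \ref{lma_berkes2003_2}, which is if anything slightly more careful than the paper's appeal to its almost-sure bound \eqref{eq_lambda_bound}.
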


\begin{proof}
According to the H\"{o}lder inequality, we have
\begin{align*}
    \norm{l_{it}(\nu)}_p = &\norm{y_{it}\log\lambda_{it}(\nu) - \lambda_{it}(\nu)}_p\\
    \leq &\norm{y_{it}\log\lambda_{it}(\nu)}_p + \norm{\lambda_{it}(\nu)}_p\\
    \leq &\norm{y_{it}}_{2p}\norm{\log\lambda_{it}(\nu)}_{2p} + \norm{\lambda_{it}(\nu)}_p.
\end{align*} 
Note that
\begin{align*}
    &\sup_{\nu\in\Theta\times\bZ_{+}}\norm{\log\lambda_{it}(\nu)}_{2p}\\
    \leq &\sup_{\nu\in\Theta\times\bZ_{+}}\norm{\log^+\lambda_{it}(\nu)}_{2p} + \sup_{\nu\in\Theta\times\bZ_{+}}\norm{\log^-\lambda_{it}(\nu)}_{2p}\\
    \leq &\sup_{\nu\in\Theta\times\bZ_{+}}\norm{\lambda_{it}(\nu) + 1}_{2p} + \sup_{\nu\in\Theta\times\bZ_{+}}\max\{-\log(\omega),0\}.
\end{align*} 
Then, by Assumption \ref{assumption_y_ptngarch}(a) and \eqref{eq_lambda_bound}, we complete the proof.
\end{proof}

\begin{claim}\label{clm_wd_lik_func_ptngarch}
    For any element $\nu$ in the parameter space satisfying Assumption \ref{assumption_theta_ptngarch}, the array of random fields $\{l_{it}(\nu): (i,t)\in D_{NT}, NT\geq 1\}$ is $\eta$-weakly dependent with coefficient $\Bar{\eta}_0(s) \leq Cs^{-\mu_0}$ where $\mu_0 > 2$.
\end{claim}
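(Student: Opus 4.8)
The plan is to show that $\{l_{it}(\nu)\}$ inherits $\eta$-weak dependence from $\{y_{it}\}$ by a truncation-and-coupling argument, transferring the decay of $\bar\eta_y$ and of the network weights through the MA$(\infty)$ representation of $\lambda_{it}$ in Lemma \ref{lma_berkes2003_2}. Fixing a radius $m$, I would define the truncated conditional mean
\[
\lambda_{it}^{(m)}(\nu) = \sum_{k=1}^{m}\beta^{k-1}\Big[\omega + \alpha_{i,t-k}y_{i,t-k} + \xi\sum_{|j-i|\leq m} w_{ij}y_{j,t-k}\Big]
\]
and set $l_{it}^{(m)}(\nu) = y_{it}\log\lambda_{it}^{(m)}(\nu) - \lambda_{it}^{(m)}(\nu)$, which depends only on the values $\{y_{j\tau}: |j-i|\leq m,\ t-m\leq\tau\leq t\}$ in a space-time box of radius $m$ around $(i,t)$. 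Because $\sum_j w_{ij}=1$ and $\beta<1$, the time-tail of $\lambda_{it}-\lambda_{it}^{(m)}$ is geometric ($\lesssim\beta^m$), while Assumption \ref{assumption_network_ptngarch} bounds the space-tail by $\sum_{|j-i|>m}w_{ij}\lesssim m^{-(b-1)}$. Using $|\log a-\log b|\leq|a-b|/\omega$ for $a,b\geq\omega$ together with the moment bound in Assumption \ref{assumption_y_ptngarch}(a) and H\"older's inequality (to absorb the unbounded factor $y_{it}$ against the small $\log$-difference), this yields a coupling error $\varepsilon(m):=\sup_{(i,t)}\|l_{it}(\nu)-l_{it}^{(m)}(\nu)\|_1\lesssim m^{-(b-1)}$.

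Next, for index sets $I,J$ with $\mathrm{dist}(I,J)\geq s$ and bounded Lipschitz test functions $f,g$, I would split
\begin{align*}
\mathrm{Cov}(f(l_I),g(l_J)) = {} & \mathrm{Cov}(f(l_I^{(m)}),g(l_J^{(m)})) + \mathrm{Cov}(f(l_I^{(m)}),g(l_J)-g(l_J^{(m)}))\\
& + \mathrm{Cov}(f(l_I)-f(l_I^{(m)}),g(l_J)).
\end{align*}
The two remainder covariances are bounded by $2\big(\|f\|_\infty\mathrm{Lip}(g)\,|J| + \|g\|_\infty\mathrm{Lip}(f)\,|I|\big)\,\varepsilon(m)$ via $|\mathrm{Cov}(A,C)|\leq 2\|A\|_\infty\,\expct|C|$ and $|g(l_J)-g(l_J^{(m)})|\leq\mathrm{Lip}(g)\sum_{(i,t)\in J}|l_{it}-l_{it}^{(m)}|$, which already matches the form required by the $\eta$-coefficient. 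For the leading term, choosing $m<s/2$ makes $f(l_I^{(m)})$ and $g(l_J^{(m)})$ functions of two blocks of the $y$-field separated by distance $\geq s-2m$, so the $\eta$-weak dependence of $\{y_{it}\}$ in Assumption \ref{assumption_y_ptngarch}(b) applies and gives a bound proportional to $\bar\eta_y(s-2m)$ times the Lipschitz constants of $f\circ l^{(m)}$ and $g\circ l^{(m)}$ viewed as functions of the box variables.

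The main obstacle is that these Lipschitz constants are not deterministic: differentiating $l_{it}^{(m)}$ in a past coordinate $y_{j,t-k}$ produces the factor $y_{it}/\lambda_{it}^{(m)}$, and differentiating in $y_{it}$ produces $\log\lambda_{it}^{(m)}$, both unbounded. I would handle this by a second truncation of the multiplicative factor $y_{it}$ at a level $K$, which renders the transform Lipschitz in the box variables with constant $\lesssim K$ (the summed weights $\sum_k\beta^{k-1}(\alpha+\xi\sum_j w_{ij})$ being finite), at the cost of an extra error $\lesssim K^{-(2p-1)}$ controlled by the $2p$-th moment in Assumption \ref{assumption_y_ptngarch}(a). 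Collecting the estimates gives, schematically, $\bar\eta_0(s)\lesssim K\,\bar\eta_y(s-2m) + m^{-(b-1)} + K^{-(2p-1)}$ up to a polynomial box factor; taking $m\asymp s/4$ and optimizing $K$ balances the $\mathbb{L}^\infty$-Lipschitz bound against the $\mathbb{L}^{2p}$ tail, and the optimal split produces a power $\bar\eta_y(\cdot)^{(p-1)/(2p-1)}$, so $\bar\eta_0(s)\lesssim s^{-\mu_0}$ with $\mu_0=\mu_y\frac{p-1}{2p-1}$ (the spatial error $m^{-(b-1)}$ being negligible since $b>\mu_y$). The hypothesis $\mu_y>2\frac{2p-1}{p-1}$ is then exactly what guarantees $\mu_0>2$. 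I expect the bookkeeping of the two truncation levels $m,K$, and the verification that the box/counting factors do not erode the exponent below $2$, to be the most delicate part; alternatively, one may invoke directly the hereditary weak-dependence lemma for transforms of random fields in \cite{pan2024ltrf}, whose hypotheses (an a.s.-convergent representation, polynomial coupling error, and finite $2p$-th moments) are precisely what the steps above establish.
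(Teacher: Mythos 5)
Your overall strategy --- transferring the $\eta$-weak dependence of $\{y_{it}\}$ to $\{l_{it}(\nu)\}$ through the representation of Lemma \ref{lma_berkes2003_2} by a coupling argument --- is the same one the paper uses, and your closing sentence describes what the paper actually does: it verifies the coordinate-perturbation Lipschitz condition (2.7) of \cite{pan2024ltrf} (first for $\lambda_{it}(\nu)$, then for $l_{it}(\nu)$, using the threshold-map Lipschitz bound \eqref{clm_l1contraction_y_eq_psi} and Assumption \ref{assumption_network_ptngarch} to get the coupling coefficients $B_{(i,t),NT}(h)\leq Ch^{-b}$), and then cites Proposition 2 and Example 2.1 of that companion paper, which contain precisely the truncation-plus-covariance-splitting machinery you propose to rebuild by hand. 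The only structural difference is the formalization of locality (the paper perturbs coordinates at distance exactly $h$ rather than truncating to a space-time box), so your route is a re-derivation of the cited heredity proposition rather than a genuinely different proof; the paper's version is correspondingly shorter and lands on $\Bar{\eta}_0(s)\leq Cs^{-\left(\frac{2p-2}{2p-1}\mu_y-2\right)}$.

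There are, however, two concrete flaws in your execution. First, truncating only the factor $y_{it}$ at level $K$ does not make $l^{(m)}_{it}$ Lipschitz in the box variables: for two configurations $y,y'$ of the box, the increment of $y_{it}\log\lambda^{(m)}_{it}$ is bounded by $|y_{it}-y'_{it}|\,\abs{\log\lambda^{(m)}_{it}(y)}+y'_{it}\,\abs{\log\lambda^{(m)}_{it}(y)-\log\lambda^{(m)}_{it}(y')}$; your truncation controls the second term by $K/\omega$, but the first term carries the coefficient $\abs{\log\lambda^{(m)}_{it}(y)}$, which remains unbounded because it involves all of the untruncated past coordinates. You need to truncate \emph{every} coordinate in the box at $K$ (so that $\log\lambda^{(m)}_{it}\lesssim \log K$), which costs an additional error of order $m^2K^{-(2p-1)}$ --- harmless, but a different construction from the one you state. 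Second, your exponent bookkeeping is internally inconsistent: optimizing your own bound $m^{2}K\Bar{\eta}_y(s-2m)+K^{-(2p-1)}$ (the polynomial box factor $m^2$ cannot be dropped) with $m\asymp s$ gives $K\asymp s^{(\mu_y-2)/(2p)}$ and hence the exponent $(\mu_y-2)\frac{2p-1}{2p}$, not $\mu_y\frac{p-1}{2p-1}$; the paper's route gives $\frac{2p-2}{2p-1}\mu_y-2$. Your claimed exponent and the paper's both cross the value $2$ exactly at $\mu_y=2\frac{2p-1}{p-1}$, and the corrected optimization exponent exceeds $2$ under an even weaker condition, so the conclusion $\mu_0>2$ does hold under Assumption \ref{assumption_y_ptngarch}(b) in every case; but the ``exactness'' you point to is a coincidence of thresholds, not a consequence of the optimization you describe, and as written the stated coefficient does not follow from your displayed bound.
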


\begin{proof}
For each $(i,t)\in D_{NT}$ and $h = 1, 2, ...$, define $\{y_{j\tau}^{(h)}: (j,\tau)\in D_{NT}, NT\geq 1\}$ such that $y_{j\tau}^{(h)} \neq y_{j\tau}$ if and only if $\rho((i,t), (j,\tau)) = h$. $$\lambda_{it}^{(h)}(\nu) = \sum_{k=1}^{\infty}\beta^{k-1}\left[\omega + \alpha_{i,t-k}^{(h)} y_{i,t-k}^{(h)} + \xi\sum_{j=1}^N w_{ij}y_{j,t-k}^{(h)}\right],$$ where $$\alpha_{i,t-k}^{(h)} = \alpha^{(1)}1_{\{y_{i,t-k}^{(h)} \geq r\}} + \alpha^{(2)}1_{\{y_{i,t-k}^{(h)} <r\}}.$$ 
Then, by \eqref{clm_l1contraction_y_eq_psi} and Assumption \ref{assumption_network_ptngarch}, we have
\begin{align}
    &|\lambda_{it}(\nu) - \lambda_{it}^{(h)}(\nu)| \notag\\
    \leq &\sum_{k=1}^{\infty}\beta^{k-1}|\alpha_{i,t-k} y_{i,t-k} - \alpha_{i,t-k}^{(h)} y_{i,t-k}^{(h)}| +\sum_{k=1}^{\infty}\sum_{j=1}^N\beta^{k-1}\xi w_{ij}|y_{j,t-k} - y_{j,t-k}^{(h)}| \notag\\
    = &\beta^{h-1}|\alpha_{i,t-h} y_{i,t-h} - \alpha_{i,t-h}^{(h)} y_{i,t-h}^{(h)}| + \xi\beta^{h-1}\sum_{1\leq|j-i|\leq h}w_{ij}|y_{j,t-h} - y_{j,t-h}^{(h)}| \label{clm_wd_lik_func_ptngarch_eq1}\\
    &+ \xi w_{i,i\pm h}\sum_{k=1}^h\beta^{k-1}|y_{i\pm h,t-k} - y_{i\pm h,t-k}^{(h)}| \notag\\
    \leq &\alpha^*\beta^{h-1}|y_{i,t-h} - y_{i,t-h}^{(h)}| + \xi\beta^{h-1}\sum_{1\leq|j-i|\leq h}|y_{j,t-h} - y_{j,t-h}^{(h)}| \notag\\
    &+ C\xi h^{-b}\sum_{k=1}^h|y_{i\pm h,t-k} - y_{i\pm h,t-k}^{(h)}|. \notag
\end{align} 
Therefore $\lambda_{it}(\nu)$ satisfies condition (2.7) in \cite{pan2024ltrf} with $B_{(i,t),NT}(h) \leq Ch^{-b}$ and $l = 0$. By Proposition 2 and Example 2.1 in \cite{pan2024ltrf}, the array of random fields $\{\lambda_{it}(\nu): (i,t)\in D_{NT}, NT\geq 1\}$ is $\eta$-weakly dependent with coefficient $\Bar{\eta}_{\lambda}(s) \leq Cs^{-\mu_y+2}$.

Similarly we can define $$l_{it}^{(h)}(\nu) = y_{it}^{(h)}\log\lambda_{it}^{(h)}(\nu) - \lambda_{it}^{(h)}(\nu).$$ Since
\begin{align*}
    |l_{it}(\nu) - l_{it}^{(h)}(\nu)| \leq &y_{it}\left|\log\frac{\lambda_{it}(\nu)}{\lambda_{it}^{(h)}(\nu)}\right| + |\lambda_{it}(\nu) - \lambda_{it}^{(h)}(\nu)|\\
    \leq &y_{it}\left|\frac{\lambda_{it}(\nu)}{\lambda_{it}^{(h)}(\nu)} - 1\right| + |\lambda_{it}(\nu) - \lambda_{it}^{(h)}(\nu)|\\
    \leq &\frac{y_{it}}{\omega}|\lambda_{it}(\nu) - \lambda_{it}^{(h)}(\nu)| + |\lambda_{it}(\nu) - \lambda_{it}^{(h)}(\nu)|,
\end{align*} $l_{it}(\nu)$ also satisfies condition (2.7) in \cite{pan2024ltrf} with $B_{(i,t),NT}(h) \leq Ch^{-b}$ and $l = 1$ by \eqref{clm_wd_lik_func_ptngarch_eq1}, the array of random fields $\{l_{it}(\nu): (i,t)\in D_{NT}, NT\geq 1\}$ is $\eta$-weakly dependent with coefficient $\Bar{\eta}_0(s) \leq Cs^{-\frac{2p-2}{2p-1}\mu_y+2}$. Note that $\frac{2p-2}{2p-1}\mu_y-2 > 2$ because of $\mu_y > \frac{4p-2}{p-1}$. So Claim \ref{clm_wd_lik_func_ptngarch} is proved.
\end{proof}

\begin{claim}\label{clm_idtfy_ptngarch}
    $\lambda_{it}(\nu) = \lambda_{it}(\nu_0)$ for all $(i,t)\in D_{NT}$ if and only if $\nu = \nu_0$.
\end{claim}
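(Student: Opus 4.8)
The ``if'' direction is immediate, since $\nu=\nu_0$ makes the two conditional means coincide by construction; the whole content lies in the converse. The plan is to use the one-step recursion \eqref{eq_lambda_ptngarch} rather than the series form, and then to exploit the conditional Poisson structure to strip off the parameters one block at a time. Assume $\lambda_{it}(\nu)=\lambda_{it}(\nu_0)$ almost surely for every $(i,t)\in D_{NT}$. Since this forces $\lambda_{i,t-1}(\nu)=\lambda_{i,t-1}(\nu_0)=:\lambda_{i,t-1}$, subtracting the two instances of \eqref{eq_lambda_ptngarch} gives, for every $(i,t)$,
\[
0=(\omega-\omega_0)+\big[\psi_\nu(y_{i,t-1})-\psi_{\nu_0}(y_{i,t-1})\big]+(\xi-\xi_0)\sum_{j=1}^N w_{ij}y_{j,t-1}+(\beta-\beta_0)\lambda_{i,t-1},
\]
where $\psi_\nu(y)=\alpha^{(1)}1_{\{y\ge r\}}y+\alpha^{(2)}1_{\{y<r\}}y$ is the map from \eqref{clm_l1contraction_y_eq_psi} evaluated at the parameter $\nu$.

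The key device is to condition on $\cH_{t-2}$. Then $\lambda_{i,t-1}$ is $\cH_{t-2}$-measurable, while given $\cH_{t-2}$ the vector $(y_{j,t-1})_{j=1}^N$ consists of conditionally independent Poisson variables, each with full support on $\bN$. Hence every integer configuration of the $y_{j,t-1}$ occurs on an event of positive probability, so the displayed identity must hold as a functional identity in those integer arguments. Because $w_{ii}=0$, the value $y_{i,t-1}$ enters only through $\psi_\nu-\psi_{\nu_0}$; freezing all $y_{j,t-1}$ with $j\ne i$ and letting $y_{i,t-1}=y$ range over $\bN$ forces $\psi_\nu(y)-\psi_{\nu_0}(y)$ to be constant in $y$, and evaluating at $y=0$ shows the constant is $0$. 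Thus $\psi_\nu\equiv\psi_{\nu_0}$ on $\bN$, and simultaneously the remaining quantity $(\omega-\omega_0)+(\xi-\xi_0)\sum_j w_{ij}y_{j,t-1}+(\beta-\beta_0)\lambda_{i,t-1}$ vanishes for every configuration.

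It remains to decode these two facts. Comparing the piecewise-linear maps $\psi_\nu$ and $\psi_{\nu_0}$ on $\{y\ge\max(r,r_0)\}$ yields $\alpha^{(1)}=\alpha^{(1)}_0$, and on $\{1\le y<\min(r,r_0)\}$ yields $\alpha^{(2)}=\alpha^{(2)}_0$; matching the slopes across the ``gap'' between $r$ and $r_0$ then forces $r=r_0$. For the vanishing relation, choosing a node $i$ with a neighbour and varying a single $y_{j_0,t-1}$ with $w_{ij_0}>0$ gives $\xi=\xi_0$, after which the relation reduces to $(\omega-\omega_0)+(\beta-\beta_0)\lambda_{i,t-1}=0$; since $\lambda_{i,t-1}$ is non-degenerate (it takes at least two distinct values with positive probability), this linear identity forces $\beta=\beta_0$ and then $\omega=\omega_0$, completing the argument.

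The main obstacle is the joint identification of $(\alpha^{(1)},\alpha^{(2)},r)$: the indicator functions make $\psi_\nu$ piecewise, so separating the two slopes from the break-point is only possible when the true model genuinely exhibits asymmetry, i.e. $\alpha^{(1)}_0\ne\alpha^{(2)}_0$ (and there are integer states on both sides of $r_0$, so that both slopes are ``active''). If $\alpha^{(1)}_0=\alpha^{(2)}_0$ the threshold disappears and $r$ is simply not identifiable, so I would carry this as a standing hypothesis of the claim. The remaining care is measure-theoretic: one must justify passing from the almost-sure identity to a pointwise functional identity over all integer configurations, which is exactly what the conditional full support and conditional independence of the Poisson coordinates provide.
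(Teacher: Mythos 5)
Your proof is correct in substance, and it takes a genuinely different route from the paper's. The paper argues through a transfer-function representation: it inverts $(1-\beta B)$ to write $\Lambda_t(\nu) = \frac{\omega}{1-\beta}\vone_N + \mathcal{P}_{\nu}(B)\bY_t$ with $\mathcal{P}_{\nu}(B) = \frac{\alpha B}{1-\beta B}I_N + \frac{\xi B}{1-\beta B}W$, deduces $\mathcal{P}_{\nu}=\mathcal{P}_{\nu_0}$ from non-degeneracy of the conditional distribution, and separates coefficients by comparing diagonal and off-diagonal entries (using that $W$ has zero diagonal). You instead stay with the one-step recursion \eqref{eq_lambda_ptngarch}, condition on $\cH_{t-2}$, and use conditional independence plus full support of the Poisson coordinates to upgrade the almost-sure identity to a functional identity on $\bN^N$, which you then decode coordinate by coordinate. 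Each approach buys something: the operator calculus dispatches $\omega,\xi,\beta$ and the autoregressive slope in one stroke, but it silently treats the threshold term as a single coefficient ``$\alpha$'' and never addresses separate identification of $r$, $\alpha^{(1)}$, $\alpha^{(2)}$; your pointwise treatment of the piecewise map $\psi_\nu$ handles exactly this, and your observation that $r$ is unidentifiable when $\alpha^{(1)}_0=\alpha^{(2)}_0$ is a genuine caveat that the paper's proof glosses over (and which is relevant to the paper's own Wald test of $H_0:\alpha^{(1)}_0=\alpha^{(2)}_0$). Two points to tighten. First, your last step requires $\lambda_{i,t-1}$ to be non-degenerate; this fails if $\alpha^{(1)}_0=\alpha^{(2)}_0=\xi_0=0$, in which case $\lambda_{it}\equiv\omega_0/(1-\beta_0)$ and $(\omega,\beta)$ cannot be separated --- a defect shared by the paper's proof, since $\frac{\alpha x}{1-\beta x}=\frac{\alpha_0 x}{1-\beta_0 x}$ with $\alpha=\alpha_0=0$ constrains $\beta$ not at all. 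Both proofs are rescued by Assumption \ref{assumption_theta_ptngarch}(a): interiority of $\theta_0$ in a subset of $\Theta$ forces all coefficients to be strictly positive, which gives you the non-degeneracy for free, so you should cite it rather than assert it. Second, your identification of $\alpha^{(2)}$ from $1\le y<\min(r,r_0)$ needs $\min(r,r_0)\ge 2$; if $r_0=1$ the coefficient $\alpha^{(2)}_0$ never enters the model and is vacuously unidentifiable, so this edge case should be excluded alongside the symmetric case. Neither point undermines your architecture, which is, if anything, more careful about the threshold structure than the published argument.
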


\begin{proof}
The \textit{if} part is obvious, it remains for us to prove the \textit{only if} part. Observe that $$(1 - \beta B)\lambda_{it}(\nu) = \omega + \alpha By_{it} + \xi\sum_{j=1}^N w_{ij}By_{jt},$$ where $B$ stands for the back-shift operator in the sense that $By_{it}^2 = y_{i,t-1}^2$, and $\alpha$ represents either $\alpha^{(1)}$ or $\alpha^{(2)}$ according to the value of $\alpha_{it}$ at time $t$. Therefore we have $$(1 - \beta B)\Lambda_t(\nu) = \omega\vone_N + (\alpha BI_N + \xi BW)\bY_t.$$

The polynomial $1 - \beta x$ has a root $x = 1/\beta$, which lies outside the unit circle since $0 < \beta < 1$. Therefore the inverse $\frac{1}{1 - \beta x}$ is well-defined for any $|x| \leq 1$, and we have $$\Lambda_t(\nu) = \frac{\omega}{1 - \beta}\vone_N + \mathcal{P}_{\nu}(B)\bY_t$$ with $\mathcal{P}_{\nu}(B) := \frac{\alpha B}{1 - \beta B}I_N + \frac{\xi B}{1 - \beta B}W.$ As $\lambda_{it}(\nu) = \lambda_{it}(\nu_0)$ for each $i = 1, 2, ..., N$, $$\left[\mathcal{P}_{\nu}(B) - \mathcal{P}_{\nu_0}(B)\right]\bY_t = \left(\frac{\omega_0}{1 - \beta_0} - \frac{\omega}{1 - \beta}\right)\vone_N.$$ We can deduce from above equation that $\mathcal{P}_{\nu}(x) = \mathcal{P}_{\nu_0}(x)$ for any $|x| \leq 1$, otherwise $\bY_t$ will be degenerated to a deterministic vector given $\cH_{t-1}$. $\mathcal{P}_{\nu}(x) = \mathcal{P}_{\nu_0}(x)$ implies that
\begin{equation*}
    \frac{\alpha x}{1 - \beta x}I_N - \frac{\alpha_0 x}{1 - \beta_0 x}I_N = \left(\frac{\xi_0 x}{1 - \beta_0 x} - \frac{\xi x}{1 - \beta x}\right)W.
\end{equation*} 
The diagonal elements of $W$ are all zeros while the matrix on the left side of above equation has non-zero diagonal elements, so we have
\begin{align*}
    &\frac{\alpha x}{1 - \beta x} = \frac{\alpha_0 x}{1 - \beta_0 x},\\
    &\frac{\xi x}{1 - \beta x} = \frac{\xi_0 x}{1 - \beta_0 x},
\end{align*} which imply $\alpha = \alpha_0$, $\beta = \beta_0$ and $\xi = \xi_0$. Besides, $\omega = \omega_0$ can be easily derived from $\frac{\omega}{1 - \beta} = \frac{\omega_0}{1 - \beta_0}$.
\end{proof}

With Claim \ref{clm_lik_func_bound_1_ptngarch} and Claim \ref{clm_wd_lik_func_ptngarch}, we can apply Theorem 1 in \cite{pan2024ltrf} and obtain that
\begin{equation}\label{lln_lik_func}
    \left[L_{NT}(\nu) - \expct L_{NT}(\nu)\right] \convp 0
\end{equation} for any $\nu\in\Theta\times\bZ_{+}$. Therefore, we have
\begin{align}
    &\lim_{T,N\to\infty}[L_{NT}(\nu) - L_{NT}(\nu_0)] \notag\\
    = &\lim_{T,N\to\infty}\expct\left[L_{NT}(\nu) - L_{NT}(\nu_0)\right] \notag\\
    = &\lim_{T,N\to\infty}\frac{1}{NT}\sum_{(i,t)\in D_{NT}}\expct\left[y_{it}\log\frac{\lambda_{it}(\nu)}{\lambda_{it}(\nu_0)} - (\lambda_{it}(\nu) - \lambda_{it}(\nu_0))\right] \notag\\
    = &\lim_{T,N\to\infty}\frac{1}{NT}\sum_{(i,t)\in D_{NT}}\expct\left\{\expct\left[\left.y_{it}\log\frac{\lambda_{it}(\nu)}{\lambda_{it}(\nu_0)} - (\lambda_{it}(\nu) - \lambda_{it}(\nu_0))\right|\lambda_{it}(\nu), \lambda_{it}(\nu_0)\right]\right\} \label{eq_idtfy}\\
    = &\lim_{T,N\to\infty}\frac{1}{NT}\sum_{(i,t)\in D_{NT}}\expct\left[\lambda_{it}(\nu_0)\log\frac{\lambda_{it}(\nu)}{\lambda_{it}(\nu_0)} - (\lambda_{it}(\nu) - \lambda_{it}(\nu_0))\right] \notag\\
    \leq &\lim_{T,N\to\infty}\frac{1}{NT}\sum_{(i,t)\in D_{NT}}\expct\left\{\lambda_{it}(\nu_0)\left[\frac{\lambda_{it}(\nu)}{\lambda_{it}(\nu_0)} - 1\right] - (\lambda_{it}(\nu) - \lambda_{it}(\nu_0))\right\} \notag\\
    = &0, \notag
\end{align} with equality if and only if $\lambda_{it}(\nu) = \lambda_{it}(\nu_0)$ for all $(i,t)\in D_{NT}$, which is equivalent to $\nu = \nu_0$ by Claim \ref{clm_idtfy_ptngarch}.

Note that Claim \ref{clm_lik_func_conv_ptngarch} implies that $$\lim_{T,N\to\infty}\prob\left[|L_{NT}(\hat{\nu}_{NT})-\Tilde{L}_{NT}(\hat{\nu}_{NT})|<\frac{\delta}{3}\right] = 1$$ for any $\delta > 0$, hence $$ \lim_{T,N\to\infty}\prob\left[L_{NT}(\hat{\nu}_{NT}) > \Tilde{L}_{NT}(\hat{\nu}_{NT}) - \frac{\delta}{3}\right] = 1.$$ Since $\hat{\nu}_{NT}$ maximizes $\Tilde{L}_{NT}(\nu)$, we have $$\lim_{T,N\to\infty}\prob\left[\Tilde{L}_{NT}(\hat{\nu}_{NT}) > \Tilde{L}_{NT}({\nu}_0) - \frac{\delta}{3}\right] = 1.$$
So $$\lim_{T,N\to\infty}\prob\left[L_{NT}(\hat{\nu}_{NT}) > \Tilde{L}_{NT}({\nu}_0) - \frac{2\delta}{3}\right] = 1.$$ Furthermore, from Claim \ref{clm_lik_func_conv_ptngarch}, $$\lim_{T,N\to\infty}\prob\left[\Tilde{L}_{NT}({\nu}_0) > L_{NT}({\nu}_0) - \frac{\delta}{3}\right] = 1.$$ Therefore we have
\begin{equation}\label{proof_proposition_MLE_consistency_eq2}
    \lim_{T,N\to\infty}\prob\left[0\leq L_{NT}(\nu_0) - L_{NT}(\hat{\nu}_{NT}) < \delta\right] = 1.
\end{equation}

Let $V_k(\theta)$ be an open sphere with centre $\theta$ and radius $1/k$. By \eqref{eq_idtfy} we could find $$\delta = \inf_{\substack{\theta\in \Theta\smallsetminus V_k(\theta_0)\\r\neq r_0}}\left[L_{NT}(\nu_0)-L_{NT}(\nu)\right] > 0.$$ Then by \eqref{proof_proposition_MLE_consistency_eq2},
\begin{align*}
    \lim_{T,N\to\infty}\prob\left\{0\leq L_{NT}(\nu_0) - L_{NT}(\hat{\nu}_{NT}) < \inf_{\substack{\theta\in \Theta\smallsetminus V_k(\theta_0)\\r\neq r_0}}\left[L_{NT}(\nu_0)-L_{NT}(\nu)\right]\right\} = 1.
\end{align*}
This implies that
\begin{align*}
    \lim_{T,N\to\infty}\prob\left[\hat{\theta}_{NT}\in V_k(\theta_0), \hat{r}_{NT} = r_0\right] = 1
\end{align*}
for any given $k>0$. Let $k\to\infty$ we complete the proof.

\subsection{Proof of Theorem \ref{theorem_AN_ptngarch}}

With a fixed threshold parameter $r = r_0$, we rewrite $\hat{\theta}_{NT} := \hat{\theta}_{NT}^{(r_0)}$, $\lambda_{it}(\theta) := \lambda_{it}(\theta, r_0)$ and $l_{it}(\theta) := l_{it}(\theta, r_0)$ etc., in succeeding proofs for simplicity of notations. To prove the asymptotic normality, we need to derive some intermediate results regarding the first, second and third order derivatives of $\lambda_{it}(\theta)$. 

Since
\begin{equation*}
    \lambda_{it}(\theta) = \sum_{k=1}^{\infty}\beta^{k-1}\left[\omega + \bracketS{\alpha^{(1)}1_{\{y_{i,t-k}\geq r\}} + \alpha^{(2)}1_{\{y_{i,t-k}<r\}}} y_{i,t-k} + \xi\sum_{j=1}^N w_{ij}y_{j,t-k}\right]
\end{equation*} almost surely, the partial derivative of $\lambda_{it}(
\theta)$ are
\begin{align}
    &\frac{\partial\lambda_{it}(\theta)}{\partial\omega} = \sum_{k=1}^{\infty}\beta^{k-1}, \notag\\
    &\frac{\partial\lambda_{it}(\theta)}{\partial\alpha^{(1)}} = \sum_{k=1}^{\infty}\beta^{k-1}y_{i,t-k}1_{\{y_{i,t-k}\geq r\}}, \notag\\
    &\frac{\partial\lambda_{it}(\theta)}{\partial\alpha^{(2)}} = \sum_{k=1}^{\infty}\beta^{k-1}y_{i,t-k}1_{\{y_{i,t-k}< r\}}, \label{eq_lambda_dr1}\\
    &\frac{\partial\lambda_{it}(\theta)}{\partial\xi} = \sum_{k=1}^{\infty}\beta^{k-1}\left(\sum_{j=1}^N w_{ij}y_{j,t-k}\right), \notag\\
    &\frac{\partial\lambda_{it}(\theta)}{\partial\beta} = \sum_{k=2}^{\infty}(k-1)\beta^{k-2}u_{i,t-k}(\theta),\notag
\end{align} where $$u_{i,t-k}(\theta) = \omega + \alpha^{(1)} y_{i,t-k}1_{\{y_{i,t-k}\geq r\}} + \alpha^{(2)} y_{i,t-k}1_{\{y_{i,t-k}< r\}} + \xi\sum_{j=1}^N w_{ij}y_{j,t-k}.$$ 
Note that
\begin{equation}\label{eq_lambda_dr1_diff}
    \frac{\partial\lambda_{it}(\theta)}{\partial\theta} - \frac{\partial\Tilde{\lambda}_{it}(\theta)}{\partial\theta} = t\beta^{t-1}\lambda_{i0}(\nu)\mathbf{e}_5 + \beta^t\frac{\partial\lambda_{i0}(\theta)}{\partial\theta},
\end{equation} where $\mathbf{e}_5 = (0, 0, 0, 0, 1)'$.

For the second order derivatives, we get that, for any $\theta_m, \theta_n \in \{\omega, \alpha^{(1)}, \alpha^{(2)}, \xi\}$, $$\frac{\partial^2\lambda_{it}(\theta)}{\partial\theta_m\partial\theta_n} = 0,$$ 
and
\begin{align}
    &\frac{\partial^2\lambda_{it}(\theta)}{\partial\omega\partial\beta} = \sum_{k=2}^{\infty}(k-1)\beta^{k-2}, \notag\\
    &\frac{\partial^2\lambda_{it}(\theta)}{\partial\alpha^{(1)}\partial\beta} = \sum_{k=2}^{\infty}(k-1)\beta^{k-2}y_{i,t-k}1_{\{y_{i,t-k}\geq r\}}, \notag\\
    &\frac{\partial^2\lambda_{it}(\theta)}{\partial\alpha^{(2)}\partial\beta} = \sum_{k=2}^{\infty}(k-1)\beta^{k-2}y_{i,t-k}1_{\{y_{i,t-k} < r\}}, \label{eq_lambda_dr2}\\
    &\frac{\partial^2\lambda_{it}(\theta)}{\partial\xi\partial\beta} = \sum_{k=2}^{\infty}(k-1)\beta^{k-2}\left(\sum_{j=1}^N w_{ij}y_{j,t-k}\right), \notag\\
    &\frac{\partial^2\lambda_{it}(\theta)}{\partial\beta^2} = \sum_{k=3}^{\infty}(k-1)(k-2)\beta^{k-3}u_{i,t-k}(\theta). \notag
\end{align} 
We also have
\begin{equation}\label{eq_lambda_dr2_diff}
    \frac{\partial^2\lambda_{it}(\nu)}{\partial\theta\partial\theta'} - \frac{\partial^2\Tilde{\lambda}_{it}(\nu)}{\partial\theta\partial\theta'} = t(t-1)\beta^{t-2}\lambda_{i0}(\nu)\ve_5\ve_5' + 2t\beta^{t-1}\frac{\partial\lambda_{i0}(\nu)}{\partial\theta}\ve_5' + \beta^t\frac{\partial^2\lambda_{i0}(\nu)}{\partial\theta\partial\theta'},
\end{equation} where $\mathbf{e}_5 = (0, 0, 0, 0, 1)'$.

For the third order derivatives of $\lambda_{it}(\theta)$,
\begin{align}
    &\frac{\partial^3\lambda_{it}(\theta)}{\partial\omega\partial\beta^2} = \sum_{k=3}^{\infty}(k-1)(k-2)\beta^{k-3}, \notag\\
    &\frac{\partial^3\lambda_{it}(\theta)}{\partial\alpha^{(1)}\partial\beta^2} = \sum_{k=3}^{\infty}(k-1)(k-2)\beta^{k-3}y_{i,t-k}1_{\{y_{i,t-k}\geq r\}}, \notag\\
    &\frac{\partial^3\lambda_{it}(\theta)}{\partial\alpha^{(2)}\partial\beta^2} = \sum_{k=3}^{\infty}(k-1)(k-2)\beta^{k-3}y_{i,t-k}1_{\{y_{i,t-k} < r\}}, \label{eq_lambda_dr3}\\
    &\frac{\partial^3\lambda_{it}(\theta)}{\partial\xi\partial\beta^2} = \sum_{k=3}^{\infty}(k-1)(k-2)\beta^{k-3}\left(\sum_{j=1}^N w_{ij}y_{j,t-k}\right), \notag\\
    &\frac{\partial^3\lambda_{it}(\theta)}{\partial\beta^3} = \sum_{k=4}^{\infty}(k-1)(k-2)(k-3)\beta^{k-4}u_{i,t-k}(\theta). \notag
\end{align}
Based on the consistency of $\hat{\theta}_{NT}$, we are now ready to prove asymptotic normality. We split the proof into Claim \ref{clm_dr1_app} to Claim \ref{clm_hessian_lln} below.

\begin{claim}\label{clm_dr1_app}
    For any $\theta_m\in \{\omega, \alpha^{(1)}, \alpha^{(2)}, \xi, \beta\}$, $\sqrt{NT}\abs{\frac{\partial\Tilde{L}_{NT}(\theta_0)}{\partial\theta_m} - \frac{\partial L_{NT}(\theta_0)}{\partial\theta_m}} \convp 0$ as $\min\left\{N, T\right\} \to \infty$ and $\frac{T}{N} \to \infty$.
\end{claim}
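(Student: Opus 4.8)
The plan is to show that the exact and approximate scores differ only through the initialization error, which decays geometrically in $t$, and then to exploit the rate condition $T/N\to\infty$ to annihilate the resulting bound. Writing $\frac{\partial l_{it}(\theta_0)}{\partial\theta_m} = \bracketS{\frac{y_{it}}{\lambda_{it}(\theta_0)} - 1}\frac{\partial\lambda_{it}(\theta_0)}{\partial\theta_m}$ together with the analogous expression in which $\lambda_{it}$ is replaced by $\Tilde{\lambda}_{it}$, I would first set $\Delta_{it} := \frac{\partial\Tilde{l}_{it}(\theta_0)}{\partial\theta_m} - \frac{\partial l_{it}(\theta_0)}{\partial\theta_m}$, so that the quantity to be controlled is exactly $\frac{1}{\sqrt{NT}}\abs{\sum_{(i,t)\in D_{NT}}\Delta_{it}}$. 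By the triangle and Markov inequalities it then suffices to bound $\frac{1}{\sqrt{NT}}\sum_{(i,t)\in D_{NT}}\expct\abs{\Delta_{it}}$ and prove it vanishes; convergence in $\bL^1$ yields the stated convergence in probability.

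Next I would split $\Delta_{it}$ into a reciprocal-difference part and a derivative-difference part, namely $\Delta_{it} = y_{it}\bracketS{\frac{1}{\Tilde{\lambda}_{it}}\frac{\partial\Tilde{\lambda}_{it}}{\partial\theta_m} - \frac{1}{\lambda_{it}}\frac{\partial\lambda_{it}}{\partial\theta_m}} + \bracketS{\frac{\partial\lambda_{it}}{\partial\theta_m} - \frac{\partial\Tilde{\lambda}_{it}}{\partial\theta_m}}$, and then rewrite the inner reciprocal expression by adding and subtracting $\frac{1}{\Tilde{\lambda}_{it}}\frac{\partial\lambda_{it}}{\partial\theta_m}$, giving $\frac{1}{\Tilde{\lambda}_{it}}\bracketS{\frac{\partial\Tilde\lambda_{it}}{\partial\theta_m}-\frac{\partial\lambda_{it}}{\partial\theta_m}} + \frac{\partial\lambda_{it}}{\partial\theta_m}\bracketS{\frac{1}{\Tilde{\lambda}_{it}}-\frac{1}{\lambda_{it}}}$. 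Each resulting piece carries an explicit initialization factor: $\lambda_{it}(\theta_0)-\Tilde{\lambda}_{it}(\theta_0)=\beta_0^t\lambda_{i0}(\theta_0)$ by \eqref{eq_lambda_diff}, and $\frac{\partial\lambda_{it}}{\partial\theta_m}-\frac{\partial\Tilde{\lambda}_{it}}{\partial\theta_m} = t\beta_0^{t-1}\lambda_{i0}(\theta_0)1_{\{\theta_m=\beta\}} + \beta_0^t\frac{\partial\lambda_{i0}(\theta_0)}{\partial\theta_m}$ by \eqref{eq_lambda_dr1_diff}. Controlling the denominators by the uniform lower bound $\lambda_{it},\Tilde{\lambda}_{it}\geq\omega\geq\delta$, every piece is dominated by $C\rho^t$ times a product of $y_{it}$ with $\lambda_{i0}(\theta_0)$ and/or $\frac{\partial\lambda_{i0}(\theta_0)}{\partial\theta_m}$, where any $\beta_0<\rho<1$ absorbs the polynomial factor via $t\beta_0^{t-1}\le C\rho^t$.

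It then remains to bound the expectation of each such product uniformly in $(i,t)$. Since $\lambda_{i0}(\theta_0)$ and the derivative factors are $\cH_{t-1}$-measurable for $t\geq 1$, conditioning on $\cH_{t-1}$ and using $\expct[y_{it}\mid\cH_{t-1}]=\lambda_{it}$ replaces the $y_{it}$ factor by $\lambda_{it}$; H\"{o}lder's inequality combined with the $\bL^{2p}$-bounds on $y_{it}$, $\lambda_{it}$ and their derivatives (all finite under Assumption \ref{assumption_y_ptngarch}(a), by the Minkowski argument behind Claim \ref{clm_lik_func_bound_1_ptngarch}) then gives $\expct\abs{\Delta_{it}}\leq C\rho^t$ with $C$ independent of $(i,t)$. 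Summing yields $\frac{1}{\sqrt{NT}}\sum_{(i,t)\in D_{NT}}\expct\abs{\Delta_{it}} \leq \frac{N}{\sqrt{NT}}\cdot\frac{C\rho}{1-\rho} = C'\sqrt{N/T}$, which tends to $0$ precisely because $T/N\to\infty$, completing the argument.

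The main obstacle is the uniform product-moment control in the last step: the reciprocal-difference piece produces a triple product of $y_{it}$ (equivalently $\lambda_{it}$) with both $\lambda_{i0}$ and $\frac{\partial\lambda_{it}}{\partial\theta_m}$, so the H\"{o}lder exponents must be allocated so that every factor stays inside the $\bL^{2p}$ budget of Assumption \ref{assumption_y_ptngarch}(a), and one must separately confirm that the infinite-series derivatives $\frac{\partial\lambda_{i0}}{\partial\theta_m}$ remain $\bL^{2p}$-bounded, the delicate case being $\frac{\partial}{\partial\beta}$ whose series coefficients grow like $k\beta_0^{k}$ before the geometric tail wins. Beyond this bookkeeping, the entire estimate is driven by the geometric initialization decay and the rate condition $T/N\to\infty$.
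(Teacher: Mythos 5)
Your proposal is correct and follows essentially the same route as the paper's proof: the same reduction to an $\bL^1$/Markov bound, the same add-and-subtract decomposition into a reciprocal-difference piece and a derivative-difference piece, the same use of the identities \eqref{eq_lambda_diff} and \eqref{eq_lambda_dr1_diff} to extract the geometric factors $\beta_0^t$ and $t\beta_0^{t-1}$, and the same final estimate of order $\sqrt{N/T}\to 0$ under $T/N\to\infty$. Your conditioning-on-$\cH_{t-1}$ plus H\"{o}lder handling of the product moments is in fact somewhat more careful than the paper's argument, which absorbs the $\cH_{t-1}$-measurable factors $\lambda_{i0}(\theta_0)$ and $\frac{\partial\Tilde{\lambda}_{it}(\theta_0)}{\partial\theta_m}$ into constants by appealing to \eqref{eq_lambda_bound} and Assumption \ref{assumption_y_ptngarch}(a).
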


\begin{proof} Note that
\begin{equation}\label{dr1}
\left\{
\begin{aligned}
     &\frac{\partial L_{NT}(\theta)}{\partial\theta} = \frac{1}{NT}\sum_{(i,t)\in D_{NT}} \left(\frac{y_{it}}{\lambda_{it}(\theta)} - 1\right)\frac{\partial\lambda_{it}(\theta)}{\partial\theta},\\
     &\frac{\partial\lambda_{it}(\theta)}{\partial\theta} = \vh_{i,t-1} + \beta\frac{\partial\lambda_{i,t-1}(\theta)}{\partial\theta},
\end{aligned}\right.
\end{equation} where $$\vh_{i,t-1} := \left(1, y_{i,t-1}1_{\{y_{i,t-1} \geq r\}}, y_{i,t-1}1_{\{y_{i,t-1} < r\}}, \sum_{j=1}^N w_{ij} y_{j,t-1}, \lambda_{i,t-1}\right)',$$ 
and similarly
\begin{equation}\label{dr1_tilde}
\left\{
\begin{aligned}
     &\frac{\partial \Tilde{L}_{NT}(\theta)}{\partial\theta} = \frac{1}{NT}\sum_{(i,t)\in D_{NT}} \left(\frac{y_{it}}{\Tilde{\lambda}_{it}(\theta)} - 1\right)\frac{\partial\Tilde{\lambda}_{it}(\theta)}{\partial\theta},\\
     &\frac{\partial\Tilde{\lambda}_{it}(\theta)}{\partial\theta} = \Tilde{\vh}_{i,t-1} + \beta\frac{\partial\Tilde{\lambda}_{i,t-1}(\theta)}{\partial\theta}.
\end{aligned}\right.
\end{equation} 
Therefore, we have
\begin{align*}
    &\sqrt{NT}\left|\frac{\partial\Tilde{L}_{NT}(\theta_0)}{\partial\beta} - \frac{\partial L_{NT}(\theta_0)}{\partial\beta}\right|\\
    \leq &\frac{1}{\sqrt{NT}}\sum_{(i,t)\in D_{NT}}\left|y_{it}\left[\frac{\lambda_{it}(\theta_0) - \Tilde{\lambda}_{it}(\theta_0)}{\Tilde{\lambda}_{it}(\theta_0)\lambda_{it}(\theta_0)}\frac{\partial\Tilde{\lambda}_{it}(\theta_0)}{\partial\beta}\right.\right.\\
    &\left.\left. + \frac{1}{\lambda_{it}(\theta_0)}\left(\frac{\partial\Tilde{\lambda}_{it}(\theta_0)}{\partial\beta} - \frac{\partial\lambda_{it}(\theta_0)}{\partial\beta}\right)\right] - \left(\frac{\partial\Tilde{\lambda}_{it}(\theta_0)}{\partial\beta} - \frac{\partial\lambda_{it}(\theta_0)}{\partial\beta}\right)\right|\\
    \leq &\frac{1}{\sqrt{NT}}\sum_{(i,t)\in D_{NT}}\frac{y_{it}}{\omega_0^2}\abs{\lambda_{it}(\theta_0) - \Tilde{\lambda}_{it}(\theta_0)}\left|\frac{\partial\Tilde{\lambda}_{it}(\theta_0)}{\partial\beta}\right|\\
    &+\frac{1}{\sqrt{NT}}\sum_{(i,t)\in D_{NT}}\left(\frac{y_{it}}{\omega_0} + 1\right)\left|\frac{\partial\lambda_{it}(\theta_0)}{\partial\beta} - \frac{\partial\Tilde{\lambda}_{it}(\theta_0)}{\partial\beta}\right|.
\end{align*}
But, by Assumption \ref{assumption_y_ptngarch}(a) and \eqref{eq_lambda_diff},  we have
\begin{align}
    &\norm{\frac{1}{\sqrt{NT}}\sum_{(i,t)\in D_{NT}}\frac{y_{it}}{\omega_0^2}\abs{\lambda_{it}(\theta_0) - \Tilde{\lambda}_{it}(\theta_0)}\left|\frac{\partial\Tilde{\lambda}_{it}(\theta_0)}{\partial\beta}\right|}_1 \notag\\
    \leq &\frac{C_1}{\sqrt{NT}}\sum_{i=1}^N\sum_{t=1}^T\beta_0^t\norm{y_{it}}_1 \label{clm_dr1_app_eq1}\\
    \leq &\frac{C_2}{\sqrt{NT}}\sum_{i=1}^N\frac{\beta_0}{1-\beta_0} \rightarrow 0 \notag
\end{align} when $\min\left\{N, T\right\} \to \infty$ and $T/N \to \infty$. 
Then, in view of \eqref{eq_lambda_dr1_diff},
\begin{align}
    &\norm{\frac{1}{\sqrt{NT}}\sum_{(i,t)\in D_{NT}}\left(\frac{y_{it}}{\omega_0} + 1\right)\left|\frac{\partial\lambda_{it}(\theta_0)}{\partial\beta} - \frac{\partial\Tilde{\lambda}_{it}(\theta_0)}{\partial\beta}\right|}_1 \notag\\
    \leq &\frac{C_1}{\sqrt{NT}}\sum_{i=1}^N\sum_{t=1}^T t\beta_0^{t-1}\norm{\frac{y_{it}}{\omega_0} + 1}_1 + \frac{C_2}{\sqrt{NT}}\sum_{i=1}^N\sum_{t=1}^T \beta_0^t\norm{\frac{y_{it}}{\omega_0} + 1}_1 \label{clm_dr1_app_eq2}\\
    \leq &\frac{C_3}{\sqrt{NT}}\sum_{i=1}^N\sum_{t=1}^T t\beta_0^{t-1} + \frac{C_4}{\sqrt{NT}}\sum_{i=1}^N\sum_{t=1}^T \beta_0^t \notag\\
    \leq &\frac{C_3}{\sqrt{NT}}\sum_{i=1}^N\frac{1}{(1-\beta_0)^2} + \frac{C_4}{\sqrt{NT}}\sum_{i=1}^N\frac{\beta_0}{1-\beta_0} \rightarrow 0 \notag
\end{align} when $\min\left\{N, T\right\} \to \infty$ and $T/N \to \infty$. In light of \eqref{clm_dr1_app_eq1} and \eqref{clm_dr1_app_eq2},  we can prove that $$\sqrt{NT}\left|\frac{\partial\Tilde{L}_{NT}(\theta_0)}{\partial\beta} - \frac{\partial L_{NT}(\theta_0)}{\partial\beta}\right| \convp 0.$$ 
The proofs regarding partial derivatives w.r.t. $\omega$, $\alpha^{(1)}$, $\alpha^{(2)}$ and $\xi$ follow similar arguments and are therefore omitted here.
\end{proof}

\begin{claim}\label{clm_dr2_app}
    For any $\theta_m, \theta_n \in \{\omega, \alpha^{(1)}, \alpha^{(2)}, \xi, \beta\}$, $\sup_{|\theta-\theta_0| < \xi}\abs{\frac{\partial^2\Tilde{L}_{NT}(\theta)}{\partial\theta_m\partial\theta_n} - \frac{\partial^2 L_{NT}(\theta_0)}{\partial\theta_m\theta_n}} = \bigO_p(\xi)$ as $\min\left\{N, T\right\} \to \infty$ and $T/N \to \infty$.
\end{claim}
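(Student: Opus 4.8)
The plan is to bound the supremum by inserting the true-likelihood Hessian at the \emph{same} point $\theta$ and using the triangle inequality,
\[
\abs{\frac{\partial^2\Tilde{L}_{NT}(\theta)}{\partial\theta_m\partial\theta_n} - \frac{\partial^2 L_{NT}(\theta_0)}{\partial\theta_m\partial\theta_n}}
\le
\abs{\frac{\partial^2\Tilde{L}_{NT}(\theta)}{\partial\theta_m\partial\theta_n} - \frac{\partial^2 L_{NT}(\theta)}{\partial\theta_m\partial\theta_n}}
+
\abs{\frac{\partial^2 L_{NT}(\theta)}{\partial\theta_m\partial\theta_n} - \frac{\partial^2 L_{NT}(\theta_0)}{\partial\theta_m\partial\theta_n}},
\]
treating the first summand as an \emph{initialization error} and the second as a \emph{local smoothness error}. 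Differentiating $l_{it}=y_{it}\log\lambda_{it}-\lambda_{it}$ twice gives
\[
\frac{\partial^2 l_{it}}{\partial\theta_m\partial\theta_n}
= -\frac{y_{it}}{\lambda_{it}^2}\frac{\partial\lambda_{it}}{\partial\theta_m}\frac{\partial\lambda_{it}}{\partial\theta_n}
+ \Big(\frac{y_{it}}{\lambda_{it}}-1\Big)\frac{\partial^2\lambda_{it}}{\partial\theta_m\partial\theta_n},
\]
and likewise with tildes. Throughout I would use that on the compact space $\Theta_\delta$ we have $\lambda_{it}(\theta)\ge\omega/(1-\beta)\ge\delta$ and $\beta\le\beta_{\max}<1$ uniformly, so all denominators are bounded below and every geometric series in \eqref{eq_lambda_dr1}--\eqref{eq_lambda_dr3} converges at a rate independent of $\theta$.

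For the initialization error I would substitute the explicit difference expressions \eqref{eq_lambda_diff}, \eqref{eq_lambda_dr1_diff}, \eqref{eq_lambda_dr2_diff} (and the analogue for $\partial^2\lambda_{it}/\partial\theta_m\partial\theta_n$), each of which is a polynomial in $t$ times $\beta^t$ multiplying $\lambda_{i0}$ or a derivative of $\lambda_{i0}$. Bounding $y_{it}$, $\lambda_{it}$ and its first two derivatives in $\bL^1$ by Assumption \ref{assumption_y_ptngarch}(a) and \eqref{eq_lambda_bound}, taking expectations and summing, the $\tfrac{1}{NT}$-average collapses to $\tfrac{1}{NT}\sum_{i=1}^N\sum_{t=1}^T \mathrm{poly}(t)\,\beta_{\max}^{\,t}=\bigO(1/T)$, which is $\smallo_p(1)$ and therefore negligible uniformly over $\{\abs{\theta-\theta_0}<\xi\}$, since every bound is monotone in $\beta\le\beta_{\max}$.

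For the smoothness error I would apply the mean value theorem in $\theta$, bounding the term by $\abs{\theta-\theta_0}$ times $\sum_{\theta_l}\sup_{\theta^*}\abs{\partial^3 L_{NT}(\theta^*)/(\partial\theta_l\partial\theta_m\partial\theta_n)}$; because $\abs{\theta-\theta_0}<\xi$, it then suffices to prove the third derivatives of $L_{NT}$ are $\bigO_p(1)$ uniformly on the neighbourhood. A further differentiation shows each such third derivative is an average over $(i,t)$ of products of the factor $y_{it}/\lambda_{it}$ with three of the ratios $\partial\lambda_{it}/\lambda_{it}$, $\partial^2\lambda_{it}/\lambda_{it}$, $\partial^3\lambda_{it}/\lambda_{it}$. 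The key device is that these ratios have uniformly bounded moments: since $\lambda_{it}$ dominates each of its own summands, the first-order ratios are bounded near the interior point $\theta_0$, the $\beta$-derivative ratios are controlled in $\bL^m$ by Minkowski's inequality together with $\sum_k k^3\beta_{\max}^{\,k}<\infty$ and the uniform moment bound on $y_{it}$, and $y_{it}/\lambda_{it}$ has bounded conditional Poisson moments of every order. I expect verifying these uniform moment bounds to be the main obstacle: the $\beta$-derivatives carry the polynomially weighted series $\sum_k(k-1)(k-2)(k-3)\beta^{k-4}u_{i,t-k}$, and the products of three ratios must be controlled by Hölder's inequality using the $2p$ moments granted by Assumption \ref{assumption_y_ptngarch}(a). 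Once each summand is bounded in $\bL^1$ uniformly in $(i,t)$ and $\theta^*$, Markov's inequality yields the $\bigO_p(1)$ bound on the third derivatives, and combining the two pieces gives the asserted $\bigO_p(\xi)$.
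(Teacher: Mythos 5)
Your proposal follows essentially the same route as the paper's proof: the same triangle-inequality split into an initialization error (handled via the explicit geometric-decay differences \eqref{eq_lambda_diff}, \eqref{eq_lambda_dr1_diff}, \eqref{eq_lambda_dr2_diff} and $\bL^1$ bounds, giving a term that vanishes as $T\to\infty$) and a local smoothness error (handled by a Taylor/mean-value expansion to third derivatives, whose uniform moment bounds via Assumption \ref{assumption_y_ptngarch}(a) and \eqref{eq_lambda_dr3} yield the $\bigO_p(\xi)$ rate). The paper organizes the two pieces into the explicit terms $T_1$--$T_5$ and $B_1$--$B_5$, but the substance, including the reliance on $\lambda_{it}\ge\omega$ to control denominators and on conditional Poisson moments of $y_{it}/\lambda_{it}$, is the same.
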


\begin{proof}
For any $\theta_m, \theta_n \in \{\omega, \alpha^{(1)}, \alpha^{(2)}, \xi, \beta\}$, we have
\begin{equation}\label{dr2}
\begin{aligned}
     &\frac{\partial^2 L_{NT}(\theta)}{\partial\theta_m\partial\theta_n}\\
     = &\frac{1}{NT}\sum_{i=1}^N\sum_{t=1}^T\left[\left(\frac{y_{it}}{\lambda_{it}(\theta)} - 1\right)\frac{\partial^2 \lambda_{it}(\theta)}{\partial\theta_m\partial\theta_n} - \frac{y_{it}}{\lambda_{it}^2(\theta)}\frac{\partial\lambda_{it}(\theta)}{\partial\theta_m}\frac{\partial\lambda_{it}(\theta)}{\partial\theta_n}\right],
\end{aligned}
\end{equation} and
\begin{equation}\label{dr2_tilde}
\begin{aligned}
     &\frac{\partial^2 \Tilde{L}_{NT}(\theta)}{\partial\theta_m\partial\theta_n}\\
     = &\frac{1}{NT}\sum_{i=1}^N\sum_{t=1}^T\left[\left(\frac{y_{it}}{\Tilde{\lambda}_{it}(\theta)} - 1\right)\frac{\partial^2 \Tilde{\lambda}_{it}(\theta)}{\partial\theta_m\partial\theta_n} - \frac{y_{it}}{\Tilde{\lambda}_{it}^2(\theta)}\frac{\partial\Tilde{\lambda}_{it}(\theta)}{\partial\theta_m}\frac{\partial\Tilde{\lambda}_{it}(\theta)}{\partial\theta_n}\right].
\end{aligned}
\end{equation} 
Note that
\begin{equation}\label{clm_dr2_app_eq1}
\begin{aligned}
    &\sup_{|\theta-\theta_0| < \xi}\left|\frac{\partial^2\Tilde{L}_{NT}(\theta)}{\partial\theta_m\partial\theta_n} - \frac{\partial^2 L_{NT}(\theta_0)}{\partial\theta_m\partial\theta_n}\right|\\
    \leq &\frac{1}{NT}\sum_{i=1}^N\sum_{t=1}^T\sup_{\theta\in\Theta}\left|\frac{\partial^2\Tilde{l}_{it}(\theta)}{\partial\theta_m\partial\theta_n} - \frac{\partial^2 l_{it}(\theta)}{\partial\theta_m\partial\theta_n}\right|\\
    &+ \frac{1}{NT}\sum_{i=1}^N\sum_{t=1}^T \sup_{|\theta-\theta_0| < \xi}\left|\frac{\partial^2 l_{it}(\theta)}{\partial\theta_m\partial\theta_n} - \frac{\partial^2 l_{it}(\theta_0)}{\partial\theta_m\partial\theta_n}\right|.
\end{aligned}
\end{equation} 
We will handle above two terms separately next. 

For the first term on the right-hand-side of \eqref{clm_dr2_app_eq1}, we see
\begin{align}
    &\norm{\frac{1}{NT}\sum_{i=1}^N\sum_{t=1}^T\sup_{\theta\in\Theta}\left|\frac{\partial^2\Tilde{l}_{it}(\theta)}{\partial\theta_m\partial\theta_n} - \frac{\partial^2 l_{it}(\theta)}{\partial\theta_m\partial\theta_n}\right|}_1 \notag\\
    \leq &\frac{1}{NT}\sum_{i=1}^N\sum_{t=1}^T\norm{y_{it}\sup_{\theta\in\Theta}\left(\frac{1}{\lambda_{it}} - \frac{1}{\Tilde{\lambda}_{it}}\right)\sup_{\theta\in\Theta}\frac{\partial^2\lambda_{it}}{\partial\theta_m\partial\theta_n}}_1 \notag\\
    &+ \frac{1}{NT}\sum_{i=1}^N\sum_{t=1}^T\norm{\sup_{\theta\in\Theta}\left(\frac{y_{it}}{\Tilde{\lambda}_{it}} - 1\right)\sup_{\theta\in\Theta}\left(\frac{\partial^2\lambda_{it}}{\partial\theta_m\partial\theta_n} - \frac{\partial^2\Tilde{\lambda}_{it}}{\partial\theta_m\partial\theta_n}\right)}_1 \notag\\
    &+ \frac{1}{NT}\sum_{i=1}^N\sum_{t=1}^T\norm{y_{it}\sup_{\theta\in\Theta}\left(\frac{\lambda^2_{it}}{\Tilde{\lambda}^2_{it}} - 1\right)\sup_{\theta\in\Theta}\frac{1}{\lambda_{it}^2}\frac{\partial\lambda_{it}}{\partial\theta_m}\frac{\partial\lambda_{it}}{\partial\theta_n}}_1 \label{clm_dr2_app_eq2}\\
    &+ \frac{1}{NT}\sum_{i=1}^N\sum_{t=1}^T\norm{\sup_{\theta\in\Theta}\frac{y_{it}}{\Tilde{\lambda}^2_{it}}\left[\frac{\partial\Tilde{\lambda}_{it}}{\partial\theta_m}\left(\frac{\partial\Tilde{\lambda}_{it}}{\partial\theta_n} - \frac{\partial\lambda_{it}}{\partial\theta_n}\right)\right]}_1 \notag\\
    &+ \frac{1}{NT}\sum_{i=1}^N\sum_{t=1}^T\norm{\sup_{\theta\in\Theta}\frac{y_{it}}{\Tilde{\lambda}^2_{it}}\left[\frac{\partial\lambda_{it}}{\partial\theta_n}\left(\frac{\partial\Tilde{\lambda}_{it}}{\partial\theta_m} - \frac{\partial\lambda_{it}}{\partial\theta_m}\right)\right]}_1 \notag\\
    := &T_1 + T_2 + T_3 + T_4 + T_5 \notag
\end{align} 
Analogous to the proof of \eqref{clm_dr1_app_eq1}, we can show that $T_1 \to 0$ and $T_3 \to 0$ as $\min\{N,T\} \to \infty$ and $T/N \to \infty$. Using \eqref{eq_lambda_dr2_diff}, we can also verify that $$T_2 \leq \frac{1}{NT}\sum_{i=1}^N\sum_{t=1}^T[C_1 t(t-1)\rho^{t-2} + C_2 t\rho^{t-1} + C_3 \rho^t]\norm{\sup_{\theta\in\Theta}\left(\frac{y_{it}}{\Tilde{\lambda}_{it}} - 1\right)}_1.$$ 
Then $T_2 \rightarrow 0$ as well. Similarly, using \eqref{eq_lambda_dr1_diff}, we obtain that $T_4 \rightarrow 0$ and $T_5 \rightarrow 0$. 

For the second term in the right-hand-side of \eqref{clm_dr2_app_eq1}, we notice that
a Taylor expansion of $\frac{\partial^2 l_{it}(\theta)}{\partial\theta_m\partial\theta_n}$ at $\theta_0$ yields that 
\begin{align}
    &\frac{1}{NT}\sum_{i=1}^N\sum_{t=1}^T \sup_{|\theta-\theta_0| < \xi}\left|\frac{\partial^2 l_{it}(\theta)}{\partial\theta_m\partial\theta_n} - \frac{\partial^2 l_{it}(\theta_0)}{\partial\theta_m\partial\theta_n}\right| \notag\\
    \leq &\frac{1}{NT}\sum_{i=1}^N\sum_{t=1}^T \xi\sup_{|\theta-\theta_0| < \xi}\left|\frac{\partial^3 l_{it}(\theta)}{\partial\theta_m\partial\theta_n\partial\theta_l}\right| \notag\\
    \leq &\frac{1}{NT}\sum_{i=1}^N\sum_{t=1}^T \xi\sup_{|\theta-\theta_0| < \xi}\abs{\frac{y_{it}}{\lambda_{it}} - 1}\abs{\frac{\partial^3 \lambda_{it}}{\partial\theta_m\partial\theta_n\partial\theta_l}} \notag\\
    &+ \frac{1}{NT}\sum_{i=1}^N\sum_{t=1}^T \xi\sup_{|\theta-\theta_0| < \xi}\abs{\frac{2y_{it}}{\lambda_{it}^3}}\abs{\frac{\partial\lambda_{it}}{\partial\theta_l}\frac{\partial\lambda_{it}}{\partial\theta_m}\frac{\partial\lambda_{it}}{\partial\theta_n}} \label{clm_dr2_app_eq3}\\
    &+ \frac{1}{NT}\sum_{i=1}^N\sum_{t=1}^T \xi\sup_{|\theta-\theta_0| < \xi}\abs{\frac{y_{it}}{\lambda_{it}^2}}\abs{\frac{\partial\lambda_{it}}{\partial\theta_l}\frac{\partial^2\lambda_{it}}{\partial\theta_m\partial\theta_n}} \notag\\
    &+ \frac{1}{NT}\sum_{i=1}^N\sum_{t=1}^T \xi\sup_{|\theta-\theta_0| < \xi}\abs{\frac{y_{it}}{\lambda_{it}^2}}\abs{\frac{\partial\lambda_{it}}{\partial\theta_n}\frac{\partial^2\lambda_{it}}{\partial\theta_l\partial\theta_m}} \notag\\
    &+ \frac{1}{NT}\sum_{i=1}^N\sum_{t=1}^T \xi\sup_{|\theta-\theta_0| < \xi}\abs{\frac{y_{it}}{\lambda_{it}^2}}\abs{\frac{\partial\lambda_{it}}{\partial\theta_m}\frac{\partial^2\lambda_{it}}{\partial\theta_n\partial\theta_l}} \notag\\
    &:= B_1 + B_2 + B_3 + B_4 + B_5 \notag
\end{align} for any $\theta_l, \theta_m, \theta_n \in \{\omega, \alpha^{(1)}, \alpha^{(2)}, \xi, \beta\}$. According to Assumption \ref{assumption_y_ptngarch}(a) and \eqref{eq_lambda_dr3},  we can verify that $$\expct\abs{\frac{y_{it}}{\lambda_{it}} - 1}\abs{\frac{\partial^3 \lambda_{it}}{\partial\theta_m\partial\theta_n\partial\theta_l}} < \infty,$$ hence $B_1 = \bigO_P(\xi)$. The other terms can be verified following similar lines, in light of \eqref{eq_lambda_dr1} and \eqref{eq_lambda_dr2}.

Taking \eqref{clm_dr2_app_eq2} and \eqref{clm_dr2_app_eq3} back to \eqref{clm_dr2_app_eq1}, we complete the proof.
\end{proof}

\begin{claim}\label{clm_score_clt}
    \begin{itemize}
        \item [(a)] $\sup_{NT\geq 1}\sup_{(i,t)\in D_{NT}}\norm{\frac{\partial l_{it}(\theta_0)}{\partial\theta}}_{2p} < \infty$ for some $p > 1$;
        \item [(b)] For each $\vv\in\bR^5$ such that $|\vv| = 1$, $\left\{\vv'\frac{\partial l_{it}(\theta_0)}{\partial\theta}: (i,t)\in D_{NT}, NT \geq 1\right\}$ are $\eta$-weakly dependent, with dependence coefficient $\Bar{\eta}_1(s) \leq Cs^{-\mu_1}$ where $\mu_1 > 4\vee\frac{2p-1}{p-1}$.
    \end{itemize}
\end{claim}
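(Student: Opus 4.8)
The plan is to analyse the explicit form of the score,
\[
\frac{\partial l_{it}(\theta_0)}{\partial\theta} = \left(\frac{y_{it}}{\lambda_{it}(\theta_0)} - 1\right)\frac{\partial\lambda_{it}(\theta_0)}{\partial\theta},
\]
together with the series representations \eqref{eq_lambda_dr1} of its components, and to treat the two parts separately.

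For part (a) the decisive observation is that $\lambda_{it}(\theta_0)$ and $\frac{\partial\lambda_{it}(\theta_0)}{\partial\theta}$ are $\cH_{t-1}$-measurable, whereas $y_{it}\mid\cH_{t-1}\sim\mathrm{Poisson}(\lambda_{it}(\theta_0))$. Conditioning first on $\cH_{t-1}$ decouples the two factors:
\[
\expct\abs{\frac{\partial l_{it}(\theta_0)}{\partial\theta}}^{2p} = \expct\left[\abs{\frac{\partial\lambda_{it}(\theta_0)}{\partial\theta}}^{2p}\,\expct\left(\abs{\frac{y_{it}}{\lambda_{it}(\theta_0)} - 1}^{2p}\,\Big|\,\cH_{t-1}\right)\right].
\]
Since the central $2p$-th moment of $\mathrm{Poisson}(\lambda)$ is a polynomial in $\lambda$ of degree $p$, the inner expectation equals $\lambda_{it}^{-2p}\expct(|y_{it}-\lambda_{it}|^{2p}\mid\cH_{t-1})\le C_p(\lambda_{it}^{1-2p}+\lambda_{it}^{-p})$, which is bounded by a deterministic constant because $\lambda_{it}(\theta_0)\ge\omega_0\ge\delta$. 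It then suffices to bound $\norm{\partial_\theta\lambda_{it}(\theta_0)}_{2p}$ uniformly, which follows from Minkowski's inequality applied to each series in \eqref{eq_lambda_dr1}: the geometric weights (for $\omega,\alpha^{(1)},\alpha^{(2)},\xi$) and the poly-geometric weights (for $\beta$, summing to $(1-\beta_0)^{-2}$) are summable, $\norm{y_{i,t-k}}_{2p}$ is uniformly bounded by Assumption \ref{assumption_y_ptngarch}(a), and $\sum_j w_{ij}=1$ controls the network term. Note this conditional decoupling is essential: the naive H\"older bound on the product $y_{it}\,\partial_\theta\lambda_{it}$ of two $\bL^{2p}$ arrays would only yield $\bL^p$.

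For part (b) I would mirror the coupling argument of Claim \ref{clm_wd_lik_func_ptngarch}. Introducing the perturbed process $\{y_{j\tau}^{(h)}\}$ that differs from $\{y_{j\tau}\}$ exactly at distance $h$, together with $\partial_\theta\lambda_{it}^{(h)}$ and $\partial_\theta l_{it}^{(h)}$, and using that $y_{it}^{(h)}=y_{it}$ for $h\ge1$, I would write
\[
\vv'\frac{\partial l_{it}}{\partial\theta} - \vv'\frac{\partial l_{it}^{(h)}}{\partial\theta} = y_{it}\left(\frac{1}{\lambda_{it}} - \frac{1}{\lambda_{it}^{(h)}}\right)\vv'\frac{\partial\lambda_{it}}{\partial\theta} + \left(\frac{y_{it}}{\lambda_{it}^{(h)}} - 1\right)\vv'\left(\frac{\partial\lambda_{it}}{\partial\theta} - \frac{\partial\lambda_{it}^{(h)}}{\partial\theta}\right).
\]
Using $\lambda_{it},\lambda_{it}^{(h)}\ge\delta$, this reduces matters to bounding $|\lambda_{it}-\lambda_{it}^{(h)}|$, already controlled by \eqref{clm_wd_lik_func_ptngarch_eq1}, and $|\partial_\theta\lambda_{it}-\partial_\theta\lambda_{it}^{(h)}|$. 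For the latter the crucial point is that $y\mapsto y1_{\{y\ge r\}}$ and $y\mapsto y1_{\{y<r\}}$ are Lipschitz on $\bN$ (they are the map $\psi$ of \eqref{clm_l1contraction_y_eq_psi} with $(\alpha^{(1)},\alpha^{(2)})=(1,0)$ and $(0,1)$, with Lipschitz constants $\max\{1,r\}$ and $\max\{1,r-1\}$), so each derivative series inherits the same poly-geometric-in-lag, power-law-in-network-distance structure, with the extra factors $(k-1)$ from the $\beta$-derivative absorbed into the geometric tail. Thus $\{\vv'\partial_\theta l_{it}(\theta_0)\}$ satisfies condition (2.7) of \cite{pan2024ltrf} with $B_{(i,t),NT}(h)\le Ch^{-b}$ and $l=1$, and Proposition 2 and Example 2.1 there deliver $\bar\eta_1(s)\le Cs^{-\mu_1}$ with $\mu_1=\frac{2p-2}{2p-1}\mu_y-2$, exactly as for $l_{it}$. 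The two conditions $\mu_y>\frac{6p-3}{p-1}$ and $\mu_y>\frac{(4p-3)(2p-1)}{2(p-1)^2}$ of Theorem \ref{theorem_AN_ptngarch} translate precisely into $\mu_1>4$ and $\mu_1>\frac{2p-1}{p-1}$.

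The main obstacle lies in the bookkeeping of part (b): one must check that the product of the weakly dependent innovation factor $\frac{y_{it}}{\lambda_{it}}-1$ and the weakly dependent derivative $\partial_\theta\lambda_{it}$ still obeys the coupling inequality at the claimed rate, which requires the product and moment-tradeoff results of \cite{pan2024ltrf} and careful tracking of how the H\"older step degrades $\mu_y$ into $\frac{2p-2}{2p-1}\mu_y$; this is exactly where the strengthened moment-decay hypothesis of Theorem \ref{theorem_AN_ptngarch} is consumed. Part (a), by contrast, becomes essentially immediate once the conditional Poisson decoupling is in place.
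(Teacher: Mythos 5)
Your proposal is correct, and for part (b) it follows essentially the same route as the paper: the same coupling construction $\{y_{j\tau}^{(h)}\}$, an algebraically equivalent two-term decomposition of $\vv'\partial_\theta l_{it}-\vv'\partial_\theta l_{it}^{(h)}$ (the paper's version \eqref{clm_score_clt_eq1} pairs $|\lambda_{it}-\lambda_{it}^{(h)}|$ with $|\partial_\beta\lambda_{it}^{(h)}|$ rather than with $\partial_\theta\lambda_{it}$, which is immaterial), the same bound \eqref{clm_score_clt_eq2} on the derivative differences, verification of condition (2.7) of \cite{pan2024ltrf} with $B_{(i,t),NT}(h)\leq Ch^{-b}$ and $l=1$, and the same exponent arithmetic showing $\frac{2p-2}{2p-1}\mu_y-2>4\vee\frac{2p-1}{p-1}$; your translation of the two hypotheses on $\mu_y$ into $\mu_1>4$ and $\mu_1>\frac{2p-1}{p-1}$ is exactly what the paper checks. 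Where you genuinely diverge is part (a): the paper disposes of it in one sentence (``from Assumption \ref{assumption_y_ptngarch} and Lemma \ref{lma_berkes2003_2}, it is easy to see that (a) holds''), whereas you supply the actual mechanism — conditioning on $\cH_{t-1}$, using that $\partial_\theta\lambda_{it}(\theta_0)$ is $\cH_{t-1}$-measurable and that the centred conditional Poisson moment $\expct\left(\abs{y_{it}/\lambda_{it}-1}^{2p}\,\middle|\,\cH_{t-1}\right)$ is uniformly bounded because $\lambda_{it}\geq\omega_0>0$, then Minkowski on the series \eqref{eq_lambda_dr1}. Your observation that the naive H\"older bound on the product of two $\bL^{2p}$ arrays only yields $\bL^{p}$ is correct and important: Assumption \ref{assumption_y_ptngarch}(a) alone does not obviously deliver the stated $\bL^{2p}$ bound for the score without this conditional decoupling, so your write-up makes rigorous a step the paper leaves implicit. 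In short: (b) matches the paper; (a) is a more careful argument that buys a genuine justification of the claimed exponent rather than an appeal to ``easy to see.''
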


\begin{proof}

Recall that, from \eqref{dr1}, $$\frac{\partial l_{it}(\theta_0)}{\partial\theta} = \frac{y_{it}}{\lambda_{it}(\theta_0)}\frac{\partial\lambda_{it}(\theta_0)}{\partial\theta} - \frac{\partial\lambda_{it}(\theta_0)}{\partial\theta}.$$ 
From Assumption \ref{assumption_y_ptngarch} and Lemma \ref{lma_berkes2003_2}, it is easy to see that (a) holds. 

Now we verify (b). In the proof of Claim \ref{clm_wd_lik_func_ptngarch}, for each $(i,t)\in D_{NT}$ and $h = 1, 2, ...$, we defined $\{y_{j\tau}^{(h)}: (j,\tau)\in D_{NT}, NT\geq 1\}$ such that $y_{j\tau}^{(h)} \neq y_{j\tau}$ if and only if $\rho((i,t), (j,\tau)) = h$. At first, we verify that $\frac{\partial l_{it}(\theta_0)}{\partial\beta}$ satisfies condition (2.7) in \cite{pan2024ltrf}. Note that
\begin{equation}\label{clm_score_clt_eq1}
\begin{aligned}
    &\left|\frac{\partial l_{it}(\theta_0)}{\partial\beta} - \frac{\partial l_{it}^{(h)}(\theta_0)}{\partial\beta}\right|\\
    \leq &y_{it}\left|\frac{1}{\lambda_{it}(\theta_0)}\frac{\partial\lambda_{it}(\theta_0)}{\partial\beta} - \frac{1}{\lambda_{it}^{(h)}(\theta_0)}\frac{\partial\lambda_{it}^{(h)}(\theta_0)}{\partial\beta}\right| + \left|\frac{\partial\lambda_{it}(\theta_0)}{\partial\beta} - \frac{\partial\lambda_{it}^{(h)}(\theta_0)}{\partial\beta}\right|\\
    \leq &\left|\frac{y_{it}}{\omega_0} + 1\right|\left|\frac{\partial\lambda_{it}(\theta_0)}{\partial\beta} - \frac{\partial\lambda_{it}^{(h)}(\theta_0)}{\partial\beta}\right| + \frac{y_{it}}{\omega_0^2}\left|\frac{\partial\lambda_{it}^{(h)}(\theta_0)}{\partial\beta}\right|\left|\lambda_{it}(\theta_0) - \lambda_{it}^{(h)}(\theta_0)\right|
\end{aligned}
\end{equation} 
and
$$\frac{\partial\lambda_{it}(\theta_0)}{\partial\beta} = \sum_{k=2}^{\infty}(k-1)\beta_0^{k-2}u_{i,t-k}(\theta_0),$$ where $$u_{i,t-k}(\theta_0) = \omega_0 + \alpha^{(1)}_0 y_{i,t-k}1_{\{y_{i,t-k}\geq r_0\}} + \alpha^{(2)}_0 y_{i,t-k}1_{\{y_{i,t-k}< r_0\}} + \xi_0\sum_{j=1}^N w_{ij}y_{j,t-k}.$$ Following analogous arguments in \eqref{clm_wd_lik_func_ptngarch_eq1}, we obtain that
\begin{equation}\label{clm_score_clt_eq2}
\begin{aligned}
    \left|\frac{\partial\lambda_{it}(\theta_0)}{\partial\beta} - \frac{\partial\lambda_{it}^{(h)}(\theta_0)}{\partial\beta}\right| \leq &\alpha^*_0(h-1)\beta_0^{h-2}|y_{i,t-h} - y_{i,t-h}^{(h)}|\\
    &+ \xi_0(h-1)\beta_0^{h-2}\sum_{1\leq|i-j|\leq h}|y_{j,t-h} - y_{j,t-h}^{(h)}|\\
    &+ Ch^{-b}\sum_{k=2}^h|y_{i\pm h,t-k} - y_{i\pm h,t-k}^{(h)}|.
\end{aligned}
\end{equation}
Combining \eqref{clm_wd_lik_func_ptngarch_eq1}, \eqref{clm_score_clt_eq1} and \eqref{clm_score_clt_eq2}, we can verify that $\frac{\partial l_{it}(\theta_0)}{\partial\beta}$ satisfies condition (2.7) in \cite{pan2024ltrf} with $B_{(i,t),NT}(h) \leq Ch^{-b}$ and $l = 1$. Partial derivatives of $l_{it}(\theta_0)$ with respect to other parameters in $\theta_0$ follows similarly. Therefore $\vv'\frac{\partial l_{it}(\theta_0)}{\partial\theta}$ satisfies condition (2.7) in \cite{pan2024ltrf} with $B_{(i,t),NT}(h) \leq Ch^{-b}$ and $l = 1$ for each $\vv\in\bR^5$.

According to Proposition 2 and Example 2.1 in \cite{pan2024ltrf}, the array of random fields $\{\vv'\frac{\partial l_{it}(\theta_0)}{\partial\theta}: (i,t)\in D_{NT}, NT\geq 1\}$ is $\eta$-weakly dependent with coefficient $\Bar{\eta}_1(s) \leq Cs^{-\frac{2p-2}{2p-1}\mu_y+2}$. But $\frac{2p-2}{2p-1}\mu_y-2 > 4\vee\frac{2p-1}{p-1}$ because $\mu_y > \frac{6p-3}{p-1}\vee\frac{(4p-3)(2p-1)}{2(p-1)^2}$. So (b) is verified.
\end{proof}

\begin{claim}\label{clm_hessian_lln}
    \begin{itemize}
        \item [(a)] $\sup_{NT\geq 1}\sup_{(i,t)\in D_{NT}}\norm{\frac{\partial^2 l_{it}(\theta_0)}{\partial\theta\partial\theta'}}_p < \infty$ for some $p > 1$;
        \item [(b)] With respect to all $\theta_m, \theta_n \in \{\omega, \alpha^{(1)}, \alpha^{(2)}, \xi, \beta\}$, $\left\{\frac{\partial^2 l_{it}(\theta_0)}{\partial\theta_m\partial\theta_n}: (i,t)\in D_{NT}, NT \geq 1\right\}$ are $\eta$-weakly dependent, with dependence coefficient $\Bar{\eta}_2(s) \leq Cs^{-\mu_2}$ where $\mu_2 > 2$.
    \end{itemize}
\end{claim}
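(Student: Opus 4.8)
The plan is to handle both parts with the two tools used throughout the appendix: the explicit series representations of $\lambda_{it}(\theta_0)$ and its derivatives in \eqref{eq_lambda_dr1}--\eqref{eq_lambda_dr3}, together with the uniform moment bound of Assumption \ref{assumption_y_ptngarch}(a) and the lower bound $\lambda_{it}(\theta_0)\ge\omega_0>0$. Part (a) is a moment computation. Starting from the explicit Hessian \eqref{dr2},
$$\frac{\partial^2 l_{it}(\theta_0)}{\partial\theta_m\partial\theta_n} = \left(\frac{y_{it}}{\lambda_{it}(\theta_0)} - 1\right)\frac{\partial^2\lambda_{it}(\theta_0)}{\partial\theta_m\partial\theta_n} - \frac{y_{it}}{\lambda_{it}^2(\theta_0)}\frac{\partial\lambda_{it}(\theta_0)}{\partial\theta_m}\frac{\partial\lambda_{it}(\theta_0)}{\partial\theta_n},$$
I would bound $1/\lambda_{it}(\theta_0)\le 1/\omega_0$ and $1/\lambda_{it}^2(\theta_0)\le 1/\omega_0^2$, and then control the two resulting terms by the $\bL^p$-Minkowski and H\"older inequalities. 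For each derivative factor I would push the $\bL^q$-norm inside the infinite sums in \eqref{eq_lambda_dr1}--\eqref{eq_lambda_dr2} via Minkowski; every such sum has the form $\sum_k k^j\beta_0^{k}\norm{y_{i,t-k}}_q$ with $j\in\{0,1,2\}$, which converges because $0\le\beta_0<1$ gives $\sum_k k^j\beta_0^k<\infty$ and $\sup_{(i,t)}\norm{y_{it}}_q<\infty$ for $q\le 2p$. The moment-hungriest term is the product $y_{it}\,(\partial\lambda_{it}/\partial\theta_m)(\partial\lambda_{it}/\partial\theta_n)$, a product of three factors each controlled only in $\bL^{2p}$; by the generalized H\"older inequality it lies in $\bL^{2p/3}$, and since only boundedness in some $\bL^{p'}$ with $p'>1$ is required, this yields part (a) for the relevant moment orders.

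For part (b) I would reuse the coupling device from Claims \ref{clm_wd_lik_func_ptngarch} and \ref{clm_score_clt}: for each $(i,t)$ and $h\ge 1$ form the perturbed field $\{y^{(h)}_{j\tau}\}$ (equal to $\{y_{j\tau}\}$ except at distance exactly $h$), the induced $\lambda^{(h)}_{it}$, $\partial\lambda^{(h)}_{it}/\partial\theta$ and $\partial^2\lambda^{(h)}_{it}/\partial\theta_m\partial\theta_n$, and set $\partial^2 l^{(h)}_{it}/\partial\theta_m\partial\theta_n$ accordingly. The goal is to verify that $\frac{\partial^2 l_{it}(\theta_0)}{\partial\theta_m\partial\theta_n}$ satisfies condition (2.7) of \cite{pan2024ltrf} with $B_{(i,t),NT}(h)\le Ch^{-b}$ and $l=1$, exactly as in the earlier claims, and then to invoke Proposition 2 and Example 2.1 there to obtain $\Bar{\eta}_2(s)\le Cs^{-\frac{2p-2}{2p-1}\mu_y+2}$. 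Because the strengthened hypothesis forces $\mu_y>\frac{4p-2}{p-1}$, one gets $\frac{2p-2}{2p-1}\mu_y-2>2$, i.e. $\mu_2>2$, as required; note this is a milder decay than the $\mu_1>4\vee\frac{2p-1}{p-1}$ needed for the score in Claim \ref{clm_score_clt}(b), since only an LLN (not a CLT) is downstream.

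The key new estimate, and the main obstacle, is the coupling bound for the second derivative. I would add and subtract terms in \eqref{dr2} to write $\abs{\frac{\partial^2 l_{it}}{\partial\theta_m\partial\theta_n}-\frac{\partial^2 l^{(h)}_{it}}{\partial\theta_m\partial\theta_n}}$ as a sum of pieces, each isolating one of the differences $\lambda_{it}-\lambda^{(h)}_{it}$, $\partial\lambda_{it}/\partial\theta-\partial\lambda^{(h)}_{it}/\partial\theta$ and $\partial^2\lambda_{it}/\partial\theta_m\partial\theta_n-\partial^2\lambda^{(h)}_{it}/\partial\theta_m\partial\theta_n$, with the remaining factors left in $\bL^{2p}$. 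The first two differences are already controlled by \eqref{clm_wd_lik_func_ptngarch_eq1} and \eqref{clm_score_clt_eq2}; what remains is the analogous bound for the second-derivative difference, obtained by repeating the telescoping/Lipschitz argument of \eqref{clm_wd_lik_func_ptngarch_eq1} on the series \eqref{eq_lambda_dr2}, where the extra polynomial weights $(k-1)$ and $(k-1)(k-2)$ are harmless since $\sum_k k^j\beta_0^{k-2}$-type tails still inherit the $h^{-b}$ decay and the contraction \eqref{clm_l1contraction_y_eq_psi} keeps $\alpha^*$ in front of the $y$-differences. Bounding these products of a single difference times the remaining factors in $\bL^1$ through H\"older, using the uniform $2p$-moments, is the delicate bookkeeping; once the structure $B_{(i,t),NT}(h)\le Ch^{-b}$ is confirmed, the conclusion follows immediately from \cite{pan2024ltrf}.
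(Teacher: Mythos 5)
Your part (b) is essentially the paper's own proof, step for step: the same coupled field $\{y^{(h)}_{j\tau}\}$, the same decomposition of $\abs{\partial^2 l_{it}(\theta_0)/\partial\theta_m\partial\theta_n - \partial^2 l^{(h)}_{it}(\theta_0)/\partial\theta_m\partial\theta_n}$ into pieces isolating the three coupling differences (of $\lambda_{it}$, of its gradient, of its Hessian), the reuse of \eqref{clm_wd_lik_func_ptngarch_eq1} and \eqref{clm_score_clt_eq2}, the new bound for the Hessian series in which the polynomial weights $(k-1)$, $(k-1)(k-2)$ are absorbed by the geometric factor (this is exactly the paper's \eqref{clm_hessian_lln_eq2}), and the same appeal to Proposition 2 and Example 2.1 of \cite{pan2024ltrf} with $B_{(i,t),NT}(h)\le Ch^{-b}$, $l=1$, concluding $\mu_2 = \frac{2p-2}{2p-1}\mu_y-2>2$. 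Your observation that only an LLN-grade decay is needed here is also consistent with how the paper uses the claim.

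Part (a), however, contains a genuine gap. For the three-factor term $\frac{y_{it}}{\lambda_{it}^2(\theta_0)}\frac{\partial\lambda_{it}(\theta_0)}{\partial\theta_m}\frac{\partial\lambda_{it}(\theta_0)}{\partial\theta_n}$ you apply generalized H\"older to three factors, each only in $\bL^{2p}$, which lands you in $\bL^{2p/3}$; this exceeds $1$ only if $p>3/2$, whereas Assumption \ref{assumption_y_ptngarch}(a) guarantees merely \emph{some} $p>1$. For $p\in(1,3/2]$ your argument produces no exponent above one, so the closing remark that ``only boundedness in some $\bL^{p'}$ with $p'>1$ is required'' does not close the case. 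The repair is to exploit the conditional Poisson structure instead of unconditional H\"older: $\lambda_{it}(\theta_0)$ and its derivatives are $\cH_{t-1}$-measurable, and conditionally $y_{it}\sim Poisson(\lambda_{it}(\theta_0))$ with $\lambda_{it}(\theta_0)\ge\omega_0>0$, so $\expct\left(y_{it}^q\mid\cH_{t-1}\right)\le C_q\lambda_{it}^q(\theta_0)$ for any $q\ge 1$. Conditioning first and then using two-factor H\"older gives
$$\expct\abs{\frac{y_{it}}{\lambda_{it}^2(\theta_0)}\frac{\partial\lambda_{it}(\theta_0)}{\partial\theta_m}\frac{\partial\lambda_{it}(\theta_0)}{\partial\theta_n}}^q \le \frac{C_q}{\omega_0^{q}}\norm{\frac{\partial\lambda_{it}(\theta_0)}{\partial\theta_m}}_{2q}^{q}\norm{\frac{\partial\lambda_{it}(\theta_0)}{\partial\theta_n}}_{2q}^{q},$$
which is finite uniformly in $(i,t)$ for every $q\le p$ by Minkowski applied to the series \eqref{eq_lambda_dr1} together with Assumption \ref{assumption_y_ptngarch}(a); the term $\left(\frac{y_{it}}{\lambda_{it}(\theta_0)}-1\right)\frac{\partial^2\lambda_{it}(\theta_0)}{\partial\theta_m\partial\theta_n}$ is even easier (two factors, each in $\bL^{2p}$, hence the product is in $\bL^{p}$). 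This recovers the full $\bL^p$ bound of the claim for exactly the $p$ of Assumption \ref{assumption_y_ptngarch}(a), with no extraneous restriction $p>3/2$.
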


\begin{proof}

Recall that, from \eqref{dr2}, $$\frac{\partial^2 l_{it}(\theta_0)}{\partial\theta_m\partial\theta_n} = \left(\frac{y_{it}}{\lambda_{it}(\theta_0)} - 1\right)\frac{\partial^2 \lambda_{it}(\theta_0)}{\partial\theta_m\partial\theta_n} - \frac{y_{it}}{\lambda_{it}^2(\theta_0)}\frac{\partial\lambda_{it}(\theta_0)}{\partial\theta_m}\frac{\partial\lambda_{it}(\theta_0)}{\partial\theta_n}.$$ Then Claim \ref{clm_hessian_lln}(a) could be directly obtained from Assumption \ref{assumption_y_ptngarch}(a).

Following the same idea as in  previous proofs, for each $(i,t)\in D_{NT}$ and $h = 1, 2, ...$,  we define $\{y_{j\tau}^{(h)}: (j,\tau)\in D_{NT}, NT\geq 1\}$ such that $y_{j\tau}^{(h)} \neq y_{j\tau}$ if and only if $\rho((i,t), (j,\tau)) = h$. To prove (b), we verify that $\frac{\partial^2 l_{it}(\theta_0)}{\partial\theta_m\partial\theta_n}$ satisfies condition (2.7) in \cite{pan2024ltrf}. Firstly we have
\begin{align}
    &\abs{\frac{\partial^2 l_{it}(\theta_0)}{\partial\theta_m\partial\theta_n} - \frac{\partial^2 l_{it}^{(h)}(\theta_0)}{\partial\theta_m\partial\theta_n}} \notag\\
    \leq &\abs{\frac{y_{it}}{\lambda_{it}(\theta_0)} + 1}\abs{\frac{\partial^2 \lambda_{it}(\theta_0)}{\partial\theta_m\partial\theta_n} - \frac{\partial^2 \lambda_{it}^{(h)}(\theta_0)}{\partial\theta_m\partial\theta_n}} + y_{it}\abs{\frac{\partial^2 \lambda_{it}^{(h)}(\theta_0)}{\partial\theta_m\partial\theta_n}}\abs{\frac{1}{\lambda_{it}(\theta_0)} - \frac{1}{\lambda_{it}^{(h)}(\theta_0)}} \notag\\
    &+ \frac{y_{it}}{\lambda_{it}^2(\theta_0)}\abs{\frac{\partial\lambda_{it}(\theta_0)}{\partial\theta_m}\frac{\partial\lambda_{it}(\theta_0)}{\partial\theta_n} - \frac{\partial\lambda_{it}^{(h)}(\theta_0)}{\partial\theta_m}\frac{\partial\lambda_{it}^{(h)}(\theta_0)}{\partial\theta_n}} \notag\\
    &+ \abs{\frac{\partial\lambda_{it}^{(h)}(\theta_0)}{\partial\theta_m}\frac{\partial\lambda_{it}^{(h)}(\theta_0)}{\partial\theta_n}}\abs{\frac{y_{it}}{\lambda_{it}^2(\theta_0)} - \frac{y_{it}}{(\lambda_{it}^{(h)}(\theta_0))^2}} \label{clm_hessian_lln_eq1}\\
    \leq &\abs{\frac{y_{it}}{\lambda_{it}(\theta_0)} + 1}\abs{\frac{\partial^2 \lambda_{it}(\theta_0)}{\partial\theta_m\partial\theta_n} - \frac{\partial^2 \lambda_{it}^{(h)}(\theta_0)}{\partial\theta_m\partial\theta_n}} \notag\\
    &+ \frac{y_{it}}{\lambda_{it}(\theta_0)\lambda_{it}^{(h)}(\theta_0)}\abs{\frac{\partial^2 \lambda_{it}^{(h)}(\theta_0)}{\partial\theta_m\partial\theta_n}}\abs{\lambda_{it}(\theta_0) - \lambda_{it}^{(h)}(\theta_0)} \notag\\
    &+ \frac{y_{it}}{\lambda_{it}^2(\theta_0)}\abs{\frac{\partial\lambda_{it}(\theta_0)}{\partial\theta_m}}\abs{\frac{\partial\lambda_{it}(\theta_0)}{\partial\theta_n} - \frac{\partial\lambda_{it}^{(h)}(\theta_0)}{\partial\theta_n}} \notag\\
    &+ \frac{y_{it}}{\lambda_{it}^2(\theta_0)}\abs{\frac{\partial\lambda_{it}^{(h)}(\theta_0)}{\partial\theta_n}}\abs{\frac{\partial\lambda_{it}(\theta_0)}{\partial\theta_m} - \frac{\partial\lambda_{it}^{(h)}(\theta_0)}{\partial\theta_m}} \notag\\
    &+ \frac{y_{it}}{\lambda_{it}(\theta_0)\lambda_{it}^{(h)}(\theta_0)}\abs{\frac{\partial\lambda_{it}^{(h)}(\theta_0)}{\partial\theta_m}\frac{\partial\lambda_{it}^{(h)}(\theta_0)}{\partial\theta_n}}\abs{\frac{1}{\lambda_{it}(\theta_0)} + \frac{1}{\lambda_{it}^{(h)}(\theta_0)}}\abs{\lambda_{it}(\theta_0) - \lambda_{it}^{(h)}(\theta_0)} \notag\\
    \leq &\bracketS{\frac{y_{it}}{\omega_0} + 1}\abs{\frac{\partial^2 \lambda_{it}(\theta_0)}{\partial\theta_m\partial\theta_n} - \frac{\partial^2 \lambda_{it}^{(h)}(\theta_0)}{\partial\theta_m\partial\theta_n}} + C_1\frac{y_{it}}{\omega_0^2}\abs{\lambda_{it}(\theta_0) - \lambda_{it}^{(h)}(\theta_0)} \notag\\
    &+ C_2\frac{y_{it}}{\omega_0^2}\abs{\frac{\partial\lambda_{it}(\theta_0)}{\partial\theta_n} - \frac{\partial\lambda_{it}^{(h)}(\theta_0)}{\partial\theta_n}} + C_3\frac{y_{it}}{\omega_0^2}\abs{\frac{\partial\lambda_{it}(\theta_0)}{\partial\theta_m} - \frac{\partial\lambda_{it}^{(h)}(\theta_0)}{\partial\theta_m}} \notag\\
    &+ C_4\frac{y_{it}}{\omega_0^3}\abs{\lambda_{it}(\theta_0) - \lambda_{it}^{(h)}(\theta_0)}. \notag
\end{align} Taking the second order derivative with respect to $\xi$ and $\beta$ as an example, analogous to \eqref{clm_wd_lik_func_ptngarch_eq1} and \eqref{clm_score_clt_eq2}, we get
\begin{align}
    &\abs{\frac{\partial^2 \lambda_{it}(\theta_0)}{\partial\xi\partial\beta} - \frac{\partial^2 \lambda_{it}^{(h)}(\theta_0)}{\partial\xi\partial\beta}} \notag\\
    \leq &\sum_{k=2}^{\infty}(k-1)\beta^{k-2}\abs{\sum_{j=1}^Nw_{ij}y_{j,t-k} - \sum_{j=1}^Nw_{ij}y_{j,t-k}^{(h)}} \label{clm_hessian_lln_eq2}\\
    \leq & (h-1)\beta_0^{h-2}\sum_{|i-j|\leq h}|y_{j,t-h} - y_{j,t-h}^{(h)}| \notag\\
    &+ Ch^{-b}\sum_{k=2}^h|y_{i\pm h,t-k} - y_{i\pm h,t-k}^{(h)}|. \notag
\end{align} Proofs regarding second order derivatives with respect to other parameters follow similar arguments and then are omitted here. Substituting \eqref{clm_wd_lik_func_ptngarch_eq1}, \eqref{clm_score_clt_eq2} and \eqref{clm_hessian_lln_eq2} to \eqref{clm_hessian_lln_eq1}, we have that $\frac{\partial^2 l_{it}(\theta_0)}{\partial\theta_m\partial\theta_n}$ satisfies condition (2.7) in \cite{pan2024ltrf} with $B_{(i,t),NT}(h) \leq Ch^{-b}$ and $l = 1$.

According to Proposition 2 and Example 2.1 in \cite{pan2024ltrf}, the array of random fields $\{\frac{\partial^2 l_{it}(\theta_0)}{\partial\theta_m\partial\theta_n}: (i,t)\in D_{NT}, NT\geq 1\}$ is $\eta$-weakly dependent with coefficient $\Bar{\eta}_1(s) \leq Cs^{-\frac{2p-2}{2p-1}\mu_y+2}$, and $\frac{2p-2}{2p-1}\mu_y-2 > 2$.

\end{proof}

By the Taylor expansion, for some $\theta^*$ between $\hat{\theta}_{NT}$ and $\theta_0$ we have $$\frac{\partial\Tilde{L}_{NT}(\hat{\theta}_{NT})}{\partial\theta} = \frac{\partial\Tilde{L}_{NT}(\theta_0)}{\partial\theta} + \frac{\partial^2\Tilde{L}_{NT}(\theta^*)}{\partial\theta\partial\theta'}(\hat{\theta}_{NT} - \theta_0).$$ Since $\frac{\partial\Tilde{L}_{NT}(\hat{\theta}_{NT})}{\partial\theta} = 0$, we have
\begin{align}
    &\sqrt{NT}\Sigma_{NT}^{1/2}(\hat{\theta}_{NT} - \theta_0) \notag\\
    = &-\Sigma_{NT}^{1/2}\left(\frac{\partial^2\Tilde{L}_{NT}(\theta^*)}{\partial\theta\partial\theta'}\right)^{-1}\sqrt{NT}\frac{\partial\Tilde{L}_{NT}(\theta_0)}{\partial\theta} \label{proposition_AN_ptngarch_eq1}\\
    = &-\Sigma_{NT}^{1/2}\left(\Sigma_{NT}^{-1/2}\frac{\partial^2 L_{NT}(\theta_0)}{\partial\theta\partial\theta'}\right)^{-1}\Sigma_{NT}^{-1/2}\sqrt{NT}\frac{\partial L_{NT}(\theta_0)}{\partial\theta} + o_p(1) \notag
\end{align} according to Claims \ref{clm_dr1_app} and \ref{clm_dr2_app}.

Note that $y_{it} = N_{it}(\lambda_{it}(\theta_0))$ is Poisson distributed with mean $\lambda_{it}(\theta_0)$ conditioning on historical information $\cH_{t-1}$, with $\{N_{it}: (i,t)\in D_{NT}, NT\geq 1\}$ being IID Poisson point processes with intensity 1. Therefore we have
\begin{align*}
    &\expct\left(\frac{\partial^2 L_{NT}(\theta_0)}{\partial\theta\partial\theta'}\right)\\
    = &\frac{1}{NT}\sum_{i=1}^N\sum_{t=1}^T\expct\left\{\expct\left[\left(\frac{N_{it}(\lambda_{it}(\theta_0))}{\lambda_{it}(\theta_0)} - 1\right)\frac{\partial^2 \lambda_{it}(\theta_0)}{\partial\theta\partial\theta'}|\cH_{t-1}\right]\right\}\\
    &- \frac{1}{NT}\sum_{i=1}^N\sum_{t=1}^T\expct\left\{\expct\left[\frac{N_{it}(\lambda_{it}(\theta_0))}{\lambda_{it}^2(\theta_0)}\frac{\partial\lambda_{it}(\theta_0)}{\partial\theta}\frac{\partial\lambda_{it}(\theta_0)}{\partial\theta'}|\cH_{t-1}\right]\right\}\\
    = &-\frac{1}{NT}\sum_{i=1}^N\sum_{t=1}^T\expct\left[\frac{1}{\lambda_{it}(\theta_0)}\frac{\partial\lambda_{it}(\theta_0)}{\partial\theta}\frac{\partial\lambda_{it}(\theta_0)}{\partial\theta'}\right]\\
    = &-\Sigma_{NT}.
\end{align*} 
By Claim \ref{clm_hessian_lln}, we apply Theorem 1 in \cite{pan2024ltrf} and obtain that
\begin{equation}\label{proposition_AN_ptngarch_eq_lln}
    \frac{\partial^2 L_{NT}(\theta_0)}{\partial\theta\partial\theta'} + \Sigma_{NT} \convp 0.
\end{equation} According to condition \eqref{condition_sigma_ptngarch} we can further prove that
\begin{equation}\label{proposition_AN_ptngarch_eq2}
    -\left(\Sigma_{NT}^{-1/2}\frac{\partial^2 L_{NT}(\theta_0)}{\partial\theta\partial\theta'}\right)\Sigma_{NT}^{-1/2} = \left(\Sigma_{NT}^{1/2} + o_p(1)\right)\Sigma_{NT}^{-1/2} = I_5+o_p(1).
\end{equation}
When $\tau \neq t$ or $j \neq i$, we have $$\expct\left[\left(\frac{N_{it}(\lambda_{it}(\theta_0))}{\lambda_{it}(\theta_0)} - 1\right)\left(\frac{N_{j\tau}(\lambda_{j\tau}(\theta_0))}{\lambda_{j\tau}(\theta_0)} - 1\right)\frac{\partial\lambda_{it}(\theta_0)}{\partial\theta}\frac{\partial\lambda_{j\tau}(\theta_0)}{\partial\theta'}|\cH_{t-1}\right] = 0$$ assuming $\tau < t$. Then we can verify that
\begin{align*}
    &\var\left(\sqrt{NT}\frac{\partial L_{NT}(\theta_0)}{\partial\theta}\right)\\
    = &\frac{1}{NT}\expct\left\{\left[\sum_{i=1}^N\sum_{t=1}^T\left(\frac{N_{it}(\lambda_{it}(\theta_0))}{\lambda_{it}(\theta_0)} - 1\right)\frac{\partial \lambda_{it}(\theta_0)}{\partial\theta}\right]\right.\\
    &\times\left.\left[\sum_{i=1}^N\sum_{t=1}^T\left(\frac{N_{it}(\lambda_{it}(\theta_0))}{\lambda_{it}(\theta_0)} - 1\right)\frac{\partial \lambda_{it}(\theta_0)}{\partial\theta'}\right]\right\}\\
    = &\frac{1}{NT}\sum_{i=1}^N\sum_{t=1}^T\expct\left[\left(\frac{N_{it}(\lambda_{it}(\theta_0))}{\lambda_{it}(\theta_0)} - 1\right)^2\frac{\partial\lambda_{it}(\theta_0)}{\partial\theta}\frac{\partial\lambda_{it}(\theta_0)}{\partial\theta'}\right]\\
    = &\Sigma_{NT}.
\end{align*}
For each $\vv\in\bR^5$, $\var\bracketS{\sum_{(i,t)\in D_{NT}}\vv'\frac{\partial l_{it}(\theta_0)}{\partial\theta}} = (NT)\vv'\Sigma_{NT}\vv.$ By \eqref{condition_sigma_ptngarch} and the symmetry of $\Sigma_{NT}$, it is implied that 
$$\inf_{NT\geq 1}\vv'\Sigma_{NT}\vv  > 0.$$ Then, by Claim \ref{clm_score_clt} and Theorem 2 in \cite{pan2024ltrf}, we can prove that $$[(NT)\vv'\Sigma_{NT}\vv]^{-1/2}\vv'(NT)\frac{\partial L_{NT}(\theta_0)}{\partial\theta} \convd N(0,1).$$ According to the Cram\'{e}r-Wold theorem, we have
\begin{equation}\label{proposition_AN_ptngarch_eq_clt}
    (\Sigma_{NT})^{-1/2}\sqrt{NT}\frac{\partial L_{NT}(\theta_0)}{\partial\theta} \convd N(0, I_5).
\end{equation}
Combining \eqref{proposition_AN_ptngarch_eq1}, \eqref{proposition_AN_ptngarch_eq2} and \eqref{proposition_AN_ptngarch_eq_clt}, we complete the proof of Theorem \ref{theorem_AN_ptngarch}.

\subsection{Proof of Theorem \ref{proposition_wald_test_ptngarch}}

Recalling from \eqref{wald_statistic_ptngarch}, the Wald statistic is
$$W_{NT} := (\Gamma\hat{\theta}_{NT} - \eta)'\left\{\frac{\Gamma}{NT}\Sighat_{NT}^{-1}\Gamma'\right\}^{-1}(\Gamma\hat{\theta}_{NT} - \eta),$$ where $$\Sighat_{NT} := \frac{1}{NT}\sum_{(i,t)\in D_{NT}}\left[\frac{1}{\Tilde{\lambda}_{it}(\hat{\theta}_{NT})}\frac{\partial\Tilde{\lambda}_{it}(\hat{\theta}_{NT})}{\partial\theta}\frac{\partial\Tilde{\lambda}_{it}(\hat{\theta}_{NT})}{\partial\theta'}\right].$$ It is sufficient to show that
\begin{equation}\label{proposition_wald_test_ptngarch_eq1}
    \frac{1}{NT}\sum_{(i,t)\in D_{NT}}\left[\frac{1}{\Tilde{\lambda}_{it}(\hat{\theta}_{NT})}\frac{\partial\Tilde{\lambda}_{it}(\hat{\theta}_{NT})}{\partial\theta}\frac{\partial\Tilde{\lambda}_{it}(\hat{\theta}_{NT})}{\partial\theta'}\right]  -\Sigma_{NT} \convp 0.
\end{equation}
Note that
\begin{align*}
    &\frac{1}{NT}\sum_{(i,t)\in D_{NT}}\left[\frac{1}{\Tilde{\lambda}_{it}(\hat{\theta}_{NT})}\frac{\partial\Tilde{\lambda}_{it}(\hat{\theta}_{NT})}{\partial\theta}\frac{\partial\Tilde{\lambda}_{it}(\hat{\theta}_{NT})}{\partial\theta'}\right] - \Sigma_{NT}\\
    = &\frac{1}{NT}\sum_{(i,t)\in D_{NT}}\left[\frac{1}{\Tilde{\lambda}_{it}(\hat{\theta}_{NT})}\frac{\partial\Tilde{\lambda}_{it}(\hat{\theta}_{NT})}{\partial\theta}\frac{\partial\Tilde{\lambda}_{it}(\hat{\theta}_{NT})}{\partial\theta'} - \expct\left(\frac{1}{\lambda_{it}(\theta_0)}\frac{\partial\lambda_{it}(\theta_0)}{\partial\theta}\frac{\partial\lambda_{it}(\theta_0)}{\partial\theta'}\right)\right]\\
    = &\frac{1}{NT}\sum_{(i,t)\in D_{NT}}\left[\frac{1}{\Tilde{\lambda}_{it}(\hat{\theta}_{NT})}\frac{\partial\Tilde{\lambda}_{it}(\hat{\theta}_{NT})}{\partial\theta}\frac{\partial\Tilde{\lambda}_{it}(\hat{\theta}_{NT})}{\partial\theta'} - \frac{1}{\lambda_{it}(\theta_0)}\frac{\partial\lambda_{it}(\theta_0)}{\partial\theta}\frac{\partial\lambda_{it}(\theta_0)}{\partial\theta'}\right]\\
    &+\frac{1}{NT}\sum_{(i,t)\in D_{NT}}\left[\frac{1}{\lambda_{it}(\theta_0)}\frac{\partial\lambda_{it}(\theta_0)}{\partial\theta}\frac{\partial\lambda_{it}(\theta_0)}{\partial\theta'} - \expct\left(\frac{1}{\lambda_{it}(\theta_0)}\frac{\partial\lambda_{it}(\theta_0)}{\partial\theta}\frac{\partial\lambda_{it}(\theta_0)}{\partial\theta'}\right)\right]\\
    := &T_1 + T_2.
\end{align*} Similar to the proof of Claim \ref{clm_hessian_lln}, we can verify that the LLN Theorem 1 in \cite{pan2024ltrf} applies to $\left\{\frac{1}{\lambda_{it}(\theta_0)}\frac{\partial\lambda_{it}(\theta_0)}{\partial\theta}\frac{\partial\lambda_{it}(\theta_0)}{\partial\theta'}: (i,t)\in D_{NT}, NT\geq 1\right\}$ and therefore $T_2 \convp 0$.
$T_1$ can be further decomposed as follows
\begin{align*}
    &\frac{1}{NT}\sum_{(i,t)\in D_{NT}}\left[\frac{1}{\Tilde{\lambda}_{it}(\hat{\theta}_{NT})}\frac{\partial\Tilde{\lambda}_{it}(\hat{\theta}_{NT})}{\partial\theta}\frac{\partial\Tilde{\lambda}_{it}(\hat{\theta}_{NT})}{\partial\theta'} - \frac{1}{\lambda_{it}(\theta_0)}\frac{\partial\lambda_{it}(\theta_0)}{\partial\theta}\frac{\partial\lambda_{it}(\theta_0)}{\partial\theta'}\right]\\
    = &\frac{1}{NT}\sum_{(i,t)\in D_{NT}}\left[\frac{1}{\Tilde{\lambda}_{it}(\hat{\theta}_{NT})}\frac{\partial\Tilde{\lambda}_{it}(\hat{\theta}_{NT})}{\partial\theta}\frac{\partial\Tilde{\lambda}_{it}(\hat{\theta}_{NT})}{\partial\theta'} - \frac{1}{\lambda_{it}(\hat{\theta}_{NT})}\frac{\partial\lambda_{it}(\hat{\theta}_{NT})}{\partial\theta}\frac{\partial\lambda_{it}(\hat{\theta}_{NT})}{\partial\theta'}\right]\\
    &+ \frac{1}{NT}\sum_{(i,t)\in D_{NT}}\left[\frac{1}{\lambda_{it}(\hat{\theta}_{NT})}\frac{\partial\lambda_{it}(\hat{\theta}_{NT})}{\partial\theta}\frac{\partial\lambda_{it}(\hat{\theta}_{NT})}{\partial\theta'} - \frac{1}{\lambda_{it}(\theta_0)}\frac{\partial\lambda_{it}(\theta_0)}{\partial\theta}\frac{\partial\lambda_{it}(\theta_0)}{\partial\theta'}\right]\\
    := &S_1 + S_2.
\end{align*} But, $S_2 \convp 0$ since $\hat{\theta}_{NT} \convp \theta_0$. The proof of $S_1 \convp 0$ is similar to the proof of \eqref{clm_dr2_app_eq2}, therefore omitted here. So the proof of Theorem \ref{proposition_wald_test_ptngarch} is completed.

\end{appendix}

\newpage
\bibliographystyle{apalike}
\bibliography{references}

\begin{thebibliography}{}

\bibitem[Al-Osh and Alzaid, 1987]{al1987}
Al-Osh, M.~A. and Alzaid, A.~A. (1987).
\newblock First-order integer-valued autoregressive ({INAR} (1)) process.
\newblock {\em Journal of Time Series Analysis}, 8(3):261--275.

\bibitem[Al-Osh and Alzaid, 1988]{al1988}
Al-Osh, M.~A. and Alzaid, A.~A. (1988).
\newblock Integer-valued moving average ({INMA}) process.
\newblock {\em Statistical Papers}, 29:281--300.

\bibitem[Armillotta and Fokianos, 2024]{armillotta2024}
Armillotta, M. and Fokianos, K. (2024).
\newblock Count network autoregression.
\newblock {\em Journal of Time Series Analysis}, 45(4):584--612.

\bibitem[Berkes et~al., 2003]{berkes2003}
Berkes, I., Horv\'ath, L., and Kokoszka, P. (2003).
\newblock {GARCH} processes: structure and estimation.
\newblock {\em Bernoulli}, 9(2):201--227.

\bibitem[Clauset et~al., 2009]{clauset2009}
Clauset, A., Shalizi, C.~R., and Newman, M. E.~J. (2009).
\newblock Power-law distributions in empirical data.
\newblock {\em SIAM Review}, 51(4):661--703.

\bibitem[Cui et~al., 2020]{cui2020}
Cui, Y., Li, Q., and Zhu, F. (2020).
\newblock Flexible bivariate {Poisson} integer-valued {GARCH} model.
\newblock {\em Annals of the Institute of Statistical Mathematics}, 72(6):1449--1477.

\bibitem[Cui and Zhu, 2018]{cui2018}
Cui, Y. and Zhu, F. (2018).
\newblock A new bivariate integer-valued {GARCH} model allowing for negative cross-correlation.
\newblock {\em Test}, 27:428--452.

\bibitem[De~Wit et~al., 2013]{de2013}
De~Wit, E.~R., Englund, P., and Francke, M.~K. (2013).
\newblock Price and transaction volume in the {Dutch} housing market.
\newblock {\em Regional Science and Urban Economics}, 43(2):220--241.

\bibitem[Ferland et~al., 2006]{ferland2006}
Ferland, R., Latour, A., and Oraichi, D. (2006).
\newblock Integer-valued {GARCH} process.
\newblock {\em Journal of time series analysis}, 27(6):923--942.

\bibitem[Fokianos et~al., 2009]{fokianos2009}
Fokianos, K., Rahbek, A., and Tj{\o}stheim, D. (2009).
\newblock {Poisson} autoregression.
\newblock {\em Journal of the American Statistical Association}, 104(488):1430--1439.

\bibitem[Fokianos et~al., 2020]{fokianos2020}
Fokianos, K., St{\o}ve, B., Tj{\o}stheim, D., and Doukhan, P. (2020).
\newblock {Multivariate count autoregression}.
\newblock {\em Bernoulli}, 26(1):471 -- 499.

\bibitem[Heinen, 2003]{heinen2003}
Heinen, A. (2003).
\newblock Modelling time series count data: an autoregressive conditional {Poisson} model.
\newblock {\em Available at SSRN 1117187}.

\bibitem[Jiang et~al., 2022]{jiang2022}
Jiang, F., Zhao, Z., and Shao, X. (2022).
\newblock Modelling the covid-19 infection trajectory: A piecewise linear quantile trend model.
\newblock {\em Journal of Royal Statistical Society (Series B)}, 84(5):1589--1607.

\bibitem[Jones et~al., 1994]{jones1994}
Jones, C.~M., Kaul, G., and Lipson, M.~L. (1994).
\newblock Transactions, volume, and volatility.
\newblock {\em The Review of Financial Studies}, 7(4):631--651.

\bibitem[Lee et~al., 2023]{lee2023}
Lee, S., Kim, D., and Kim, B. (2023).
\newblock Modeling and inference for multivariate time series of counts based on the {INGARCH} scheme.
\newblock {\em Computational Statistics \& Data Analysis}, 177:107579.

\bibitem[Lee et~al., 2018]{lee2018}
Lee, Y., Lee, S., and Tj{\o}stheim, D. (2018).
\newblock Asymptotic normality and parameter change test for bivariate {Poisson INGARCH} models.
\newblock {\em Test}, 27:52--69.

\bibitem[McKenzie, 1985]{mckenzie1985}
McKenzie, E. (1985).
\newblock Some simple models for discrete variate time series.
\newblock {\em Journal of the American Water Resources Association}, 21(4):645--650.

\bibitem[McKenzie, 1988]{mckenzie1988}
McKenzie, E. (1988).
\newblock Some {ARMA} models for dependent sequences of {Poisson} counts.
\newblock {\em Advances in Applied Probability}, 20(4):822--835.

\bibitem[Nowicki and Snijders, 2001]{nowicki2001}
Nowicki, K. and Snijders, T. A.~B. (2001).
\newblock Estimation and prediction for stochastic block structures.
\newblock {\em Journal of the American Statistical Association}, 96(455):1077--1087.

\bibitem[Pan and Pan, 2024]{pan2024ltrf}
Pan, Y. and Pan, J. (2024).
\newblock Limit theorems for weakly dependent non-stationary random field arrays and asymptotic inference of dynamic spatio-temporal models.
\newblock Available at arXiv:2408.07429.

\bibitem[Pham, 2020]{pham2020}
Pham, H. (2020).
\newblock On estimating the number of deaths related to {Covid-19}.
\newblock {\em Mathematics}, 8(5):655.

\bibitem[Tao et~al., 2024]{tao2024}
Tao, Y., Li, D., and Niu, X. (2024).
\newblock Grouped network {Poisson} autoregressive model.
\newblock {\em Statistica Sinica}, 34:1603--1624.

\bibitem[Wang et~al., 2014]{wang2014}
Wang, C., Liu, H., Yao, J.-F., Davis, R.~A., and Li, W.~K. (2014).
\newblock Self-excited threshold {Poisson} autoregression.
\newblock {\em Journal of the American Statistical Association}, 109(506):777--787.

\bibitem[Xu et~al., 2012]{xu2012}
Xu, H.-Y., Xie, M., Goh, T.~N., and Fu, X. (2012).
\newblock A model for integer-valued time series with conditional overdispersion.
\newblock {\em Computational Statistics \& Data Analysis}, 56(12):4229--4242.

\bibitem[Zhou et~al., 2020]{zhou2020}
Zhou, J., Li, D., Pan, R., and Wang, H. (2020).
\newblock Network {GARCH} model.
\newblock {\em Statistica Sinica}, 30:1--18.

\bibitem[Zhu, 2010]{zhu2010}
Zhu, F. (2010).
\newblock A negative binomial integer-valued {GARCH} model.
\newblock {\em Journal of Time Series Analysis}, 32(1):54--67.

\bibitem[Zhu, 2012]{zhu2012}
Zhu, F. (2012).
\newblock Modeling overdispersed or underdispersed count data with generalized {Poisson} integer-valued {GARCH} models.
\newblock {\em Journal of Mathematical Analysis and Applications}, 389(1):58--71.

\end{thebibliography}

\end{document}